\newtheorem{proposition}{Proposition}
\newtheorem{corollary}{Corollary}
\newtheorem{lemma}{Lemma}
\theoremstyle{definition}
\newtheorem{definition}{Definition}
\newtheorem*{assumption*}{Assumption}
\newtheorem{example}{Example}
\theoremstyle{remark}
\newtheorem{remark}{Remark}
\newcommand{\N}{\mathbb{N}}
\newcommand{\R}{\mathbb{R}}
\newcommand{\1}{\mathbf{1}}
\renewcommand{\d}{{\rm d}}
\newcommand{\s}{\mathfrak{s}}
\newcommand{\n}{\mathfrak{n}}
\newcommand{\calB}{\mathcal{B}}
\newcommand{\calL}{\mathcal{L}}
\newcommand{\calN}{\mathcal{N}}
\newcommand{\calX}{\mathcal{X}}
\newcommand{\esup}{\mathop{\rm ess~sup}\limits}
\DeclareMathOperator{\E}{\mathbb{E}}
\DeclareMathOperator{\Var}{\mathbb{V}ar}
\DeclareMathOperator{\Sd}{\mathbb{S}d}
\DeclareMathOperator{\Cov}{\mathbb{C}ov}
\DeclareMathOperator{\Corr}{\mathbb{C}orr}
\renewcommand{\P}{\mathbb{P}}
\DeclareMathOperator{\Span}{span}
\DeclareMathOperator{\co}{co}
\newcommand{\ton}{\xrightarrow{\|\cdot\|}}
\newcommand{\tow}{\xrightarrow{\rm w}}
\begin{document}

\title{{\bf On the Pettis Integral Approach to Large Population Games\thanks{We thank seminar participants at Tokyo University of Science, East Asian Contract Theory Conference, and SUSTECH-SZU Micro Theory Workshop for valuable comments and discussion.}}}
\author{Masaki Miyashita\footnote{The University of Hong Kong, \tt{masaki11@hku.hk}} \and Takashi Ui\footnote{Kanagawa University and Hitotsubashi University; \tt{takashi.ui@r.hit-u.ac.jp}}}
\date{\today}
\maketitle

\begin{abstract}
The analysis of large population economies with incomplete information often entails the integration of a continuum of random variables.
We showcase the usefulness of the integral notion \`a la \cite{pettis1938} to study such models.
We present several results on Pettis integrals, including convenient sufficient conditions for Pettis integrability and Fubini-like exchangeability formulae, illustrated through a running example.
Building on these foundations, we conduct a unified analysis of Bayesian games with arbitrarily many heterogeneous agents.
We provide a sufficient condition on payoff structures, under which the equilibrium uniqueness is guaranteed across all signal structures.
Our condition is parsimonious, as it turns out necessary when strategic interactions are undirected.
We further identify the moment restrictions, imposed on the equilibrium action-state joint distribution, which have crucial implications for information designer's problem of persuading a population of strategically interacting agents.
To attain these results, we introduce and develop novel mathematical tools, built on the theory of integral kernels and reproducing kernel Hilbert spaces in functional analysis.
\end{abstract}
\thispagestyle{empty}

\newpage
\thispagestyle{empty}


\newpage
\hypersetup{linkcolor=Red, citecolor=Blue}
\setcounter{page}{1}

\section{Introduction}

\subsection{Aggregating Random Variables}

When analyzing economic models involving uncertainties associated with a large number of economic entities, we encounter the demand for aggregating a continuum of random variables.
For instance, in the beauty contest model introduced by \cite{morris2002shin}, each player best responds to an aggregated economic variable---an ``integral'' of actions taken by a continuum of opponents---by choosing an action equal to a convex combination of her best estimates regarding the aggregated action and an exogenous payoff-relevant state.
As opponents' actions are stochastic under incomplete information, the aggregated action takes the form of the integral of a ``stochastic process,'' which are indexed by the players' identities.
To fix an idea, let us consider the following symmetric linear-quadratic-Gaussian (LQG) game of \cite{bm2013}.\footnote{The LQG framework dates back to a seminal work by \citep{radner1962}.} 

\begin{example} \label{ex_bm}
Each agent $i$  in a closed interval $[0,1]$ chooses an action $a_i \in \R$ as a function of a private signal $x_i$ and a public signal $y$, which are correlated with a payoff-relevant random variable $\theta$, called the state.
We assume that these random variables are normally distributed as follows:
\begin{align*}
\theta \sim \calN \qty(\mu_\theta,\, \sigma^2_\theta),
\quad \epsilon_i \sim \calN \qty(0,\, \sigma^2_x),
\quad \epsilon \sim \calN \qty(0,\, \sigma^2_y),
\quad x_i = \theta + \epsilon_i,
\quad y = \theta + \epsilon,
\end{align*}
where $\epsilon_i$ and $\epsilon$ are independently distributed with respect to each other and to the state $\theta$.
Agent's payoff is determined by her own action $a_i$, the aggregated action $A \coloneqq \int_0^1 a_j \d j$, and the state $\theta$.
Specifically, we posit that in an equilibrium, agent $i$ sets her action $a_i$ equal to a linear combination of her best estimates regarding $A$ and $\theta$,
\begin{align} \label{br_bm}
a_i = r \E \qty[A \mid x_i,\, y] + s \E \qty[\theta \mid x_i,\, y] + k,
\end{align}
where $r,s,k \in \R$ are the parameters of the best response function.
The beauty contest is a special case that sets $r \in (0,1)$, $s = 1-r$, and $k = 0$.
\end{example}

In this example, the aggregated action $A$, given as the integration of a strategy profile, comes with some mathematical complications: As pointed out by \cite{judd1985}, a typical sample path of an i.i.d.\ process is terribly discontinuous, hampering the modeler to define the integral in a realized path-wise manner, which is one of the most natural approaches.
Since each individual action $a_j$ contains idiosyncratic randomness, coming from independent noise in private signals, defining $A$ faces this measurability issue.
Moreover, while one may expect a certain law of large numbers (LLN) to apply when aggregating a continuum of i.i.d.\ random variables, path-wise integration may lack this property: If we assume the idiosyncratic components disappear in the aggregate for all subintervals of a population, the stochastic process must essentially be constant, limiting the validity of the LLN to only trivial cases.\footnote{See Theorem 1 in \cite{uhlig1996}, as well as Proposition 2.1 in \cite{sun2006}.}

The source of these problems can be attributed to a way we interpret the integral of a stochastic process.
In this paper, instead, we advocate interpreting aggregation differently by appealing to the integral notion \`a la \cite{pettis1938} and demonstrates its usefulness in the analysis of large population games.\footnote{Some authors including \cite{al1995} and \cite{uhlig1996} also consider Pettis integral as a compelling remedy for the measurability problem, while they are mainly interested in applications to portfolio theory.}
The Pettis integral is defined for an abstract process that assigns to each input in a measurable space an output in a normed space.
When the process is valued in a Hilbert space, the core idea of Pettis can be described as: ``the inner product of the integral coincides with the integral of inner products.''
In a probabilistic context, this property implies: ``the expectation of the integral coincides with the integral of expectations,'' which bears similarity to Fubini's exchangeability.
In this regard, the Pettis integral is designed to retain the desirable property of usual integral notions, holding for a continuum of random variables.

Unlike other integral notions, Pettis integrability requires only a notably weak measurability condition, which can be met even when processes contain i.i.d.\ components.
In Proposition~\ref{prop_pettis}, we provide useful sufficient conditions of Pettis integrability, which can be stated solely in terms of the first and second moments of a stochastic process in the probabilistic context.
Moreover, we demonstrate the analytical tractability of Pettis integral by offering an extension of a Pettis integral version of the law of large numbers in \cite{uhlig1996} and a Fubini-like exchangeability formula that justifies the interchange of conditional expectation and integration.
In economic contexts, these results can aid, for example, in formalizing the derivation of a Bayesian Nash equilibrium or the moment restrictions of Bayes correlated equilibria in \cite{bm2013}.


\subsection{Large Population Games with Incomplete Information}

Leveraging the notion of Pettis integral, we offer a solid mathematical foundation for the equilibrium analysis of large population games with incomplete information.\footnote{While we mainly focus on large population games, our findings in Section~\ref{sec_eq} cover finite population games since the set of players is assumed as any finite measure space. In the case of finite populations, the Pettis integral of strategies simplifies to a linear combination of a finite number of random variables, avoiding any issues related to measurability.}
Our analysis is build on the hypothesis that each player's best-response strategy is linear in her best estimates regarding an \emph{idiosyncratic state} and a \emph{local aggregate}.
The former refers to an exogenous payoff-relevant random variable that may vary across agents, while the latter is defined as the weighted Pettis integral of other agents' strategies.
These weights can be heterogeneous and are specified by an exogenously given function, named a \emph{payoff structure}, which quantitatively delineates the connections between individuals within an economic or social network.
Our model bears similarity to a common setup in the network game literature but more general than it since we can accommodate arbitrarily many agents and uncertainty about the state.\footnote{For example, \cite{ballester2006} presents a benchmark model of network games with continuous actions and quadratic payoffs; see also Section 3.4 of the survey by \cite{jackson2015zenou}.
\cite{ozdaglar2023} extends the model to a continuum population by introducing a class of graphon games, modeling strategic interactions over a large scale network. Our payoff structure bears similarity to a graphon, a modeling device in \cite{ozdaglar2023}, but we do not restrict it to be non-negative, bounded, or undirected. Additionally, and more importantly, our model differs from theirs in that graphon games are complete information games without uncertain payoff states.}
In particular, incomplete information is a pivotal ingredient of the model, which yields novel theoretical implications regarding equilibrium properties, as well as enabling us to address information design problems.

In Proposition~\ref{prop_unique}, we address the issue of equilibrium uniqueness, a cornerstone in game theoretic analysis.
We establish conditions, dictating the uniqueness, in terms of the spectral properties of a payoff structure, which can be mathematically identified as an integral operator acting on real-valued functions.
We find that if the \emph{numerical range} of a payoff structure is bounded from above by $1$, equilibrium uniqueness prevails across ``all'' information environments, up to certain technical conditions pertaining to measurability and integrability.
Notably, the result does not rely on the normality of the state and signals, as opposed to the common assumption in LQG games.
Roughly speaking, the numerical range can be interpreted as measuring the magnitude and direction of strategic interactions.
Accordingly, the upper bounded condition can be interpreted as eliminating large-scale strategic complementarities, while the scale of strategic substitutes is not restricted as there is no lower bound condition.

Furthermore, a necessary condition is identified in the proposition:
If a payoff structure possesses an eigenvalue that is weakly greater than $1$, then there exists a distribution of agents' signals under which a continuum of equilibria emerge, provided that at least one equilibrium exists.
In other words, the violation of our necessary condition leads to either the non-existence or multiplicity of equilibria.
Notably, in the case of an undirected payoff structure, its maximal eigenvalue coincides with the supremum of its numerical range, rendering our sufficient condition necessary for ensuring equilibrium well-posedness across all information environments.

Using the properties of Pettis integrals, some salient features of an equilibrium can be revealed through the linear best-response formula.
Specifically, by using the Fubini property and LLN for Pettis integrals, we derive two functional equations that are necessarily satisfied by the first and second moments of any joint distribution over equilibrium actions and the state.
These equations extend the moment restrictions found in \cite{bm2013} to the present setting of heterogeneous agents, irrespective of whether signals and the state are Gaussian.
Conversely, by assuming a normal distribution for the state, we show that these moment restrictions, coupled with a statistical restriction inherently fulfilled by second moments, are sufficient for candidate functions to be inducible as the mean and covariance of some equilibrium action-state joint distribution.

These restrictions on feasible equilibrium moments carry significant implications for robust predictions of equilibrium outcomes that hold independently of the specification of agents' private information \citep{bm2013,bm2016}.
To explore these implications, we investigate an information design problem in the symmetric LQG setting under the assumption that the designer's preferences hinge only on the mean and covariance of an equilibrium action-state joint distribution.
Such an objective function may arise, for instance, when the designer aims at maximizing agents' welfare.

While agents have symmetric payoff functions and are treated symmetrically by the designer's objective function, they need not be symmetric in terms of private information.
Specifically, we impose no constraints on feasible signal structures, allowing the designer to transmit personalized signals to a continuum of different agents.
This renders the designer's optimization problem high-dimensional and intractable.
However, by leveraging previous results, we can narrow down the designer's choice set to the functions that satisfy the identified moment restrictions.

Simplifying the problem in this manner, we consider a particularly simple class of signal structures, which we name \emph{targeted disclosure}, highlighted by the following selective nature:
The designer fully discloses the state realization to all agents in a given targeted set, while the remaining agents are informed of nothing.
This type of information disclosure can be easily implemented, irrespective of the state distribution, and witnessed frequently in practice, for instance, in the form of exclusive campaign offers in an e-commerce marketplace.
Targeted disclosure can be parametrized solely by a one-dimensional variable representing the measure of the targeted agents.
Consequently, the optimal targeted disclosure can be characterized as a maximizer of a univariate function.
In Proposition~\ref{prop_tg}, we show that the optimal targeted disclosure is globally optimal among all feasible signal structures, despite numerous alternatives that the designer can employ.

As a notable strength shared by our main economic results, we do not a priori restrict candidate variables in consideration.
Specifically, the uniqueness in Proposition \ref{prop_unique} holds within an unrestricted class of strategy profiles, in contrast to several works in the LQG literature that confine attention to the class of strategies that are linear in signals.\footnote{\cite{morris2002shin} address the uniqueness issue in their beauty contest model on the basis of the contraction mapping theorem, but their arguments are incomplete as a relevant residual term is not yet shown to vanish asymptotically. As noted in Footnote 5 of \cite{ap2007}, such convergence can occur by assuming bounds on the action space, but this assumption is incompatible with the linearity of strategies and the unbounded support of signals. \cite{ap2007} derive a linear equilibrium in a model akin to Example \ref{ex_bm} based on the ``matching coefficient'' method, while the obtained equilibrium is not guaranteed to be unique, including non-linear strategies. \cite{ui2013} establish a stronger uniqueness result by using the theorem of \cite{radner1962}, which is applicable when agents are symmetric. In contrast, our proof technique applies to arbitrary asymmetric payoff and information structures.}
Also, the optimality of targeted disclosure in Proposition~\ref{prop_tg} is supported across all signal structures, including non-Gaussian ones.
This sort of strong theoretical robustness is attained with the aid of mathematical tools pertaining to \emph{integral kernels} and \emph{reproducing kernel Hilbert spaces} in functional analysis.
In Appendix~\ref{app_kernel}, we introduce and develop results relevant to these concepts, which would be of independent technical interest, holding potential usefulness in economic applications beyond the contexts of this paper.

The rest of the paper is organized as follows.
In Section~\ref{sec_pettis}, we offer the definition of Pettis integral and some useful results on it.
In Section~\ref{sec_eq}, we present our game-theoretic model and conduct equilibrium analysis.
Building upon this model, in Section~\ref{sec_info}, we analyze an application to information design problems.
In Section~\ref{sec_conc}, we conclude the paper by leaving some final remarks.
Mathematical details and proofs omitted from the main text can be found in the series of appendixes.


\section{Pettis Integral} \label{sec_pettis}

We begin this section by introducing the Pettis integral of general processes, which admit outputs in an abstract Hilbert space.
Subsequently, we discuss the Pettis integral of stochastic processes by specifying the output Hilbert space to comprise square-integrable random variables.

\subsection{Pettis Integral of Hilbert-valued Processes} \label{sec_gen}

An \emph{input space} is given as a finite measure space $(T,\Sigma,\nu)$, which is normalized as $\nu(T) = 1$.
For example, $T$ can be the closed interval $[0,1]$ with the Lebesgue measure.
Let $\Sigma_\nu$ be the collection of all $\nu$-measurable subsets of $T$, which is known to form a $\sigma$-algebra on $T$ such that $\Sigma \subseteq \Sigma_\nu$.\footnote{This is due to Carath\'eodory's extension theorem; see Theorem 10.23 of \cite{ab2006}.}
We say that a function $\phi:T \to \R$ is \emph{$\nu$-measurable} if it is a measurable function from $(T,\Sigma_\nu)$ to $(\R,\calB_\R)$, where $\calB_\R$ denotes the Borel $\sigma$-algebra.
Also, we say that $\psi:T^2 \to \R$ is \emph{jointly measurable} if it is a measurable function from $(T^2, \Sigma_\nu \otimes \Sigma_\nu)$ to $(\R, \calB_\R)$, where $\Sigma_\nu \otimes \Sigma_\nu$ is the product $\sigma$-algebra on $T^2$.

An \emph{output space} is given as a Hilbert space $X$, equipped with the inner product $\langle \cdot, \cdot \rangle_X$ and the induced norm $\|\cdot\|_X$.
The subscript ``$_X$'' is often omitted for simplicity.
For example, $X$ can be a space of random variables with finite second moments, as being specified so in Section~\ref{sec_st}.
That $X$ is Hilbertian is needed to prove Proposition \ref{prop_pettis}, while the notion of Pettis integral can be defined in a more general setting.\footnote{See Chapter 11.10 of \cite{ab2006}.}

A \emph{process} is meant by any function $f:T \to X$ that assigns an output $f(t) \in X$ to each input $t \in T$.
Our primary interest lies in the integration of a process $f$ with respect to $\nu$.
In stating the next definition, and throughout the paper, the integral of any real-valued function shall be understood in the sense of Lebesgue.
The integral range will be suppressed when performed over the entire space $T$.

\begin{definition}
A process $f:T \to X$ is \emph{weakly measurable} if the mapping $t \mapsto \langle x ,f(t) \rangle$ is $\nu$-measurable for every $x \in X$.
Moreover, $f$ is \emph{Pettis integrable} if it is weakly measurable, and if there exists $\bar{f} \in X$ such that
\begin{align} \label{def_pettis}
\langle x, \bar{f} \rangle_X = \int \left\langle y,f(t) \right\rangle_X \d \nu(t), \quad \forall x \in X.
\end{align}
In this case, $\bar{f}$ is called the \emph{Pettis integral} of $f$, which is written as $\text{w-}\int f(t) \d\nu(t)$, or simply, $\int f(t) \d\nu(t)$ when there is no risk of confusion. 
\end{definition}

Two standard references on weak measurability and Pettis integral are the textbooks, \cite{diestel1977uhl} and \cite{talagrand1984}.
Here, among several properties of weak measurability and Pettis integral, let us recall the following linearity for later reference:
If $f,g:T \to X$ are weakly measurable, then so is the linear combination $\alpha f + \beta g$ for any $\alpha,\beta \in \R$.
Moreover, if $f$ and $g$ are Pettis integrable, then so is $\alpha f + \beta g$, and the Pettis integral enjoys linearity,
$$\int (\alpha f + \beta g)(t) \d\nu(t) = \alpha \int f(t) \d\nu(t) + \beta \int g(t) \d\nu(t).$$

There are several known sufficient conditions for Pettis integrability.
For instance, \cite{huff} reports that a process $f$ is Pettis integrable if it is weakly measurable, and if the operator $y \mapsto \langle y, f(\cdot) \rangle$ acts weak-to-weak continuously from $X$ to $\calL_1(\nu)$, the Banach space of $\nu$-integrable real-valued functions.
These conditions may not be very tractable from applied standpoints since they involve arbitrary elements of the dual space, most of which are orthogonal to the process $f$ itself.
In light of this, we provide simpler sufficient conditions, which can be stated solely in terms of the moments of the process $f$, that does not invoke the abstract notion of duality.

\begin{proposition} \label{prop_pettis}
Consider the following conditions on a process $f:T \to X$.
\begin{enumerate}[\rm (P1).]
\item \label{cond_ip}
The mapping $t \mapsto \langle f(s),f(t) \rangle$ is $\nu$-measurable for every $s \in T$.
\item \label{cond_n}
The mapping $t \mapsto \|f(t)\|$ is $\nu$-measurable and $\int \|f(t)\| \d \nu (t) < \infty$.
\end{enumerate}
If (and only if) $f$ satisfies {\rm (P1)}, it is weakly measurable.
If, in addition, $f$ satisfies {\rm (P2)}, then it is Pettis integrable.
\end{proposition}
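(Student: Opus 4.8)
The plan is to treat the two claims separately, first characterizing weak measurability in terms of (P1) and then deducing Pettis integrability via the Riesz representation theorem.

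For the equivalence between (P1) and weak measurability, the ``only if'' direction is immediate: if $f$ is weakly measurable, then substituting $x = f(s)$ into the defining condition shows that $t \mapsto \langle f(s), f(t) \rangle$ is $\nu$-measurable for every $s$, which is precisely (P1). The substantive ``if'' direction is the converse, and here the difficulty is that (P1) grants measurability of $t \mapsto \langle x, f(t) \rangle$ only when $x$ lies in the range $f(T)$, whereas weak measurability demands it for every $x \in X$. My strategy is to propagate measurability from $f(T)$ to all of $X$ by an approximation argument. Let $X_0 \coloneqq \overline{\Span f(T)}$ be the closed linear span of the range. For arbitrary $x \in X$, decompose $x = x_0 + x_1$ orthogonally with $x_0 \in X_0$ and $x_1 \in X_0^{\perp}$; since $f(t) \in X_0$ for all $t$, we have $\langle x, f(t) \rangle = \langle x_0, f(t) \rangle$, so it suffices to treat $x \in X_0$.

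For $x$ a finite linear combination $\sum_i c_i f(s_i)$, bilinearity of the inner product together with (P1) yields measurability of $t \mapsto \langle x, f(t) \rangle$. For general $x \in X_0$, choose a sequence of such finite combinations $x_n \to x$ in norm. Cauchy--Schwarz gives $|\langle x_n, f(t) \rangle - \langle x, f(t) \rangle| \le \|x_n - x\| \, \|f(t)\|$, and since $\|f(t)\|$ is finite for each fixed $t$, we obtain pointwise convergence $\langle x_n, f(t) \rangle \to \langle x, f(t) \rangle$. Being a pointwise limit of $\nu$-measurable functions, $t \mapsto \langle x, f(t) \rangle$ is $\nu$-measurable, which establishes weak measurability.

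Finally, assuming both (P1) and (P2), I would construct the Pettis integral via Riesz representation. Define $L : X \to \R$ by $L(x) \coloneqq \int \langle x, f(t) \rangle \, \d\nu(t)$. Cauchy--Schwarz gives $|\langle x, f(t) \rangle| \le \|x\| \, \|f(t)\|$, so (P2) ensures the integrand is $\nu$-integrable and yields the bound $|L(x)| \le \|x\| \int \|f(t)\| \, \d\nu(t)$; hence $L$ is well-defined and bounded, while linearity follows from bilinearity of the inner product and linearity of the integral. Since $X$ is a Hilbert space, the Riesz representation theorem furnishes a unique $\bar{f} \in X$ with $L(x) = \langle x, \bar{f} \rangle$ for all $x \in X$, which is precisely the defining identity \eqref{def_pettis}. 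I expect the approximation argument in the ``if'' direction to be the main obstacle; once measurability is transferred from the range to the whole space, the integrability half reduces to a routine application of Cauchy--Schwarz and Riesz.
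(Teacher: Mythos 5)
Your proposal is correct. The weak-measurability half is essentially the paper's own argument: the same orthogonal decomposition onto $\overline{\Span}\{f(t):t\in T\}$, approximation of $x$ by finite linear combinations of values of $f$, and passage to a pointwise limit of $\nu$-measurable functions via Cauchy--Schwartz (the paper inserts a Ces\`aro-mean step that is logically redundant, since the functions $t\mapsto\langle x_m,f(t)\rangle$ already converge pointwise). Where you genuinely diverge is the integrability half: the paper verifies the hypothesis of a criterion of Huff---that $x\mapsto\langle x,f(\cdot)\rangle$ is weak-to-weak continuous from $X$ into $\calL_1(\nu)$, checked by dominated convergence using the norm-boundedness of weakly convergent sequences---whereas you define the bounded linear functional $L(x)=\int\langle x,f(t)\rangle\,\d\nu(t)$ directly and invoke the Riesz representation theorem. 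Your route is more elementary and self-contained: it avoids the external reference entirely and exploits the Hilbert-space hypothesis (which the paper already needs for the measurability half) to produce $\bar f$ explicitly as the Riesz representative, matching the defining identity \eqref{def_pettis} verbatim. The paper's route buys a connection to the general Banach-space literature, but for the Hilbert-valued setting of the proposition your argument is a clean shortcut with no gap.
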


\begin{remark}
By the symmetry of inner products, (P1) is equivalent to saying that the bivariate mapping $(s,t) \mapsto \langle f(s),f(t) \rangle$ is separately measurable in the standard sense; see Definition 4.47 of \cite{ab2006}.
\end{remark}

It is evident that weak measurability implies (P1), as we can substitute $f(s)$ for $x$ in Definition~\ref{def_pettis}.
Therefore, an important part of Proposition~\ref{prop_pettis} is the converse implication.
We establish this by slightly broadening the arguments outlined in \cite{al1995}, which shows the weak measurability of i.i.d.\ random variables by using orthogonal decomposition in a Hilbert space.\footnote{See Footnote 7 and Section A.1 of \cite{al1995}.}
In this regard, it is a needed prerequisite for the proposition that a process admits values in a Hilbert space, although this assumption is rather innocuous as long as we consider, for instance, a space of square-integrable random variables as in Section~\ref{sec_st} and beyond.

A process is said to be strongly measurable when it can be expressed as the norm-limit of a sequence of simple processes, each of which admits at most finitely many values.
For any strongly measurable process, (P2) is known to be necessary and sufficient for the process to be Bochner integrable, an integral notion stronger than Pettis; see Theorem 2 in p.\ 45 of \cite{diestel1977uhl}.
On the other hand, (P2) is sufficient but not necessary for a weakly measurable process to be Pettis integrable, as illustrated by Example~\ref{ex_nonsep} in Appendix \ref{app_pettis}.\footnote{This example is a variant of Birkhoff's example, showing that a continuum of mutually uncorrelated random variables are Pettis-integrated to zero no matter what values are taken for each individual variance; see Example 5 in p.\ 43 of \cite{diestel1977uhl}.}
This example also implies that a Pettis integrable process $f$ can fail to entail jointly measurable inner products $(s,t) \mapsto \langle f(s),f(t) \rangle$ or $\nu$-measurable norms $t \mapsto \|f(t)\|$.
While one can interpret these as indicating the wide-applicability of the integral notion, it would be useful to employ stronger conditions in applications.
Specifically, in our latter game-theoretic analysis, we will require primitives to meet the following strengthened conditions.

\begin{enumerate}[\rm (Q1).]
\item The mapping $(s,t) \mapsto \langle f(s),f(t) \rangle$ is jointly measurable.
\item The mapping $t \mapsto \|f(t)\|$ is $\nu$-measurable and $\int \|f(t)\|^2 \d \nu(t) < \infty$.
\end{enumerate}

Notice that  (Q1) implies (P1) and weak measurability since any jointly measurable function is separately measurable by Theorem 4.48 of \cite{ab2006}.
Additionally, we can show that (Q1) implies that $t \mapsto \|f(t)\|$ is $\nu$-measurable, which is exactly the first part of (Q2).\footnote{To see this, notice that $t \mapsto \|f(t)\|^2$ is written as the composition of two measurable mappings, $t \mapsto (t,t)$ and $(s,t) \mapsto \langle f(s),f(t) \rangle$, acting $T \to T^2$ and $T^2 \to \R$, respectively. Then, the desired conclusion follows from Lemma 4.22 of \cite{ab2006}.}


\subsection{Pettis Integral of Stochastic Processes} \label{sec_st}

In the remainder of Section~\ref{sec_pettis}, we specify each output $x \in X$ as a random variable with a finite second moment.
Specifically, let $(\Omega, \Pi, \P)$ be an arbitrary probability space, and let $X$ be a collection of random variables $x: \Omega \to \R$ such that $\E \qty|x|^2 = \int |x(\omega)|^2 \d \P(\omega) < \infty$.
The inner product between random variables $x,y \in X$ is defined as
\begin{align*}
\langle x,y \rangle_X \coloneqq \E \qty[xy] = \int_\Omega x(\omega) y(\omega) \d \P(\omega).
\end{align*}
For any $x,y \in X$, we denote the covariance by $\Cov \qty[x,y] = \E \qty[xy] - \E \qty[x]\E \qty[y]$, the variance by $\Var\qty[x] = \Cov\qty[x,x]$, the standard deviation by $\Sd\qty[x] = \Var\qty[x]^{1/2}$, and the correlation coefficient by $\Corr\qty[x,y] =\Cov\qty[x,y]/\Sd\qty[x]\Sd\qty[y]$.

In this probabilistic context, a \emph{(stochastic) process} is meant by a function $f:T \to X$ that assigns a random variable $f(t) \in X$ to each input $t \in T$.
As an immediate implication of Proposition \ref{prop_pettis}, a stochastic process is Pettis integrable if it has measurable covariance and integrable mean and standard deviation.

\begin{corollary} \label{cor_rv}
Let $f$ be a stochastic process such that $\Cov \qty[f(s),f(\cdot)]$ is $\nu$-measurable for every $s \in T$, and that $\E[f(\cdot)]$ and $\Sd \qty[f(\cdot)]$ are $\nu$-integrable.
Then, $f$ is Pettis integrable.
\end{corollary}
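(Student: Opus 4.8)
The plan is to deduce the corollary directly from Proposition~\ref{prop_pettis} by translating its abstract conditions (P1) and (P2) into the probabilistic language of means, covariances, and standard deviations. The crucial observation is that, in the space $X$ of square-integrable random variables specified in Section~\ref{sec_st}, the inner product and the norm decompose into moments: for any $s,t \in T$ we have $\langle f(s),f(t)\rangle = \E[f(s)f(t)] = \Cov[f(s),f(t)] + \E[f(s)]\E[f(t)]$, while $\|f(t)\|^2 = \E[f(t)^2] = \Sd[f(t)]^2 + \E[f(t)]^2$. Once these identities are in hand, verifying the two conditions reduces to combining $\nu$-measurable and $\nu$-integrable functions.

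First I would establish (P1). Fix $s \in T$. The map $t \mapsto \Cov[f(s),f(t)]$ is $\nu$-measurable by hypothesis. Since $\E[f(s)]$ is a fixed real constant and $t \mapsto \E[f(t)]$ is $\nu$-integrable, hence $\nu$-measurable, the product $t \mapsto \E[f(s)]\E[f(t)]$ is $\nu$-measurable. Adding these two $\nu$-measurable functions shows that $t \mapsto \langle f(s),f(t)\rangle$ is $\nu$-measurable, which is exactly (P1); by Proposition~\ref{prop_pettis}, $f$ is therefore weakly measurable.

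Next I would verify (P2). Because $\Sd[f(\cdot)]$ and $\E[f(\cdot)]$ are $\nu$-integrable, they are in particular $\nu$-measurable, and hence so are their squares, their sum, and---by continuity of the square root---the map $t \mapsto \|f(t)\| = (\Sd[f(t)]^2 + \E[f(t)]^2)^{1/2}$. For integrability I would invoke the elementary bound $(a^2+b^2)^{1/2} \le |a| + |b|$ to obtain $\|f(t)\| \le \Sd[f(t)] + |\E[f(t)]|$, whence $\int \|f(t)\| \, \d\nu(t) \le \int \Sd[f(t)]\,\d\nu(t) + \int |\E[f(t)]|\,\d\nu(t) < \infty$, the finiteness following from the assumed $\nu$-integrability of $\Sd[f(\cdot)]$ and $\E[f(\cdot)]$. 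This establishes (P2), so Proposition~\ref{prop_pettis} yields Pettis integrability of $f$.

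The argument is essentially bookkeeping, so I do not anticipate a genuine obstacle; the only point requiring care is keeping straight which variable is frozen. In (P1) the index $s$ is fixed, so $\E[f(s)]$ must be handled as a constant multiplier rather than as a function being integrated---conflating the two would spuriously demand \emph{joint} measurability of the covariance, which is not assumed here. Matching the weaker separate-measurability hypothesis to (P1), rather than to the stronger condition (Q1), is thus the single place where the statement must be read precisely.
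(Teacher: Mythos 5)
Your proof is correct and takes essentially the same route as the paper's: both reduce the corollary to Proposition~\ref{prop_pettis} via the identities $\langle f(s),f(t)\rangle = \Cov\qty[f(s),f(t)] + \E\qty[f(s)]\E\qty[f(t)]$ and $\|f(t)\|^2 = \Sd\qty[f(t)]^2 + \E\qty[f(t)]^2$. The only cosmetic difference is that the paper first centers the process, so that (P1) and (P2) hold verbatim for $f-\E\qty[f]$, and then restores the deterministic mean by linearity of the Pettis integral, whereas you verify (P1) and (P2) directly for $f$ by absorbing the mean terms; both executions are valid.
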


\setcounter{example}{0}
\begin{example}[Continued]
By referring to the result of \cite{ui2013}, Proposition 4 of \cite{bm2013} reports that an equilibrium uniquely exists if $r < 1$.
This equilibrium is symmetric across all agents and characterized as a linear function of signals such that $a_i = \alpha_0^* + \alpha_x^* x_i + \alpha_y^* y$,
where the coefficients $(\alpha_0^*,\alpha_x^*,\alpha_y^*)$ are determined by $(r,s,k,\mu_\theta,\sigma^2_\theta,\sigma^2_x,\sigma^2_y)$.
We can verify that this equilibrium strategy profile is Pettis integrable as follows:
Firstly, the mean is computed as $\E \qty[a_i] = \alpha_0^* + (\alpha_x^* + \alpha_y^*) \mu_\theta$, which is constant across agents, and thus, it is integrable.
Moreover, the covariance is computed as
\begin{align*}
\Cov \qty[a_i, a_j] = \qty(\alpha_x^* + \alpha_y^*)^2 \sigma^2_\theta + \alpha_y^{*2} \sigma_y^2 + \alpha_x^{*2} \sigma_x^2 \cdot \1_{\{i=j\}},
\end{align*}
where $\1_E$ denotes the indicator function of event $E$.
Notice that $\Cov \qty[a_i, a_j]$ is given as the step function that admits constant values on the on-diagonal and off-diagonal subsets of $[0,1]^2$, respectively.
Thus, it is measurable and $\Sd \qty[a_i]$ is integrable, from which Corollary~\ref{cor_rv} confirms that $\{a_i\}_{i \in [0,1]}$ is Pettis integrable.
\end{example}

The definition of Pettis integrals reminds us of a Fubini-like property that justifies interchanging the order of expectation over $\omega \in \Omega$ and integration over $t \in T$.
Indeed, by taking a non-zero constant random variable for $x$ in (\ref{def_pettis}), we confirm the following:
\begin{align} \label{fubini_un}
\E \qty[\int f(t) \d\nu(t)] = \int \E \qty[f(t)] \d\nu(t).
\end{align}

In relation to this formula, we provide two results pertaining to the interchange of probabilistic operations and integration.
The next result justifies the interchange of (co)variance and integration.

\begin{proposition} \label{prop_var}
For any Pettis integrable stochastic process $f$, it holds that
\begin{align} \label{fubini_cov}
\Cov \qty[x, \int f(t) \d\nu(t)] = \int \Cov\qty[x,f(t)] \d \nu(t), \quad \forall x \in X.
\end{align}
In addition, if $s \mapsto \int \Cov \qty[f(s),f(t)] \d \nu(t)$ defines a $\nu$-integrable function, then
\begin{align} \label{fubini_var}
\Var \qty[\int f(t) \d\nu(t)] = \int \int \Cov\qty[f(s),f(t)] \d\nu(s) \d\nu(t).
\end{align}
\end{proposition}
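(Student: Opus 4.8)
The plan is to derive both identities from the defining property (\ref{def_pettis}) of the Pettis integral together with the unconditional Fubini formula (\ref{fubini_un}), using that covariance is expressible through the inner product of $X$. Write $\bar f \coloneqq \int f(t)\d\nu(t) \in X$, and recall that $\Cov\qty[x,z] = \langle x, z\rangle - \E\qty[x]\E\qty[z]$ for $x,z \in X$, where $\E\qty[z] = \langle 1, z\rangle$ since the constant $1$ lies in $X$.

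First I would prove (\ref{fubini_cov}). Fix $x \in X$. Equation (\ref{def_pettis}) gives $\langle x, \bar f\rangle = \int \langle x, f(t)\rangle\d\nu(t)$, while the instance $x = 1$ of the same identity gives $\E\qty[\bar f] = \int \E\qty[f(t)]\d\nu(t)$, which is exactly (\ref{fubini_un}). Subtracting $\E\qty[x]\E\qty[\bar f]$ and distributing the constant $\E\qty[x]$ through the integral by linearity,
\begin{align*}
\Cov\qty[x, \bar f] = \int \qty( \langle x, f(t)\rangle - \E\qty[x]\E\qty[f(t)] ) \d\nu(t) = \int \Cov\qty[x, f(t)]\d\nu(t),
\end{align*}
which is (\ref{fubini_cov}); the integrand is $\nu$-measurable and $\nu$-integrable because both $t \mapsto \langle x, f(t)\rangle$ and $t \mapsto \E\qty[f(t)] = \langle 1, f(t)\rangle$ are so, by weak measurability and the two instances of (\ref{def_pettis}) just invoked.

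Next I would obtain (\ref{fubini_var}) by applying (\ref{fubini_cov}) twice. Taking $x = \bar f \in X$ in (\ref{fubini_cov}) gives $\Var\qty[\bar f] = \Cov\qty[\bar f, \bar f] = \int \Cov\qty[\bar f, f(t)]\d\nu(t)$. For fixed $t$, the symmetry of covariance together with a second use of (\ref{fubini_cov}), now with $x = f(t) \in X$, gives $\Cov\qty[\bar f, f(t)] = \Cov\qty[f(t), \bar f] = \int \Cov\qty[f(s), f(t)]\d\nu(s)$. Substituting yields
\begin{align*}
\Var\qty[\bar f] = \int \qty( \int \Cov\qty[f(s), f(t)]\d\nu(s) ) \d\nu(t),
\end{align*}
which is the claimed (\ref{fubini_var}).

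The only delicate point --- and precisely what the extra hypothesis supplies --- is the well-definedness of the outer integration in this iterated expression, rather than any genuine interchange of a double integral: the argument never invokes Fubini--Tonelli on $T^2$, only two successive scalar applications of (\ref{fubini_cov}). By symmetry of covariance the outer integrand $t \mapsto \int \Cov\qty[f(s), f(t)]\d\nu(s)$ is the very same function as $s \mapsto \int \Cov\qty[f(s), f(t)]\d\nu(t)$ appearing in the statement, so its assumed $\nu$-integrability (which in particular entails its $\nu$-measurability) is exactly what legitimizes the last step. I therefore anticipate no substantive obstacle beyond this integrability bookkeeping: the first identity is immediate from the definition, and the second is a formal consequence of applying it twice.
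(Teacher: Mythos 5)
Your proposal is correct and follows essentially the same route as the paper: \eqref{fubini_cov} is obtained by expanding the covariance via the defining identity \eqref{def_pettis} (applied to $x$ and to the constant $1$), and \eqref{fubini_var} follows by substituting $\bar f = \int f(s)\,\d\nu(s)$ for $x$ and applying \eqref{fubini_cov} a second time to the inner covariance. Your added remarks on where the integrability hypothesis enters and why no Fubini--Tonelli argument on $T^2$ is needed are accurate elaborations of the paper's terser presentation.
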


This proposition yields an important implication when $f(t)$ are pairwise uncorrelated and have a common mean across $t$.
In this case, the right-hand side of (\ref{fubini_var}) becomes $0$, suggesting that the Pettis integral coincides with the common mean in mean square; this is exactly the Pettis integral version of LLN, known as Theorem 3 of \cite{uhlig1996}.
However, as cautioned in pp.\ 551--552 of \cite{khan1999sun}, we should be careful when interpreting this result because its probabilistic content is quite different from the classical LLN, stating that a sample average of countably many i.i.d.\ random variables converges to the mean for almost every sample path.
This issue is elaborated in Section~\ref{sec_conc} by referring to their discussion in more detail.

\setcounter{example}{0}
\begin{example}[Continued]
As argued in \cite{ap2007}, two pivotal variables in the welfare analysis of LQG games are \emph{volatility} and \emph{dispersion}, defined respectively as the variance of aggregated action $V \coloneqq \Var\qty[A]$ and that of the idiosyncratic difference $D \coloneqq \Var \qty[A - a_i]$.
\cite{ui2015} show that the expected welfare in any symmetric LQG game can be expressed as a linear combination of $V$ and $D$, thus, calculating these variables holds particular importance for welfare evaluations.
Given any symmetric strategy profile, by relying on the intuition that idiosyncratic parts of individual actions cancel out through aggregation, \cite{bm2013} argue that the volatility and dispersion are computed as follows:
\begin{align*}
V = \Cov \qty[a_i, a_j] \quad \text{and} \quad D = \Var \qty[a_i] - \Cov \qty[a_i, a_j],
\end{align*}
where $i,j \in [0,1]$ are any distinct pair of representative agents.
These formulae are readily derived as formal consequences of Proposition \ref{prop_var}, demonstrating that this kind of LLN can be derived directly from the definition of Pettis integral.
\end{example}

Next, we develop a ``conditional'' version of the Fubini formula \eqref{fubini_un} that arises when we replace the unconditional expectation by the conditional expectation with respect to a sub-$\sigma$-algebra of $\Pi$.

\begin{proposition} \label{prop_cond}
Let $\hat{\Pi} \subseteq \Pi$ be a sub-$\sigma$-algebra, $f$ be a Pettis integrable stochastic process, and $\hat{f} \equiv \E [f \mid \hat{\Pi}]$.
If $\hat{f}$ is Pettis integrable, then we have
\begin{align} \label{fubini_cond}
\P \qty( \E\qty[\int f(t) \d\nu(t) \mid \hat{\Pi}] = \int \E \qty[f(t) \mid \hat{\Pi}] \d\nu(t) ) = 1.
\end{align}
In particular, if $f$ satisfies {\rm (P2)} and $\hat{f}$ satisfies {\rm (P1)}, then $\hat{f}$ is Pettis integrable.
\end{proposition}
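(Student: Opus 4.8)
The plan is to exploit the fact that, on the Hilbert space $X=\calL_2(\Omega,\Pi,\P)$, the conditional expectation $\E[\,\cdot\mid\hat\Pi]$ coincides with the orthogonal projection $P$ onto the closed subspace $\hat X\subseteq X$ of $\hat\Pi$-measurable square-integrable random variables. Two features of $P$ do all the work: it is self-adjoint, $\langle Px,y\rangle=\langle x,Py\rangle$ for all $x,y\in X$ (this is the pull-out/tower identity $\E[\E[x\mid\hat\Pi]\,y]=\E[x\,\E[y\mid\hat\Pi]]$), and it is a contraction, $\|Px\|\le\|x\|$ (conditional Jensen). Writing $\bar f:=\int f(t)\,\d\nu(t)$ for the Pettis integral of $f$ and $\hat f(t)=Pf(t)$, the whole statement reduces to showing that $P$ commutes with the Pettis integral, i.e.\ $P\bar f=\int\hat f(t)\,\d\nu(t)$. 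I would record the identification of conditional expectation with the $L^2$ projection and its self-adjointness as a one-line preliminary before the main computation.

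First I would prove \eqref{fubini_cond} under the stated hypothesis that $\hat f$ is Pettis integrable. Since an element of a Hilbert space is determined by its inner products against all $x\in X$, it suffices to test both sides against an arbitrary $x$. Using self-adjointness and the defining property of the two Pettis integrals,
\[
\langle x, P\bar f\rangle=\langle Px,\bar f\rangle=\int\langle Px,f(t)\rangle\,\d\nu(t)=\int\langle x,Pf(t)\rangle\,\d\nu(t)=\Big\langle x,\int\hat f(t)\,\d\nu(t)\Big\rangle,
\]
where the second equality treats $Px\in X$ as a test element for the Pettis integral of $f$, the third uses that $\langle Px,f(t)\rangle=\langle x,Pf(t)\rangle$ holds pointwise in $t$ by self-adjointness, and the fourth uses Pettis integrability of $\hat f$. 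As this holds for every $x$, we obtain $P\bar f=\int\hat f(t)\,\d\nu(t)$ in $X$, i.e.\ $\P$-almost surely; recalling $P\bar f=\E[\bar f\mid\hat\Pi]$ and $Pf(t)=\E[f(t)\mid\hat\Pi]$ yields exactly \eqref{fubini_cond}.

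For the ``in particular'' clause I would exhibit the Pettis integral of $\hat f$ explicitly rather than verify (P2) for $\hat f$. Condition (P1) for $\hat f$ already gives, by Proposition~\ref{prop_pettis}, that $\hat f$ is weakly measurable, so $t\mapsto\langle x,\hat f(t)\rangle$ is $\nu$-measurable for each $x$. Taking the candidate $P\bar f\in X$, the same computation shows $\langle x,P\bar f\rangle=\int\langle x,\hat f(t)\rangle\,\d\nu(t)$, and the scalar integrand is $\nu$-integrable because $|\langle x,\hat f(t)\rangle|=|\langle Px,f(t)\rangle|\le\|x\|\,\|f(t)\|$ with $\int\|f(t)\|\,\d\nu(t)<\infty$ by (P2) for $f$. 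Hence $\hat f$ is Pettis integrable with integral $P\bar f$, and the previous paragraph applies.

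The main obstacle, and the reason for routing the argument through $P\bar f$, is measurability of the norm $t\mapsto\|\hat f(t)\|$. The naive approach would check (P1) and (P2) for $\hat f$ and invoke Proposition~\ref{prop_pettis}; but (P1) only yields separate measurability of $(s,t)\mapsto\langle\hat f(s),\hat f(t)\rangle$, which need not restrict to a measurable diagonal $t\mapsto\|\hat f(t)\|^2$, so (P2) for $\hat f$ is unavailable in general. Self-adjointness of $P$ circumvents this: it transfers the operator onto the test element $x$ and reduces every claim to scalar integrals against the already weakly measurable $f$, whose integrability is controlled by (P2). The only points needing care are the standard identifications that $\E[\,\cdot\mid\hat\Pi]$ is the $L^2$ projection and is self-adjoint, which I would verify directly from the tower property.
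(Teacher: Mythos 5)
Your proof is correct. For the identity \eqref{fubini_cond} your argument and the paper's are the same mechanism in different clothing: the paper tests $\int f(t)\,\d\nu(t)$ and $\int\hat f(t)\,\d\nu(t)$ against indicators $\1_E$ with $E\in\hat\Pi$ and moves the conditioning across the expectation via the tower property, which is exactly your self-adjointness identity $\langle Px,y\rangle=\langle x,Py\rangle$ specialized to $y=\1_E$; you test against all of $X$ instead. Your packaging has two advantages. First, it delivers the $\hat\Pi$-measurability of $\int\hat f(t)\,\d\nu(t)$ for free (it is identified with $P\bar f$, which lies in the closed subspace of $\hat\Pi$-measurable elements), a point that the paper's verification of the defining property of conditional expectation needs but leaves implicit. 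Second, and more substantively, the paper's appendix proof omits the ``in particular'' clause altogether (the surrounding text only gestures at checking the hypotheses of Corollary~\ref{cor_rv}), whereas you prove it by exhibiting $P\bar f$ directly as the Pettis integral of $\hat f$, with weak measurability of the scalar integrand coming from that of $f$ and integrability from the contraction bound $|\langle x,\hat f(t)\rangle|\le\|x\|\,\|f(t)\|$ together with (P2) for $f$. This cleanly sidesteps the genuine obstacle you identify, namely that (P1) for $\hat f$ gives only separate measurability of $(s,t)\mapsto\langle\hat f(s),\hat f(t)\rangle$ and hence no control on the diagonal $t\mapsto\|\hat f(t)\|$ required for (P2). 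A further observation your route makes available: since $\langle x,\hat f(t)\rangle=\langle Px,f(t)\rangle$ and $f$ is weakly measurable, $\hat f$ is automatically weakly measurable, so the hypothesis that $\hat f$ satisfies (P1) is not actually needed in your argument for the ``in particular'' part.
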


According to Proposition~\ref{prop_cond}, the interchange of conditional expectation and integration is justified if the ``conditional'' stochastic process $\hat{f}$ is Pettis integrable.
To assess the Pettis integrability of $\hat{f}$, we remark that $\E [\hat{f}]=\E [f]$ and $\Sd [\hat{f}] \le \Sd [f]$ hold by the standard properties of random variables.
Hence, given that the initial process $f$ meets all assumptions of Corollary~\ref{cor_rv}, our remaining task would be to check if the covariance function of $\hat{f}$ is measurable.
In general, it is not easy to evaluate the covariance of conditional expectations, the task becomes simpler if our focus is restricted to Gaussian processes, as illustrated in the next section.

\subsection{Pettis Integral of Gaussian Processes}
\label{sec_gauss}

In many economic applications, including Example \ref{ex_bm}, we focus on a continuum of normally distributed random variables, which can be formalized by the concept of Gaussian processes.
Formally, a stochastic process $f: T \to X$ is called a \emph{Gaussian process} if any finite selection from the collection of random variables $\{f(t)\}_{t \in T}$ is jointly normally distributed.
By normality, the joint distribution of $f$ can be summarized by the associated mean and covariance functions,
\begin{align*}
\mu(t) \coloneqq \E \qty[f(t)] \quad \text{and} \quad \sigma(s,t) \coloneqq \Cov \qty[f(s),f(t)], \quad \forall s,t \in T.
\end{align*}
By definition, $\sigma$ necessarily satisfies the statistical property of ``positive semidefiniteness,'' which is formally introduced as Definition~\ref{def_psd} in Appendix~\ref{app_kernel}.
Moreover, Theorem 12.1.3 of \cite{dudley} establishes the converse by showing that if $\sigma$ satisfies positive semidefiniteness, then there exists a Gaussian process that has $\sigma$ exactly as its covariance function.
No restriction is needed for $\mu$.

Proposition~\ref{prop_pettis} or Corollary~\ref{cor_rv} can readily provide sufficient conditions on $\mu$ and $\sigma$ for a given Gaussian process to be Pettis integrable.
Moreover, Proposition~\ref{prop_cond} can be slightly enhanced when we focus on Gaussian processes.

\begin{corollary} \label{cor_fubini}
Let $f:T \to X$ be a Gaussian process that satisfies {\rm (Q1)} and {\rm (Q2)}.
Then, for any $n \in \N$ and $t_1,\ldots,t_n \in T$, the process $\hat{f} \equiv \E \qty[f \mid f(t_1),\ldots,f(t_n)]$ is Pettis integrable, and it holds that
\begin{align} \label{cond_gauss}
\P\qty(\E \qty[\int f(s) \d \nu(s) \mid f(t_1),\ldots,f(t_n)] = \int \E \qty[f(s) \mid f(t_1),\ldots,f(t_n)] \d \nu(s)) = 1.
\end{align}
\end{corollary}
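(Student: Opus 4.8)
The plan is to deduce the corollary from Proposition~\ref{prop_cond} by verifying its two hypotheses for the conditioning sub-$\sigma$-algebra $\hat{\Pi} \coloneqq \sigma(f(t_1),\ldots,f(t_n))$: namely, that $f$ satisfies {\rm (P2)} and that $\hat{f} \equiv \E[f \mid \hat{\Pi}]$ satisfies {\rm (P1)}. Once these are in place, the ``in particular'' clause of Proposition~\ref{prop_cond} gives the Pettis integrability of $\hat{f}$, and its main conclusion \eqref{fubini_cond} specializes to the desired identity \eqref{cond_gauss}. Since (Q1) already yields {\rm (P1)} and hence weak measurability for $f$, and (Q2) will be seen to give {\rm (P2)}, Proposition~\ref{prop_pettis} guarantees that $f$ is itself Pettis integrable; thus the work reduces to a routine upgrade of (Q2) to {\rm (P2)} and the substantive verification of {\rm (P1)} for the conditional process $\hat{f}$.

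First I would record that (Q2) implies {\rm (P2)}. The $\nu$-measurability of $t \mapsto \|f(t)\|$ is common to both, and because $\nu$ is a probability measure, Jensen's inequality (or Cauchy--Schwarz) gives $\int \|f(t)\| \,\d\nu(t) \le (\int \|f(t)\|^2 \,\d\nu(t))^{1/2} < \infty$, so the integrability in {\rm (P2)} follows from that in (Q2). Hence $f$ meets every assumption of Proposition~\ref{prop_cond} except the {\rm (P1)} requirement on $\hat{f}$, to which I turn next.

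The heart of the argument is to show that $t \mapsto \langle \hat{f}(s),\hat{f}(t) \rangle$ is $\nu$-measurable for each fixed $s$. Here I would exploit Gaussianity through the fact that, for a jointly Gaussian family, the conditional expectation coincides with the $L_2$-orthogonal projection onto the finite-dimensional subspace $V \coloneqq \Span\{1, f(t_1),\ldots,f(t_n)\} \subseteq X$; writing $P_V$ for this projection, we have $\hat{f}(t) = P_V f(t) \in X$. Using that $P_V$ is self-adjoint and idempotent, $\langle \hat{f}(s),\hat{f}(t) \rangle = \langle P_V f(s), P_V f(t) \rangle = \langle \hat{f}(s), f(t) \rangle$. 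Since $\hat{f}(s) \in V$, for each fixed $s$ it is a linear combination $a_0(s)\cdot 1 + \sum_{k=1}^n a_k(s) f(t_k)$ whose coefficients do not depend on $t$, so $\langle \hat{f}(s), f(t) \rangle = a_0(s)\,\mu(t) + \sum_{k=1}^n a_k(s)\,\langle f(t_k), f(t) \rangle$. In this expression $\mu(t) = \langle 1, f(t) \rangle$ is $\nu$-measurable because $f$ is weakly measurable (take $x = 1 \in X$), and each $t \mapsto \langle f(t_k), f(t) \rangle$ is $\nu$-measurable by {\rm (P1)} for $f$. A finite linear combination of $\nu$-measurable functions being $\nu$-measurable, {\rm (P1)} holds for $\hat{f}$.

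With {\rm (P2)} for $f$ and {\rm (P1)} for $\hat{f}$ established, Proposition~\ref{prop_cond} applies and yields \eqref{cond_gauss}. I expect the main obstacle to be the clean identification of the conditional expectation with $P_V$ together with the measurability bookkeeping it permits: in particular, the covariance matrix of $(f(t_1),\ldots,f(t_n))$ may be singular, so the explicit conditional Gaussian formula would require a pseudoinverse. The projection formulation sidesteps this entirely, since it uses only the membership $\hat{f}(s) \in V$ rather than any inversion, and reduces the measurability of $\langle \hat{f}(s),\hat{f}(t) \rangle$ to the measurability already guaranteed by weak measurability and {\rm (P1)}.
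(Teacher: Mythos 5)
Your proof is correct, and it reaches the conclusion by the same overall route as the paper --- namely, by verifying the hypotheses of Proposition~\ref{prop_cond} for the conditional process $\hat{f}$ --- but the key measurability step is handled differently. The paper's verification (given informally after the corollary) computes $\Cov[\hat{f}(s),\hat{f}(t)]$ explicitly via the conditional Gaussian formula, which exhibits it as a jointly measurable function of $(s,t)$ whenever $\sigma$ is, and then invokes the discussion following Proposition~\ref{prop_cond} (i.e., the route through Corollary~\ref{cor_rv}). You instead identify $\E[\,\cdot \mid f(t_1),\ldots,f(t_n)]$ with the orthogonal projection $P_V$ onto the finite-dimensional subspace $V = \Span\{1,f(t_1),\ldots,f(t_n)\}$, use self-adjointness and idempotence to write $\langle \hat{f}(s),\hat{f}(t)\rangle = \langle \hat{f}(s), f(t)\rangle$, and expand $\hat{f}(s)$ in the spanning set to reduce the required measurability to that already guaranteed by weak measurability and (P1) for $f$; you then enter Proposition~\ref{prop_cond} through its ``in particular'' clause. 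Your route buys two things: it only needs separate measurability (P1) of $\langle \hat{f}(s),\hat{f}(t)\rangle$ rather than joint measurability, and it sidesteps the implicit non-singularity assumption in the paper's inverse-matrix formula (if the Gram matrix of $(f(t_1),\ldots,f(t_n))$ is singular, the paper's displayed formula requires a pseudoinverse, whereas your argument only uses membership $\hat{f}(s) \in V$). The paper's route, in exchange, delivers joint measurability of the conditional covariance directly, which is occasionally the stronger fact one wants downstream. Both arguments are complete; the reduction of (Q2) to (P2) via Cauchy--Schwarz and $\nu(T)=1$ is exactly as the paper intends.
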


According to this corollary, a Gaussian process enjoys the conditional Fubini formula~\eqref{cond_gauss}, given that the process satisfies our regularity conditions, (Q1) and (Q2).
To verify this result, we can use the conditional Gaussian formula to observe that
\begin{align*}
\Cov \qty[\hat{f}(s), \hat{f}(t)] = \mqty[\sigma(s,t_1) \\ \vdots \\ \sigma(s,t_n)] ^\top \mqty[\sigma(t_1,t_1) & \cdots & \sigma(t_1,t_n) \\ \vdots & \ddots & \vdots \\ \sigma(t_n,t_1) & \cdots & \sigma(t_n,t_n)]^{-1} \mqty[\sigma(t,t_1) \\ \vdots \\ \sigma(t,t_n)]
\end{align*}
is a jointly measurable function in terms of $(s,t)$, provided that so is $\sigma$.
Therefore, by our proceeding discussion after Proposition~\ref{prop_cond}, we can conclude that the conditional Fubini formula holds true.

\setcounter{example}{0}
\begin{example}[Continued]
A common way to get a closed form of an equilibrium is based on matching coefficients.
Namely, we guess a (symmetric) equilibrium strategy $a_i = \alpha_0 + \alpha_x x_i + \alpha_y y$, which is linear in signals, and substitute it into the best-response formula \eqref{br_bm} to get the expressions of $(\alpha_0,\alpha_x,\alpha_y)$ in terms of parameters.
In doing so, we need to calculate each agent's conditional expectation of the aggregated action $A$, which is facilitated by postulating the exchangeability of conditional expectation and integration,
\begin{align*}
\E \qty[A \mid x_i,\, y] = \int_0^1 \E \qty[a_j \mid x_i,\, y] \d j.
\end{align*}
By means of Proposition~\ref{prop_cond}, the interchange is justified if the conditional process $j \mapsto \E \qty[a_j \mid x_i,\, y]$ is Pettis integrable.
Indeed, by the linearity of $a_j$ in $(x_j,y)$ and the normality of information, one can see that $\E \qty[a_j \mid x_i,\, y]$ is expressed as a linear combination of $x_i$ and $y$, which has identical coefficients across all agents.
Consequently, the conditional process maintains the needed conditions for Pettis integrability.
\hfill $\triangle$
\end{example}


\section{Equilibrium Analysis}
\label{sec_eq}

Now, we apply our mathematical study in the last section to the formal equilibrium analysis of large population games with incomplete information.
To this end, we consider generalizing the model of Example~\ref{ex_bm} in regard with two important aspects.
First, we no longer assume that agents have an identical payoff function by allowing for parameters to depend on their indexes.
Second, by dropping the symmetry and normality, we generalize an information environment.

\begin{example}
Let us generalize Example~\ref{ex_bm} to accommodate asymmetry across agents in terms of both payoffs and information.
Specifically, we postulate that agent $i$'s best-response strategy is now given as
\begin{align} \label{br2}
a_i = \E \qty[\int_0^1 r_{ij} a_j \d j \mid x_{i1},\ldots,x_{i,n}] + s_i \E \qty[\theta \mid x_{i,1},\ldots,x_{i,n}] + k_i,
\end{align}
where $r_{ij}, s_i, k_i \in \R$ are the parameters that can vary across agents and $x_{i,1},\ldots,x_{i,n}$ are $n$ private signals observed by agent $i$.

To interpret, $r_{ij}$ indicates how agent $i$ would adjust her strategy in response to the marginal change of agent $j$'s strategy.
For example, in network games, these can be specified as $r_{ij} = \bar{r} \cdot \1_{\{(i,j) \in G\}}$, where $\bar{r} > 0$ represents the degree of peer effects, and $G \subseteq T^2$ is an arbitrary undirected graph that represents the existence of links between agents.
The dependency of $s_i$ and $k_i$ on indexes allows different agents to exhibit different base action levels and responses to the state.
Moreover, each agent's signal can be heterogeneous in terms of the correlations with the state or other agent's signal, whereas the model does not preclude public signals.
For instance, \cite{miyashita_id} introduces the following \emph{Gaussian information structure}: Each agent's signal is given as a random vector $x_i:\Omega \to \R^n$ such that the collection of random variables $\{\theta\} \cup \{x_{i,1},\ldots,x_{i,n}\}_{t \in T}$ constitutes a Gaussian process.
Note that the signal-state joint distribution of Example~\ref{ex_bm} is a special type of Gaussian information environment, where $x_1(t)$ are conditionally i.i.d.\ and $x_2(t)$ are common across agents.
\hfill $\triangle$
\end{example}

\subsection{Setup}
\label{sec_eq_model}

We now interpret the input space $T$ as the set of agents.
For example, it can be the unit interval $ [0,1]$ with the Lebesgue measure or any finite set $\{1,\ldots,n\}$ with the uniform mass function.
The output space $X$ continues to be the space of square-integrable random variables, modeled on some probability space $(\Omega,\Pi,\P)$, on which the incomplete information game takes place.

There is a payoff-relevant state for each agent $t$, given as an exogenous random variable $\theta(t) \in X$.
While we allow for $\theta(t)$ to be idiosyncratic across agents, the case of the common state can be captured by taking all $\theta(t)$ as an identical random variable.
Prior to choosing actions, each agent $t$ receives a private signal $x(t)$, given as a random variable that admits values in some measurable space.
Denote by $\E_t \qty[\cdot] = \E \qty[\cdot \mid x(t)]$ the agent $t$'s conditional expectation operator.
We refer to the collection of random variables $\{\theta(t),x(t)\}_{t \in T}$ as an \emph{information environment}.

An action is taken from $\R$ as a function of one's signal realization.
Specifically, agent $t$'s \emph{strategy} is given as an $x(t)$-measurable random variable $f(t):\Omega \to \R$ such that $\E\qty|f(t)|^2 < \infty$.
We regard two strategies $f(t)$ and $f'(t)$ as the same if $\E \qty|f(t)-f'(t)|^2 = 0$.
Assume that the agent's payoff depends on her own action choice $f(t)$ and state $\theta(t)$, as well as the aggregated action $F(t)$ of others.
Unlike Example~\ref{ex_bm}, we no longer assume that the way of aggregation is common across agents.
Instead, the agent $t$'s \emph{local aggregate} is given as a weighted integral,
\begin{align*}
F(t) = \int R(t,t') f(t') \d \nu(t'),
\end{align*}
where $R:T^2 \to \R$ is a bivariate real-valued function.
Here, $F(t)$ is interpreted as the Pettis integral of random variables $\{f(t')\}_{t' \in T}$.
For the well-definedness of it, we will impose some assumptions on $R$ and restrictions on $f$.

By assuming that agent $t$'s payoff is quadratically dependent on $f(t)$, $F(t)$, and $\theta(t)$, we posit that she best responds to others by adopting a linear strategy in terms of her best estimates regarding the local aggregate and the state.\footnote{The linear best-response strategy is seen as a reduced-form assumption that agents are maximizing quadratic payoffs,
\begin{align*}
-\frac{1}{2} f(t)^2 + f(t) \cdot (F(t)+\theta(t)) + U,
\end{align*}
conditional on their private information. Here, $U$ stands for any component of utility that is determined independently of $f(t)$.}
Specifically, we postulate that the best-response formula takes the form,
\begin{align} \label{br}
f(t) = \E_t \qty[F(t)] + \E_t \qty[\theta(t)].
\end{align}
Notice that the flexibility of $R(s,t)$ and $\theta(t)$ allows us to formulate the best response formula as the sum of the best estimates rather than as linear combinations.
We refer to the function $R$ as a \emph{payoff structure}.

Our incomplete information game can be summarized as a profile $\left\langle \{\theta(t), x(t)\}_{t \in T}, R \right\rangle$ of an information environment and a payoff structure.
In later analysis, we treat signals as endogenously chosen variables while the state distribution and payoff structure are exogenously given.
Motivated by this, we refer to $\langle \{\theta(t)\}_{t \in T}, R \rangle$ as a \emph{basic game} and $\{x(t)\}_{t \in T}$ as a \emph{signal structure}.
Throughout, we impose the following mild regularity conditions on basic games:
First, when viewed as a process, we assume that the state $\theta:T \to X$ satisfies (Q1) and (Q2).
Second, a payoff structure $R:T^2 \to \R$ is given as a jointly measurable function such that
\begin{align*}
\int |R(s,t)|^2 \d \nu(t) < \infty,\, \forall s \in T \quad \text{and} \quad \int \int |R(s,t)|^2 \d \nu(s) \d \nu(t) < \infty.
\end{align*}
Notice that both conditions are trivially satisfied, for example, if the set of agents $T$ is at most countable or if all agents are homogeneous so that $\theta(t)$ and $R(s,t)$ do not depend on their indexes.

In principle, a \emph{strategy profile} can be given as any process $f: T \to X$ such that $f(t)$ is $x(t)$-measurable for every $t \in T$.
In order to assure that the local aggregates are well-defined, however, we employ (Q1) and (Q2) to stipulate the class of regular strategy profiles.
By accommodating these regularity requirements as a part of equilibrium condition, we say that a regular strategy profile constitutes an equilibrium when every agent takes her best-response strategy.

\begin{definition} \label{def_st}
A strategy profile $f$ is \emph{regular} if it satisfies (Q1) and (Q2).
Moreover, a regular strategy profile $f$ constitutes an \emph{equilibrium} if \eqref{br} holds for every $t \in T$.
\end{definition}

Let us introduce a few properties of payoff structures, which hold pivotal roles in characterizing the uniqueness property of the linear best response model.
While some mathematical notions are needed to formally state these properties, all of which can be found in Appendix~\ref{app_kernel}.
In this appendix, we also present some known facts and new lemmas that are of relevance to our economic analysis.

We shall identify a payoff structure $R:T^2 \to \R$ with an integral operator $\mathbf{R}$, acting for real-valued functions $\phi:T \to \R$, as follows:
\begin{align} \label{def_R}
\mathbf{R}\phi \coloneqq \int R(\cdot, t) \phi(t) \d \nu(t).
\end{align}
Mathematically, a bivariate function $R:T^2 \to \R$ is known with the name of \emph{(integral) kernels}, which can be seen as natural counterparts of matrixes in infinite dimensional analysis.
Several common terminologies for matrixes can be extended to kernels, allowing us to introduce the following two conditions on payoff structures.

\begin{enumerate}[\rm (R1).]
\item The numerical range of $\mathbf{R}$ is contained in $(-\infty,1)$.
\item The set of (real) eigenvalues $\mathbf{R}$ is contained in $(-\infty,1)$.
\end{enumerate}

At the moment, we remark that these conditions are equivalent to each other for any \emph{undirected} payoff structure such that $R(s,t) = R(t,s)$ holds for all $s,t \in T$.
This type of payoff structure is often considered in the context of network games under the premise that strategic interactions between agents are determined solely by the existence of undirected peer links connecting them.
In general, taking directed payoff structures into consideration, (R1) is strictly stronger than (R2), as Example~\ref{ex_uni} in Appendix~\ref{app_eq} illustrates the gap between these conditions.

Roughly speaking, (R1) and (R2) regulate the magnitude and direction of strategic interactions in different yet similar ways.\footnote{A relevant condition is that the spectral radius of $\mathbf{R}$ is less than 1, which is often employed in the network game literature, e.g., Proposition 3.5 of \cite{jackson2015zenou}. This condition is stronger than (R1) and (R2), as it regulates the absolute values of eigenvalues. In contrast, (R1) and (R2) impose upper bounds but do not entail lower bounds.}
These conditions become easy to interpret when $R(s,t) = r$ is constant across all agents.
In this case, (R1) is equivalent to (R2), and both are reduced to the one-dimensional parametric condition that $r < 1$.\footnote{This is a commonly adopted condition in symmetric LQG games, e.g., \cite{ap2007}, \cite{bm2013}, and \cite{ui2015}. Under this condition, \cite{ui2013} establishes the unique existence of an equilibrium in a symmetric LQG game.}
Economically, this can be interpreted as precluding the case of large strategic complementarities (i.e., $r \ge 1$), whereas small strategic interactions (i.e., $|r| < 1$) or strategic substitutes (i.e., $r < 0$) are allowed to exist.
We can extend these interpretations to the present asymmetric setting by clarifying relevant sufficient conditions for (R1).

\begin{lemma} \label{lem_r}
The following hold true:
\begin{enumerate}[\rm i).]
\item {\rm (R1)} implies {\rm (R2)}, while the converse is true if $R$ is undirected.
\item {\rm (R1)} is satisfied if $\sup_{s,t \in T} |R(s,t)| < 1$.
Alternatively, {\rm (R1)} is satisfied if $R$ takes a multiplicatively separable form, $R(s,t) = rq(s)q(t)$, where $r < 0$ is any negative number and $q:T \to \R$ is any square-integrable function.
\end{enumerate}
\end{lemma}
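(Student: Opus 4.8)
The plan is to carry out everything inside the Hilbert space $\calL_2(\nu)$, the space of square-integrable real-valued functions on $(T,\nu)$, on which $\mathbf{R}$ acts; the square-integrability conditions imposed on $R$ in the setup make $\mathbf{R}$ a Hilbert--Schmidt operator, hence compact. Since the numerical range is $W(\mathbf{R}) = \{\langle \mathbf{R}\phi,\phi\rangle : \|\phi\|_2 = 1\}$, both (R1) and (R2) can be read off from the quadratic form $\phi \mapsto \langle \mathbf{R}\phi,\phi\rangle = \int \int R(s,t)\phi(s)\phi(t)\,\d\nu(s)\,\d\nu(t)$, and the whole lemma reduces to estimating this form.

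For the forward implication in part i), I would take any real eigenvalue $\lambda$ with a normalized eigenfunction $\phi$, so that $\langle \mathbf{R}\phi,\phi\rangle = \lambda\|\phi\|_2^2 = \lambda$; thus $\lambda \in W(\mathbf{R})$, and (R1) forces $\lambda < 1$, giving (R2). For the converse under symmetry, the key input is the spectral theory of compact self-adjoint operators: when $R$ is undirected, $\mathbf{R}$ is self-adjoint, so $\sup W(\mathbf{R}) = \sup \sigma(\mathbf{R})$, and compactness confines $\sigma(\mathbf{R})$ to $\{0\}$ together with the real eigenvalues, whose positive part attains a maximum $\lambda_{\max}$. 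Hence $\sup W(\mathbf{R}) = \max(0,\lambda_{\max})$, which is strictly below $1$ exactly when every eigenvalue is below $1$, i.e. when (R2) holds, and this yields $W(\mathbf{R}) \subseteq (-\infty,1)$. I expect this to be the one step that is not a bare computation, and I would invoke the facts on self-adjoint kernels and their numerical ranges collected in Appendix~\ref{app_kernel}.

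For the first claim of part ii), writing $M = \sup_{s,t}|R(s,t)| < 1$, I would bound the form by $|\langle \mathbf{R}\phi,\phi\rangle| \le M \left(\int |\phi|\,\d\nu\right)^2$ and then apply Cauchy--Schwarz together with the normalization $\nu(T)=1$ to get $\left(\int |\phi|\,\d\nu\right)^2 \le \|\phi\|_2^2$, so that $\langle \mathbf{R}\phi,\phi\rangle \le M\|\phi\|_2^2$; every point of the normalized numerical range is then at most $M < 1$, establishing (R1). For the separable case $R(s,t) = r\,q(s)q(t)$ with $r<0$, the form collapses to $\langle \mathbf{R}\phi,\phi\rangle = r\,\langle q,\phi\rangle^2 \le 0 < 1$, since $r<0$ and $\langle q,\phi\rangle^2 \ge 0$, whence $W(\mathbf{R}) \subseteq (-\infty,0] \subseteq (-\infty,1)$. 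These last two computations are routine, and the only genuine obstacle is assembling the compact self-adjoint spectral facts needed for the converse in part i).
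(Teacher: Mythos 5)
Your proposal is correct and follows essentially the same route as the paper: the forward direction of i) from $\Lambda(\mathbf{R})\subseteq\Phi(\mathbf{R})$, the converse from the self-adjoint spectral fact $\sup\Phi(\mathbf{R})=\sup\Lambda(\mathbf{R})$ for compact operators (the paper packages this as Lemma~\ref{lem_num} via Courant--Fischer), and the separable case via the collapse of the quadratic form to $r\langle q,\phi\rangle^2\le 0$. The only cosmetic difference is in the bounded-kernel case, where you estimate the quadratic form directly by Cauchy--Schwarz and $\nu(T)=1$ instead of passing through $\|\mathbf{R}\|_{\rm op}\le\|R\|_{\calL_2(\nu\otimes\nu)}<1$ as the paper does; both are valid.
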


Together with the mentioned relationship between (R1) and (R2), the lemma provides two sufficient conditions for (R1).
The first condition says that the game entails strategic interactions of bounded magnitude.
The second condition posits the negativity of strategic interactions by requiring that $R(s,t)$ is multiplicatively separable into the terms specific to each agent, $q(s)$ and $q(t)$, and a fixed negative coefficient $r$.


\subsection{Moment Restrictions}
\label{sec_moment}

We begin the analysis by identifying constraints imposed on any equilibrium of the linear best response model.
The next proposition offers two functional equations, which are {\it necessarily} satisfied by the first and second moments of any equilibrium action-state joint distribution.
These equations generalize the obedience conditions of \cite{bm2013} to the present asymmetric setting, irrespective of whether an information environment is Gaussian.

\begin{proposition} \label{prop_moment}
If $f$ is an equilibrium in $\langle \{\theta(t),x(t)\}_{t \in T}, R \rangle$, then it satisfies the following moment restrictions for all $t \in T$:
\begin{gather}
\E \qty[f(t)] = \int R(t,t') \E \qty[f(t')] \d \nu(t') + \E \qty[\theta(t)], \label{moment1} \\
\Var \qty[f(t)] = \int R(t,t') \Cov \qty[f(t),f(t')] \d \nu(t') + \Cov \qty[f(t), \theta(t)]. \label{moment2}
\end{gather}
In particular, under {\rm (R2)}, the first equation \eqref{moment1} uniquely pins down $\E\qty[f(t)]$ for all $t \in T$ through $R(\cdot,\cdot)$ and $\E \qty[\theta(\cdot)]$.
\end{proposition}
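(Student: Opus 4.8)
The plan is to obtain each moment restriction by substituting the equilibrium best-response (\ref{br}) into the corresponding moment and then invoking the Pettis Fubini formulae from Section~\ref{sec_pettis}. Throughout, I would use that, for each fixed $t$, the process $t' \mapsto R(t,t')f(t')$ is Pettis integrable: this is precisely what the regularity conditions on $R$ and $f$ guarantee via Corollary~\ref{cor_rv}, and it is what makes the local aggregate $F(t) = \int R(t,t')f(t')\d\nu(t')$ well-defined. Hence formulae (\ref{fubini_un}) and (\ref{fubini_cov}) may be applied with the integrand $g(t') = R(t,t')f(t')$.

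For the first-moment equation (\ref{moment1}), I would take unconditional expectations of both sides of $f(t) = \E_t[F(t)] + \E_t[\theta(t)]$. The tower property collapses each conditional expectation, yielding $\E[f(t)] = \E[F(t)] + \E[\theta(t)]$, and the unconditional Fubini formula (\ref{fubini_un}) then gives $\E[F(t)] = \int R(t,t')\E[f(t')]\d\nu(t')$, which is exactly (\ref{moment1}).

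For the second-moment equation (\ref{moment2}), the crucial observation is that $f(t)$ is $x(t)$-measurable, so that $\Cov[f(t), \E_t[z]] = \Cov[f(t), z]$ for every $z \in X$; indeed $\E[f(t)\,\E_t[z]] = \E[\E_t[f(t)z]] = \E[f(t)z]$, while $\E[\E_t[z]] = \E[z]$. Writing $\Var[f(t)] = \Cov[f(t), \E_t[F(t)] + \E_t[\theta(t)]]$ and applying this identity term by term strips off both conditional expectations, leaving $\Var[f(t)] = \Cov[f(t), F(t)] + \Cov[f(t), \theta(t)]$. The covariance Fubini formula (\ref{fubini_cov}) from Proposition~\ref{prop_var}, applied with $x = f(t)$, converts the first term into $\int R(t,t')\Cov[f(t), f(t')]\d\nu(t')$, which completes (\ref{moment2}). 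I expect this conditional-expectation-removal step to be the main obstacle, since it is exactly where the measurability of a strategy with respect to its own signal is indispensable; the rest is bookkeeping once Pettis integrability is in hand.

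For the uniqueness claim, I would set $m \equiv \E[f(\cdot)]$ and $\mu_\theta \equiv \E[\theta(\cdot)]$ and rewrite (\ref{moment1}) as the operator equation $(I - \mathbf{R})m = \mu_\theta$ on $L^2(\nu)$, with $\mathbf{R}$ the integral operator (\ref{def_R}). Note that $m \in L^2(\nu)$ because $|\E[f(t)]|^2 \le \E[|f(t)|^2] = \|f(t)\|^2$ and $f$ satisfies (Q2). If $m_1$ and $m_2$ were two such solutions, their difference $h = m_1 - m_2 \in L^2(\nu)$ would satisfy $\mathbf{R}h = h$; were $h \neq 0$, then $1$ would be a real eigenvalue of $\mathbf{R}$, contradicting (R2). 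Hence $m_1 = m_2$, so (\ref{moment1}) determines $\E[f(\cdot)]$ uniquely.
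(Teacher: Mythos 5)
Your proposal is correct and follows essentially the same route as the paper's proof: take unconditional expectations of the best response and use the Pettis/Fubini exchange for \eqref{moment1}; exploit the $x(t)$-measurability of $f(t)$ together with the tower property to strip the conditional expectations and then apply \eqref{fubini_cov} for \eqref{moment2}; and treat \eqref{moment1} as the operator equation $(I-\mathbf{R})m=\mu_\theta$ on $\calL_2(\nu)$ for the uniqueness claim. The only point you should tighten is the last step: arguing in $\calL_2(\nu)$ that the difference $h$ of two solutions vanishes only yields $\E[f(t)]$ uniquely determined for $\nu$-almost every $t$, whereas the proposition asserts uniqueness for \emph{all} $t\in T$; the paper closes this gap by observing that \eqref{moment1} holds pointwise for every agent and that the integral term $\int R(t,t')\E[f(t')]\,\d\nu(t')$ is unaffected by modifications on $\nu$-null sets, so the a.e.-determined equivalence class pins down $\E[f(t)]$ at every $t$. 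Incidentally, your uniqueness argument is marginally more elementary than the paper's, which invokes compactness of $\mathbf{R}$ and the Fredholm alternative, whereas injectivity of $I-\mathbf{R}$ (all that is needed here, since a solution is already in hand) follows directly from $1\notin\Lambda(\mathbf{R})$.
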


The first equation \eqref{moment1} demands that the mean of each agent's action is expressed as the weighted aggregate of other agents' expected actions, plus the expected state.
Consequently, given any payoff structure that satisfies (R2), the mean action is uniquely determined for every agents across all equilibria that would arise in different information environments, as long as the marginal over the state is fixed.
In fact, to attain this uniqueness property, we can replace (R2) by a much weaker condition, requiring that the integral operator $\mathbf{R}$ does not admit $1$ as its eigenvalue, which can be satisfied by a generic payoff structure.

The second equation \eqref{moment2} restricts the (co)variance of an equilibrium action-state distribution by forcing each agent's action variance to be expressed as the aggregation of the correlations with the others agents' actions and the state.
When the game is symmetric, by postulating that the equilibrium strategies are symmetric as well, the equation can be simplified as
\begin{align} \label{moment_bm}
\Sd \qty[f(t)] = \frac{\Corr \qty[f(t),\theta]}{1-r \Corr \qty[f(t),f(t')]} \cdot \Sd \qty[\theta],
\end{align}
where $t,t' \in T$ are any distinct representative agents.
This is exactly the second moment restriction in Proposition 1 of \cite{bm2013}.
As pointed out by them, the second moment is restricted less stringently than the first moment, as there are three free variables---$\Sd \qty[f(t)]$, $\Corr \qty[f(t),f(t')]$, and $\Corr \qty[f(t), \theta]$---that are abide by a one-dimensional equation \eqref{moment_bm} and another one-dimensional inequality, stemming from a statistical requirement.
In the asymmetric case, the latter statistical requirement can be extended to a multi-dimensional positive semidefiniteness constraint, which is shortly formalized as in Definition~\ref{def_moment}.

In the remainder of Section~\ref{sec_moment}, we assume that all agents are concerned about the common state $\theta \equiv \theta(t)$, which follows some normal distribution.
Under this additional assumption, we can show that the moment restrictions, together with the positive semidefiniteness constraint, are \emph{sufficient} for any candidate moment functions to be induced in an equilibrium of some information environment.
To this end, since the equilibrium mean is uniquely pinned down by the first moment restriction, we just focus on the inducibility of the second moment.

\begin{definition} \label{def_moment}
Let $\xi:T^2 \to \R$ and $\zeta:T \to \R$ be square-integrable functions.
Given any payoff structure $R$, a pair of functions $(\xi,\zeta)$ is called an \emph{equilibrium moment (under $R$)} if the following conditions are satisfied:
\begin{itemize}
\item \emph{Obedience}: For any $t \in T$,
\begin{align*}
\xi(t,t) = \int R(t,t')\xi(t,t') \d\nu(t') + \zeta(t).
\end{align*}
\item \emph{Positivity}: For any $n \in \N$ and $t_1,\ldots,t_n$,
\begin{align*}
\mqty[\xi(t_1,t_1) & \cdots & \xi(t_1,t_n) & \zeta(t_1) \\ \vdots & \ddots & \vdots & \vdots \\ \xi(t_n,t_1) & \cdots & \xi(t_n,t_n) & \zeta(t_n) \\ \zeta(t_1) & \cdots & \zeta(t_n) & \Var\qty[\theta]] \succeq O.
\end{align*}
\end{itemize}
\end{definition}

\begin{remark}
For a square matrix $M$, we write as $M \succeq O$ when $M$ is symmetric and positive semidefinite.
Though Definition~\ref{def_moment} itself can be extended to the case of idiosyncratic states by considering a bivariate function $\zeta: T^2 \to \R$, the current proof of Proposition~\ref{prop_moment2} assumes the common state.
\end{remark}

As will be clear soon, $\xi(t,t')$ can be interpreted as representing the covariance between equilibrium actions of two agents $t$ and $t'$.
Similarly, we can interpret $\zeta(t)$ as the covariance between $t$'s action and the common state.
By Proposition~\ref{prop_moment}, for a pair of functions $(\xi,\zeta)$ to be valid as the equilibrium action-state covariance, they necessarily satisfies the second moment restriction, stated as the obedience condition in Definition~\ref{def_moment}.
In addition, they need to maintain the positivity condition to serve as the representation of the covariance of a collection of random variables $\{\theta\} \cup \{f(t)\}_{t \in T}$.

The next proposition shows that for any pair of functions that jointly satisfy obedience and positivity, there exists an information environment that admits an equilibrium, wherein those functions are induced exactly as the covariance of the equilibrium action-state joint distribution.

\begin{proposition} \label{prop_moment2}
Suppose that the common state $\theta$ is normally distributed and $R$ satisfies {\rm (R2)}.
Let $(\xi,\zeta)$ be any equilibrium moment.
Then, there exists a signal structure $\{x(t)\}_{t \in T}$ for which the game $\langle \{\theta(t),x(t)\}_{t \in T},r \rangle$ has an equilibrium $f$ such that
\begin{align*}
\quad \Cov \qty[f(t),f(t')] = \xi(t,t') \quad \text{and} \quad \Cov \qty[f(t), \theta] = \zeta(t), \quad \forall t,t' \in T.
\end{align*}
In particular, we can let $x(t) = f(t)$ for all $t \in T$, each of which is a one-dimensional normal random variable.
\end{proposition}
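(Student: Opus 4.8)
The plan is to \emph{reverse-engineer} a Gaussian equilibrium in which each agent observes her own action, i.e.\ take $x(t)=f(t)$, so that the prescribed second moments $(\xi,\zeta)$ hold by construction and the best-response identity \eqref{br} collapses to an algebraic consequence of obedience. First I would fix the means: since the payoff structure is Hilbert--Schmidt (by the integrability assumptions on $R$) the operator $\mathbf{R}$ is compact, and under (R2) the number $1$ is not an eigenvalue, so the Fredholm alternative makes $\mathbf{I}-\mathbf{R}$ invertible; with the common state $\E[\theta(t)]\equiv\E[\theta]$, \eqref{moment1} then has a unique solution $m\in L^2(\nu)$, and I set $\E[f(t)]=m(t)$. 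Next I read the positivity condition as the statement that the augmented kernel on $T\cup\{*\}$ --- equal to $\xi$ on $T\times T$, to $\zeta(t)$ between $t$ and the distinguished index $*$, and to $\Var[\theta]$ at $*$ --- is positive semidefinite; by the converse Gaussian existence theorem (Theorem 12.1.3 of \cite{dudley}, as quoted in Section~\ref{sec_gauss}) there is a Gaussian family $\{f(t)\}_{t\in T}\cup\{\theta\}$ with mean $m$, covariance $\xi$, cross-covariance $\zeta$ with $\theta$, and $\theta\sim\calN(\E[\theta],\Var[\theta])$. Declaring $x(t)=f(t)$ makes each $f(t)$ trivially $x(t)$-measurable and one-dimensional normal, and matches the target covariances; it then remains to check that $f$ is a regular strategy profile satisfying \eqref{br}.

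For regularity, (Q1) holds because $(s,t)\mapsto\E[f(s)f(t)]=\xi(s,t)+m(s)m(t)$ is jointly measurable, and the measurability half of (Q2) follows by composing with the diagonal map. The integrability half, $\int\|f(t)\|^2\d\nu(t)<\infty$, is where the obedience condition does real work: integrating $\xi(t,t)=\int R(t,t')\xi(t,t')\d\nu(t')+\zeta(t)$ over $t$ and applying Cauchy--Schwarz on $T^2$ bounds $\int\xi(t,t)\d\nu(t)$ by $\|R\|_{L^2(T^2)}\|\xi\|_{L^2(T^2)}+\|\zeta\|_{L^1(\nu)}<\infty$, while $\int m(t)^2\d\nu(t)<\infty$ since $m\in L^2$; hence $\int(\xi(t,t)+m(t)^2)\d\nu(t)<\infty$. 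This is worth flagging because a square-integrable kernel need not have an integrable diagonal, so obedience is genuinely used here.

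To verify the best response at a fixed $t$, the local aggregate $F(t)=\int R(t,t')f(t')\d\nu(t')$ is well defined because the weighted process $t'\mapsto R(t,t')f(t')$ satisfies (P1) and (P2), the latter again by Cauchy--Schwarz together with the diagonal integrability just established. The conditional Gaussian formula gives $\E[f(t')\mid f(t)]=m(t')+\tfrac{\xi(t,t')}{\xi(t,t)}(f(t)-m(t))$ and $\E[\theta\mid f(t)]=\E[\theta]+\tfrac{\zeta(t)}{\xi(t,t)}(f(t)-m(t))$. Substituting into $\E_t[F(t)]+\E_t[\theta]$, the constant part collapses to $m(t)$ by \eqref{moment1}, and the coefficient of $\tfrac{f(t)-m(t)}{\xi(t,t)}$ collapses to $\xi(t,t)$ by obedience, yielding exactly $f(t)$. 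Indices with $\xi(t,t)=0$ are harmless: there positivity forces $\zeta(t)=0$ and $f(t)=m(t)$ a.s., $\E_t$ is the unconditional expectation, and \eqref{moment1} (via the unconditional Fubini formula \eqref{fubini_un}) delivers \eqref{br} directly.

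The one delicate step, and the main obstacle, is the interchange $\E[F(t)\mid f(t)]=\int R(t,t')\E[f(t')\mid f(t)]\d\nu(t')$ invoked above. I would justify it with the conditional Fubini formula of Proposition~\ref{prop_cond} (equivalently Corollary~\ref{cor_fubini} with $n=1$ and conditioning variable $f(t)$) applied to the Gaussian process $t'\mapsto R(t,t')f(t')$: this process satisfies (P2), and its conditional version $t'\mapsto R(t,t')\E[f(t')\mid f(t)]$ is affine in $f(t)$ with measurable coefficients and therefore satisfies (P1), so Proposition~\ref{prop_cond} simultaneously guarantees its Pettis integrability and the desired almost-sure exchange. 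Once this interchange is in hand, the remaining algebra is routine, and the constructed $f$ is an equilibrium realizing $(\xi,\zeta)$.
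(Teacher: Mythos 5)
Your proof is correct and follows essentially the same route as the paper's: pin down the mean via (R2) and the Fredholm alternative, use positivity together with Theorem 12.1.3 of \cite{dudley} to realize $(\xi,\zeta)$ as the covariance of a Gaussian family, set $x(t)=f(t)$, and verify \eqref{br} through the conditional Gaussian formula, with the intercept collapsing to $\phi(t)$ by \eqref{moment1} and the slope collapsing to $\xi(t,t)$ by obedience, treating indices with $\xi(t,t)=0$ separately. The only substantive difference is the existence step: the paper explicitly builds $f(t)=\phi(t)+\frac{\zeta(t)}{\Var[\theta]}(\theta-\E[\theta])+\epsilon(t)$ with $\epsilon$ an independent Gaussian noise whose covariance is the Schur complement $\xi(s,t)-\zeta(s)\zeta(t)/\Var[\theta]$ (which forces a separate treatment of $\Var[\theta]=0$), whereas you apply the Gaussian existence theorem once to the augmented kernel on $T\cup\{*\}$ and avoid that case split, at the cost of making the ``signal equals state plus correlated noise'' interpretation less transparent. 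Your explicit verification of (Q1)--(Q2) for the constructed profile --- using obedience, Cauchy--Schwarz, and the square-integrability of $R$, $\xi$, and $\zeta$ to show the diagonal of $\xi$ is integrable --- addresses a regularity check the paper's proof leaves implicit, and is an improvement rather than a deviation.
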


The proof of this proposition is constructive:
Given an equilibrium moment $(\xi,\zeta)$, we let each agent's signal $x(t)$ be given as a linear combination of the state $\theta$ and an idiosyncratic noise term $\epsilon(t)$, where $\epsilon:T \to X$ can be given as a Gaussian process, which is independently distributed from $\theta$, while $\epsilon(t)$ and $\epsilon (t')$ are allowed to entail non-trivial correlations.
In the linear expression of $x(t)$, the coefficient of $\theta$ is set proportional to $\zeta(t)$ so that $\zeta(t)$ arises as the covariance between $x(t)$ and $\theta$.
Additionally, the covariance function of $\epsilon$ is calibrated in such a way that $\xi(t,t')$ obtains as the covariance between $x(t)$ and $x(t')$.\footnote{The construction bears similarity to canonical information structures in \cite{miyashita_id}.}
The obedience condition ensures that $\{x(t)\}_{t \in T}$ constitutes an equilibrium, while the positivity ensures the existence of the Gaussian process $\epsilon$.
In addition, as each $x(t)$ obtains as a linear combination of normally distributed random variables, the resulting information environment is Gaussian.

\subsection{Uniqueness of Equilibrium}
\label{sec_eq_unique}

Now, we present our main result in this section, which shows that (R1) is sufficient, while (R2) is necessary, for the equilibrium uniqueness to hold in {\it all} information environments.

\begin{proposition} \label{prop_unique}
Suppose that {\rm (R1)} is satisfied.
In any information environment, if $f$ and $g$ are two equilibria such that $(s,t) \mapsto \E \qty[f(s) g(t)]$ is jointly measurable, then we have $\E \qty|f(t)-g(t)|^2 = 0$ for all $t \in T$.
On the other hand, for any basic game $\langle \{\theta(t)\}_{t \in T}, R \rangle $ wherein {\rm (R2)} is violated, there exists a signal structure $\{x(t)\}_{t \in T}$ such that the game $\langle \{\theta(t),x(t)\}_{t \in T},R \rangle$ admits either no equilibrium or a continuum of equilibria.
\end{proposition}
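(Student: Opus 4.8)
For the sufficiency direction I would study the difference $h \coloneqq f-g$. Subtracting the two copies of the best-response identity \eqref{br} and using linearity of conditional expectation and of the Pettis integral gives the homogeneous equation $h(t)=\E_t\!\big[\int R(t,t')h(t')\,\d\nu(t')\big]$ for every $t$; write $(\mathbf{R}h)(t)$ for the local aggregate of $h$. Since conditional expectation is the orthogonal projection onto $x(t)$-measurable variables and $h(t)$ is $x(t)$-measurable, taking the inner product with $h(t)$ yields the pointwise identity
\begin{align*}
\|h(t)\|^2 = \langle h(t),\, \E_t[(\mathbf{R}h)(t)]\rangle = \Big\langle h(t),\, \int R(t,t') h(t')\,\d\nu(t')\Big\rangle = \int R(t,t')\,\langle h(t),h(t')\rangle\,\d\nu(t'),
\end{align*}
where the second equality drops $\E_t$ by $x(t)$-measurability and the third is the defining Pettis property. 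Here the joint-measurability hypothesis on $(s,t)\mapsto\E[f(s)g(t)]$, together with (Q1) for $f,g$, ensures the Gram kernel $K(s,t)\coloneqq\langle h(s),h(t)\rangle$ is jointly measurable and lies in $\calL_2(\nu\otimes\nu)$, so the ensuing integrations are legitimate.

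Integrating over $t$ gives $\int\|h(t)\|^2\,\d\nu(t)=\iint R(t,t')K(t,t')\,\d\nu(t')\,\d\nu(t)$. I would then diagonalize the positive-semidefinite kernel: picking an orthonormal basis $\{e_k\}$ of $\overline{\Span}\{h(t):t\in T\}\subseteq X$ and setting $c_k(t)\coloneqq\langle h(t),e_k\rangle$, we have $K(s,t)=\sum_k c_k(s)c_k(t)$, whence $\sum_k\|c_k\|^2_{\calL_2(\nu)}=\sum_k\langle c_k,\mathbf{R}c_k\rangle$. Because the kernel $R$ is square-integrable, $\mathbf{R}$ is Hilbert--Schmidt, hence compact; (R1) together with compactness gives $\gamma\coloneqq\sup\{\langle\phi,\mathbf{R}\phi\rangle:\|\phi\|_{\calL_2(\nu)}=1\}<1$ (the supremum is the top eigenvalue of $\tfrac12(\mathbf{R}+\mathbf{R}^\ast)$ when positive, which is attained and therefore strictly below $1$). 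Thus $\sum_k\|c_k\|^2\le\gamma\sum_k\|c_k\|^2$, forcing every $c_k=0$, i.e.\ $\int\|h(t)\|^2\,\d\nu(t)=0$. Finally, since $h$ vanishes $\nu$-a.e., its local aggregate is the zero element of $X$ for every $t$, and the homogeneous equation upgrades this to $\E|h(t)|^2=0$ for \emph{all} $t$.

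For the necessity direction, fix a real eigenvalue $\lambda\ge1$ of $\mathbf{R}$ with eigenfunction $\phi\in\calL_2(\nu)$, $\phi\neq0$. The plan is to exhibit one signal structure carrying a nonzero solution of the homogeneous equation, so that equilibria, if any exist, cannot be unique. I would let $x(t)=W+\eta(t)$, where $W\sim\calN(0,1)$ is a common ``sunspot'' independent of the state and $\eta$ is a Gaussian process independent of $(W,\theta)$ with covariance $\Cov[\eta(s),\eta(t)]=(\lambda-1)\1_{\{s=t\}}$ (a positive-semidefinite kernel, so such a process exists by Theorem 12.1.3 of \cite{dudley}). Set $h(t)\coloneqq\phi(t)x(t)$. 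Its local aggregate splits into $W\cdot(\mathbf{R}\phi)(t)=\lambda\phi(t)W$ plus the idiosyncratic term $\int R(t,t')\phi(t')\eta(t')\,\d\nu(t')$, which has mean zero and, by \eqref{fubini_var}, variance $\iint R(t,s)\phi(s)R(t,s')\phi(s')(\lambda-1)\1_{\{s=s'\}}\,\d\nu(s)\,\d\nu(s')=0$; hence it vanishes in $X$ and $(\mathbf{R}h)(t)=\lambda\phi(t)W$. Applying the Gaussian formula $\E[W\mid x(t)]=x(t)/(1+(\lambda-1))=x(t)/\lambda$ gives $\E_t[(\mathbf{R}h)(t)]=\lambda\phi(t)\cdot x(t)/\lambda=h(t)$, so $h$ is a nonzero regular profile solving the homogeneous equation.

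To close the dichotomy I would invoke the first moment restriction \eqref{moment1}: the mean $\mu_f\coloneqq\E[f(\cdot)]$ of any equilibrium must satisfy the Fredholm equation $(I-\mathbf{R})\mu_f=\E[\theta(\cdot)]$. If this has no $\calL_2$ solution, no equilibrium exists. If it has a solution $\mu$, then the deterministic profile $f_0\equiv\mu$ is itself an equilibrium under the constructed signals (each $\E_t[\theta]$ reduces to $\E[\theta]$ because $x(t)\perp\theta$), and by linearity $f_0+\beta h$ is an equilibrium for every $\beta\in\R$; these are regular, pairwise distinct since $\|h(t)\|^2=\lambda\phi(t)^2\not\equiv0$, and the cross terms in (Q1) vanish because $\E[h(t)]=0$. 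Either way the equilibrium set is empty or a continuum. The step I expect to be the main obstacle is the vanishing of the idiosyncratic aggregate: it is exactly the Pettis-integral law of large numbers and relies on $\nu$ being non-atomic (so that the diagonal $\{s=t\}$ is $\nu\otimes\nu$-null). For an atomic set of agents the self-interaction $R(t,t)h(t)$ is fully observed by agent $t$ and cannot be dampened by conditioning, so this construction must be adapted in that case, which is where I anticipate the most delicate work.
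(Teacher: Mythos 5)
Your \emph{necessity} argument is essentially the paper's own construction: the paper takes a Gaussian signal structure with $\Var[x(t)]=1$ and $\Cov[x(s),x(t)]=1/\lambda$ off the diagonal, perturbs a putative equilibrium by $\phi(t)x(t)$, and uses $\E_t[x(t')]=x(t)/\lambda$ together with the eigenequation to verify the perturbation solves the homogeneous best-response equation; your common-shock-plus-idiosyncratic-noise realization $x(t)=W+\eta(t)$ is the same object up to rescaling ($\Var[x(t)]=\lambda$, $\Cov=1$), and your explicit resolution of the existence dichotomy via the Fredholm equation $(I-\mathbf{R})\mu=\E[\theta(\cdot)]$ is a small but genuine sharpening (the paper only argues conditionally on existence). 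Your closing caveat about atoms is well taken, but note that the paper's own computation also uses $\E_t[x(t')]=x(t)/\lambda$ inside the integral over $t'$, which likewise presupposes that singletons are $\nu$-null; so this is not a defect of your version relative to the paper's.

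On \emph{sufficiency}, your route is genuinely different in execution and cleaner where it works. The paper keeps the identity $\sigma(t)=\int R(t,t')\rho(t,t')\sigma(t')\,\d\nu(t')$ pointwise, views $Q=R\rho$ as a Hadamard product, and proves an infinite-dimensional Schur product theorem (Lemma~\ref{lem_eigen}) showing $1$ is not an eigenvalue of $\mathbf{Q}$; that lemma in turn rests on the measurable Mercer representation of \cite{steinwart2012scovel}. You instead integrate the energy identity and bound $\iint R(t,t')\langle h(t),h(t')\rangle$ directly by $\gamma\int\|h(t)\|^2\d\nu(t)$ with $\gamma=\sup\Phi(\mathbf{R})<1$ (your attainment argument via compactness of the symmetrization is correct). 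The one step that would fail in general is the interchange hidden in ``$K(s,t)=\sum_k c_k(s)c_k(t)$, whence $\sum_k\|c_k\|^2_{\calL_2(\nu)}=\sum_k\langle c_k,\mathbf{R}c_k\rangle$'': an orthonormal basis of $\overline{\Span}\{h(t):t\in T\}$ may be uncountable (joint measurability of the Gram kernel does not force separability --- the kernel $\1_{\{s=t\}}$ of an uncountable orthonormal family is jointly measurable and satisfies (Q1)--(Q2)), and then the partial sums form a net, dominated convergence does not apply, and both $\int\|h\|^2\d\nu=\sum_k\|c_k\|^2_{\calL_2(\nu)}$ and the term-by-term expansion of $\iint RK$ can fail. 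This is precisely the difficulty the paper's Lemmas~\ref{lem_mercer_diag}--\ref{lem_mercer} are there to resolve: they replace your pointwise Gram expansion by the \emph{countable}, $\nu\otimes\nu$-a.e.\ Mercer expansion attached to the compact operator $\mathbf{K}$, with the diagonal bounds needed to dominate the partial sums. If you either assume separability of the relevant span or substitute the operator-spectral/Mercer decomposition for your basis expansion, your argument closes; as written, that interchange is the gap.
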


To prove the sufficiency part, we scrutinize the first and second moments of the process $h=f-g$, obtained as the difference between two equilibria.
From \eqref{moment1}, notice that the mean of $h$ must be zero.
Having confirmed this, we then show that $\Sd \qty[h]$ is zero to conclude that $f$ coincides with $g$ in mean square.
To this end, we use \eqref{moment2} to deduce that that the second moment of $h$ satisfies
\begin{align*}
\Sd \qty[h(t)] = \int \underbrace{\qty(R(t,t')\Corr\qty[h(t),h(t')])}_{\eqqcolon \ Q(t,t')} \cdot \Sd \qty[h(t')] \d \nu(t'), \quad \forall t \in T,
\end{align*}
where the joint measurability of $(t,t) \mapsto \E \qty[f(t) g(t')]$, as assumed in Proposition~\ref{prop_unique}, has a technical role in ensuring the standard deviations and correlation coefficients of $h$ to be measurable in agents' indexes.

The above equation suggests that the mapping $t \mapsto \Sd \qty[h(t)]$ must be a solution to the functional equation, expressed as $\phi = \mathbf{Q} \phi$, where $\mathbf{Q}$ is the integral operator whose integral kernel $Q:T^2 \to \R$ is given as the pointwise (or Hadamard) product of two bivariate functions.
Then, our remaining task will be to demonstrate that this operator does not have $1$ as its eigenvalue, thereby concluding that the equation admits only a trivial solution.
To achieve this step, we establish an infinite-dimensional analog of Schur's product theorem (Lemma~\ref{lem_eigen}), which derives an upper bound on the eigenvalues of $\mathbf{Q}$, by utilizing a recent extension of the classical Mercer representation theorem in functional analysis due to \cite{steinwart2012scovel}.


While there is no existence result that holds under as general presumptions as those of Proposition~\ref{prop_unique}, the existence of an equilibrium can be ensured with some additional assumptions.
For instance, \cite{ui2016} and \cite{ozdaglar2019} establish the unique existence of an equilibrium in finite network games by characterizing the equilibrium as a solution to a variational inequality, defined on the Hilbert space of strategy profiles.
However, this proof strategy cannot be directly applied when the set of agents is uncountable, as spaces of Pettis integrable processes do not possess completeness.\footnote{See the survey by \cite{musial2002}.}

Alternatively, in a continuum population LQG model, Theorem 1 of \cite{miyashita_id} establishes that for ``almost every'' combination of a payoff structure and a distribution of Gaussian signals, an equilibrium exists and is unique, while the uniqueness is restricted within the class of linear strategy profiles.\footnote{\cite{lmo2018} obtain similar generic results in linear-quadratic games with finite players.}
It should be remarked that our necessity assertion does not contradict this theorem: The signal structure constructed in our proof simply corresponds to a knife-edge case in which the generic well-posedness result does not apply.
On the other hand, our sufficiency assertion provides a complementary implication to the theorem by establishing a stronger uniqueness property.
Specifically, by combining the two findings, we can derive the following conclusion: In every Gaussian information environment, a linear equilibrium exists for almost every payoff structure, and this equilibrium is unique, including non-linear strategy profiles, provided that the payoff structure satisfies (R1).


\section{Application to Information Design}
\label{sec_info}

Building upon the model of Section \ref{sec_eq}, we address information design problems \citep{kg2011}, where the information environment is endogenously selected by an extra player of the game, named an \emph{(information) designer}.
The designer has her own objective function, which is quadratically dependent on the equilibrium action-state joint distribution.
Her task is to select an optimal distribution of agents' signals, aiming to maximize the expected objective by anticipating that the agents will play an equilibrium under the chosen signal structure.

To gain tractability, we revert to a symmetric setting like Example~\ref{ex_bm}, where agents are uniformly distributed on the closed interval $T = [0,1]$.
We assume a symmetric payoff structure $R(s,t) \equiv r$ and a common payoff state $\theta(t) \equiv \theta$ for all $s,t \in [0,1]$.
Building upon the previous analysis, we assume $r < 1$ so that the uniqueness of an equilibrium is guaranteed, irrespective of the designer's choice of signal structures.
We normalize the common state $\theta$ to have $\E \qty[\theta] = 0$ and $\Var \qty[\theta] = 1$, as this adjustment does not affect subsequent analyses, while the distribution of $\theta$ need not be normal.
In what follows, we treat a basic game, $r < 1$ and the distribution of $\theta$, as exogenously given.
In contrast, a signal structure, $\{x(t)\}_{t \in T}$, is endogenously chosen by the designer.

\subsection{Designer's Problem}

We postulate that the designer's objective depends on up to the second moment of an equilibrium acton-state joint distribution.
As such, since the mean action levels are invariant across all signal structures, it is without loss of generality for the designer's problem to assume that her objective is determined by the second moment alone.
Then, given any joint distribution of equilibrium actions $\{f(t)\}_{t \in T}$ and the state $\theta$, let us specify the designer's expected objective in the following form:
\begin{align} \label{obj}
u \int_0^1 \int_0^1 \Cov \qty[f(s),f(t)] \d s \d t + v \int_0^1 \Var \qty[f(t)] \d t + w \int_0^1 \Cov \qty[f(t), \theta] \d t.
\end{align}
Here, $u,v,w \in \R$ are given parameters that, respectively, measure the marginal contributions of the correlation between different agents' actions, the variance of an individual action, and the correlation between an individual action and the state.
Given the payoff symmetry on the agents' side, we assume that these parameters do not depend on their indexes.
Hence, all agents are treated equally by the designer.

The designer's problem is to choose a signal structure $\{x(t)\}_{t \in T}$ in such a way of maximizing the expected objective, given as \eqref{obj}, subject to the constraint that $f$ forms an equilibrium in the induced game $\langle \theta, \{x(t)\}_{t \in T}, r \rangle$.\footnote{More precisely, the designer's task also includes constructing the underlying probability space on which $\theta$ and $\{x(t)\}_{t \in T}$ are modeled, while fixing the marginal over $\theta$. Specifically, the designer's choice procedure can be described as follows: Let $(\Omega_\theta,\Pi_\theta,\P_\theta)$ be a given probability space, on which the state $\theta$ is modeled. The designer prepares a probability space $(\Omega_\epsilon,\Pi_\epsilon,\P_\epsilon)$, on which any noise terms are modeled, as well as measure spaces $\calX(t)$, each of which serves as an output space of agent $t$'s signal. The grand probability space is constructed as $(\Omega,\Pi,\P) = (\Omega_\theta \times \Omega_\epsilon, \Pi_\theta \otimes \Pi_\epsilon, \P_\theta \otimes \P_\epsilon)$. Then, a signal structure is given as a collection $\{x(t)\}_{t \in T}$, each of which takes the form $x(t): \Omega \to \calX(t)$.}
Instead of modeling the choice of a continuum of random variables, however, our preceding results simply the problem by deducing an upper bound on the maximum value that the designer can achieve by employing an arbitrary signal structure.
Specifically, Proposition~\ref{prop_moment} implies that any equilibrium necessarily satisfies the moment restriction \eqref{moment2}, as well as the positive semidefiniteness of covariance.
Therefore, we can conclude that the designer's value is \emph{at most} the value of the following optimization problem:
\begin{align} \label{problem}
V^* \coloneqq \quad \sup_{(\xi,\zeta)} \quad \underbrace{u \int_0^1 \int_0^1 \xi(s,t) \d s \d t + v \int_0^1 \xi(t,t) \d t + w \int_0^1 \zeta(t) \d t}_{\eqqcolon \ V(\xi,\zeta)},
\end{align}
where the choice of $(\xi,\zeta)$ is abide by the obedience and positivity constraints in Definition~\ref{def_moment}.

Conversely, if $\theta$ is normally distributed, then Proposition~\ref{prop_moment2} implies that this upper bound on the designer's expected objective is tight and can be achieved by using some Gaussian signal structure.
In fact, even without the normality of $\theta$, the main result in this section reveals that $V^*$ is achievable by a particularly simple information information structure, which selectively discloses the entire state realization only to the targeted set of agents, and nothing else to others.
In what follows, we simply refer to $V^*$ as the designer's value.
A signal structure is said to be \emph{globally optimal} if it induces an equilibrium that delivers $V^*$ to the designer.


\subsection{Targeted Disclosure}
\label{sec_tg}

Now, we focus on a special kind of signal structures and argue that the optimum within this subclass is globally optimal among all possible signal structures.

\begin{definition}
Given any Lebesgue measurable set of agents $M \subseteq [0,1]$, a signal structure $\{x(t)\}_{t \in T}$ is called the \emph{targeted disclosure to $M$} if
\begin{align*}
x(t) = \begin{cases}
\theta & \text{if} \quad t \in M, \\
\n &\text{if} \quad t \in [0,1] \setminus M,
\end{cases}
\end{align*}
where $\n$ is any deterministic random variable, indicating ``no information.''
\end{definition}

In words, targeted disclosure fully disclose the state realization (and nothing else) to all agents in the ``targeted set,'' while all others are informed of nothing.
These signal structures are simple enough to allow us to establish the existence of an equilibrium from scratch and derives its explicit form.
The next lemma further calculates the equilibrium moment that arises  under such a signal structure and indicates that it depends on the signal structure only through the Lebesgue measure of the targeted set.

\begin{lemma} \label{lem_tg_eq}
Let $M \subseteq [0,1]$ be any set of Lebesgue measure $m$.
There exists a unique equilibrium under the targeted disclosure to $M$, which is given as $f(t)= 0$ for $t \in [0,1] \setminus M$ and  $f(t) = \frac{\theta}{1-rm}$ for $t \in M$.
Consequently, we have
\begin{align*}
\Cov \qty[f(s),f(t)] = \frac{\1_{\qty{s,t \in M}}}{\qty(1-rm)^2}
\quad \text{and} \quad
\Cov \qty[f(t),\theta] = \frac{\1_{\qty{t \in M}}}{1-rm}, \quad \forall s,t \in [0,1].
\end{align*}
\end{lemma}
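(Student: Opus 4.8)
The plan is to exploit the fact that a constant payoff structure collapses the local aggregate to a single random variable, which forces any equilibrium to take at most two values. Since $R(s,t)\equiv r$, the local aggregate $F(t)=r\int_0^1 f(t')\,\d t' = rA$ does not depend on $t$, where $A$ denotes the Pettis integral of the profile $f$. Under the targeted disclosure, the conditional operator $\E_t$ equals $\E[\,\cdot\mid\theta]$ for $t\in M$, and the unconditional $\E[\,\cdot\,]$ for $t\notin M$ (since $x(t)=\n$ is deterministic there, conditioning on it is vacuous). Substituting these into the best-response identity \eqref{br} shows that any equilibrium is necessarily two-valued: it equals the single $\theta$-measurable random variable $g:=r\E[A\mid\theta]+\theta$ on $M$, and the single constant $c:=r\E[A]$ on $[0,1]\setminus M$.

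Next I would pin down $g$ and $c$. By linearity of the Pettis integral applied to this piecewise-constant profile, the aggregate is $A=mg+(1-m)c$. Taking the conditional expectation given $\theta$ and using that $g$ is $\theta$-measurable yields $\E[A\mid\theta]=mg+(1-m)c$, which combined with $g=r\E[A\mid\theta]+\theta$ gives $g=\frac{\theta+r(1-m)c}{1-rm}$. Taking unconditional expectations and invoking $\E[\theta]=0$ then reduces the constant equation $c=r\E[A]$ to $c\cdot\frac{1-r}{1-rm}=0$. Since $r<1$ and $m\in[0,1]$ force $1-rm>0$ and $1-r>0$, this forces $c=0$, whence $g=\frac{\theta}{1-rm}$. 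This single computation delivers both existence (the resulting profile indeed solves \eqref{br}) and uniqueness, so no separate appeal to Proposition~\ref{prop_unique} is needed, although its hypotheses could also be verified since the two-valued profile renders $(s,t)\mapsto\E[f(s)g(t)]$ a step function.

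It then remains to check regularity and to read off the moments. The candidate has mean zero, and $\E[f(s)f(t)]=\1_{\{s,t\in M\}}/(1-rm)^2$ is a jointly measurable step function with $\int_0^1\|f(t)\|^2\,\d t = m/(1-rm)^2<\infty$, so (Q1) and (Q2) hold and $f$ is a genuine regular equilibrium (Corollary~\ref{cor_rv} already secures Pettis integrability). Finally, substituting $\Var[\theta]=1$ gives $\Cov[f(s),f(t)]=\Var[g]\,\1_{\{s,t\in M\}}=\1_{\{s,t\in M\}}/(1-rm)^2$ and $\Cov[f(t),\theta]=\Cov[g,\theta]\,\1_{\{t\in M\}}=\1_{\{t\in M\}}/(1-rm)$, as claimed. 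The only delicate point is the handling of the aggregate, namely justifying $A=mg+(1-m)c$ and the interchange $\E[A\mid\theta]=mg+(1-m)c$; but both follow directly from the definition of the Pettis integral together with Propositions~\ref{prop_var} and~\ref{prop_cond}, since the profile is piecewise constant in the agent index.
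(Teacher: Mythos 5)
Your proof is correct, but it takes a genuinely different route from the paper's. The paper proceeds by verification: it takes the candidate profile as given, checks that it satisfies the best-response formula \eqref{br} for informed and uninformed agents separately, and then obtains uniqueness by invoking the general result of Proposition~\ref{prop_unique}, with a footnote devoted to verifying that any competing equilibrium $g$ satisfies the joint-measurability hypothesis $(s,t)\mapsto\E[f(s)g(t)]$ of that proposition. You instead derive the equilibrium from first principles: the constant payoff structure makes $F(t)=rA$ independent of $t$, the two conditioning regimes force any equilibrium to be two-valued ($g=r\E[A\mid\theta]+\theta$ on $M$, the constant $c=r\E[A]$ off $M$), and solving $A=mg+(1-m)c$ pins down $c=0$ and $g=\theta/(1-rm)$. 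This buys a self-contained and elementary uniqueness argument that does not lean on the RKHS/Mercer machinery behind Proposition~\ref{prop_unique} and dispenses with the measurability footnote, at the cost of a slightly longer computation; the paper's route is shorter given that the heavy lemma is already in place. Two small remarks: the "existence" half of your argument still requires running the algebra backwards to confirm the derived profile satisfies \eqref{br} and checking regularity, which you do address; and the interchange $\E[A\mid\theta]=mg+(1-m)c$ needs no appeal to Proposition~\ref{prop_cond} at all, since $A$ is computed exactly as the element $mg+(1-m)c$ of $X$ directly from the definition of the Pettis integral, after which conditioning is an operation on a single random variable.
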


By this lemma, given any targeted set $M$ of measure $m$, we can calculate the designer's objective as follows:
\begin{align} \label{prob_tg}
V_{\rm tg}(m) \coloneqq
\frac{\alpha m - \beta m^2}{(1-rm)^2} \quad \text{where} \quad \alpha \coloneqq v+w \quad \text{and} \quad \beta \coloneqq rw-u.
\end{align}
Within the subclass of targeted disclosure policies, therefore, a signal structure is optimal if it targets exactly the measure $m^*$ of agents, where $m^*$ is any value that maximizes the above univariate function.\footnote{This optimization emerges precisely as the limit of the maximization problem in \cite{miyashita_ui_lqgd}, which yields an optimal symmetric signal structure in the finite homogeneous agent model.
 As elaborated in Section~\ref{sec_sym}, the current continuum population model exhibits a close connection between targeted disclosure and symmetric signal structures.}
We define the designer's value of the optimal targeted disclosure by
\begin{align*}
V_{\rm tg}^* \coloneqq \max_{m \in [0,1]} V_{\rm tg} (m).
\end{align*}
By definition, it holds that $V^* \ge V^*_{\rm tg}$.

We refer to targeted disclosure as \emph{partial disclosure} when it entails an interior point $m \in (0,1)$.
Additionally, the cases of corner points, $m = 0$ and $m = 1$, are named \emph{no disclosure} and \emph{full disclosure}, respectively.
The next lemma shows that the parameters $\alpha$ and $\beta$ defined as in \eqref{prob_tg}, coupled with $r$, are sufficient to identify the optimal targeted disclosure policy within the subclass.
Specifically, the optimum can be determined by the following parametric conditions, exhausting all possibilities; see Figure \ref{fig_tg1} for graphical illustration.

\begin{enumerate}[\rm (T1).]
\item $\alpha \le 0$ and $\alpha \le \beta$.
\item $\alpha > 0$ and $\beta > (\frac{1+r}{2}) \cdot \alpha$
\item $\alpha \ge \beta$ and $\beta \le (\frac{1+r}{2}) \cdot \alpha$.
\end{enumerate}

\begin{remark}
While (T2) is disjoint from other conditions, (T1) and (T3) are not from each other since both can hold if (and only if) $\alpha = \beta \le 0$.
In this knife-edge case, both full disclosure and no disclosure can achieve $V_{\rm tg}^*$, and this case is in fact the only occasion of having multiple maximizers of $V_{\rm tg}$.
\end{remark}

\begin{lemma} \label{lem_tg}
Within the class of targeted disclosure, the optimum is characterized as follows:
\begin{enumerate}[\rm i).]
\item No disclosure, $m^*=0$, is optimal if and only if {\rm (T1)} holds.
\item Partial disclosure, $m^* \in (0,1)$, is optimal if and only if {\rm (T2)} holds, where the optimal choice of $m^*$ is given by
\begin{align} \label{x_int}
m^* = \frac{\alpha}{2\beta - r \alpha}.
\end{align}
\item Full disclosure, $m^*=1$, is optimal if and only if {\rm (T3)} holds.
\end{enumerate}
Consequently, the value of the optimal targeted disclosure is given as follows:
\begin{align*}
V_{\rm tg}^* =\begin{cases}
0 &\text{if \quad {\rm (T1)} holds}, \\
\alpha^2/(4\beta - 4r\alpha) &\text{if \quad {\rm (T2)} holds}, \\
(\alpha-\beta)/(1-r)^2 &\text{if \quad {\rm (T3)} holds}.
\end{cases}
\end{align*}
\end{lemma}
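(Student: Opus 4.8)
The plan is to maximize the univariate rational function $V_{\rm tg}(m) = (\alpha m - \beta m^2)/(1-rm)^2$ over $[0,1]$ by elementary calculus, exploiting the fact that its derivative has a remarkably simple sign. First I would record that the denominator is strictly positive on $[0,1]$: since $r < 1$, we have $rm \le \max\{0,r\} < 1$ for every $m \in [0,1]$, so $1 - rm > 0$. Differentiating by the quotient rule and cancelling the common factor $(1-rm) > 0$, the sign of $V_{\rm tg}'(m)$ coincides with that of
\begin{align*}
g(m) \coloneqq (\alpha - 2\beta m)(1-rm) + 2r(\alpha m - \beta m^2).
\end{align*}
Upon expanding, the quadratic terms cancel and $g(m) = \alpha - (2\beta - r\alpha)m$ is affine in $m$. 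Hence $V_{\rm tg}$ has at most one interior critical point, located at $m^* = \alpha/(2\beta - r\alpha)$, which already matches the formula claimed in (T2).

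Next, since $g$ is affine its sign on $[0,1]$ is governed entirely by its endpoint values $g(0) = \alpha$ and $g(1) = (1+r)\alpha - 2\beta$. I would split into the four sign patterns: (a) $g(0) > 0 > g(1)$ gives an interior peak, so $V_{\rm tg}$ rises then falls and $m^* \in (0,1)$ is the unique maximizer; (b) $g \ge 0$ throughout forces $V_{\rm tg}$ nondecreasing, with maximum at $m = 1$; (c) $g \le 0$ throughout forces $V_{\rm tg}$ nonincreasing, with maximum at $m = 0$; (d) $g(0) < 0 < g(1)$ makes $V_{\rm tg}$ U-shaped, so the maximum sits at whichever endpoint is larger. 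Reading off $g(0) = \alpha$ and $g(1)$ translates pattern (a) directly into (T2); I would then obtain the interior value by substituting $g(m^*) = 0$ into $V_{\rm tg}(m^*)$, which after simplification yields $\alpha^2/(4\beta - 4r\alpha)$.

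The main work, and the subtle point, lies in merging patterns (c) and (d) into the two corner conditions (T1) and (T3). In the U-shaped pattern (d) the optimum is decided by comparing $V_{\rm tg}(0) = 0$ with $V_{\rm tg}(1) = (\alpha-\beta)/(1-r)^2$, so it is the sign of $\alpha - \beta$ that enters — this is precisely why (T1) and (T3) carry the inequalities $\alpha \le \beta$ and $\alpha \ge \beta$ rather than being phrased purely through $g$. Using $r < 1$ (which gives $\tfrac{1+r}{2}\alpha \le \alpha$ when $\alpha \ge 0$ and $\tfrac{1+r}{2}\alpha \ge \alpha$ when $\alpha \le 0$, the ordering depending on $\sgn\alpha$), I would verify that (T1), (T2), (T3) exhaust the parameter plane and are pairwise disjoint except on the knife-edge $\alpha = \beta \le 0$, where patterns (c) and (d) coincide and both corners tie at value $0$. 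Exhaustiveness upgrades each implication to an ``if and only if'': if one particular corner (or the interior) is optimal, the parameters must lie in the corresponding region, since the other two regions force a distinct and, away from the knife-edge, strictly better maximizer. Finally I would collect the three optimal values $0$, $\alpha^2/(4\beta - 4r\alpha)$, and $(\alpha-\beta)/(1-r)^2$ from cases (T1), (T2), (T3) respectively.
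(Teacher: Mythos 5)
Your proposal is correct and follows essentially the same route as the paper: compute $V_{\rm tg}'$, observe that after cancelling the positive factor $(1-rm)^3$ its sign is governed by the affine function $\alpha - (2\beta - r\alpha)m$, treat the interior-peak case as (T2) via the first-order condition, and otherwise reduce to comparing the corner values $V_{\rm tg}(0)=0$ and $V_{\rm tg}(1)=(\alpha-\beta)/(1-r)^2$, using $r<1$ to translate the endpoint-sign conditions into (T1)/(T3). The paper simply lumps your patterns (b), (c), (d) into the single statement that the maximum is then a corner solution, so your four-way case split is a slightly more explicit presentation of the identical argument.
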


The optimal targeted disclosure can be conveniently visualized by using diagrams, as depicted in Figure~\ref{fig_tg1}, where $\alpha$ and $\beta$ are taken on the horizontal and vertical axes, respectively.
The entire space is divided into three regions based on our parametric conditions.
No disclosure is optimal when $\alpha$ is small enough to be bounded by both $0$ and $\beta$.
Recalling that $\alpha = v+w$, this case can occur when the designer aims to minimize individual action volatility and correlation with the state
On the other hand, whenever no disclosure is suboptimal, full disclosure is optimal if $\beta$ is small enough for the point $(\alpha,\beta)$ to lie below the ray of slope $\frac{1-r}{2}$.
Recalling that $\beta = rw-u$, this case can occur when $u$ is large, meaning that the designer has an objective of inducing large correlations between different agents' actions.
The boundary of full and partial disclosure is determined by $r$, which reflects each agent's coordination motives.
As this parameter decreases, the regions are separated by a ray of smaller slope.

\begin{figure}[t]
\begin{center}
\includegraphics[width=15cm]{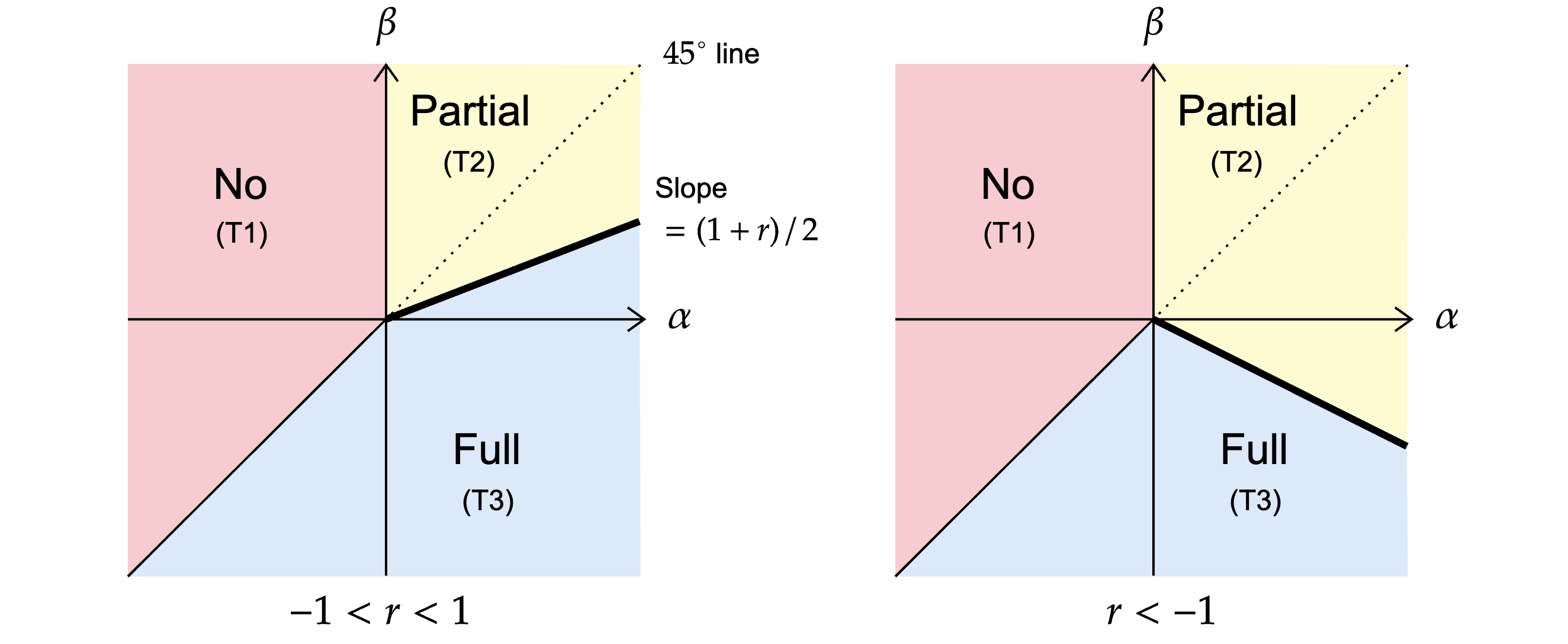}
\caption{The set of points $(\alpha,\beta)$ satisfying each of (T1), (T2), and (T3) are displayed. The left panel corresponds to the case of small strategic interactions, $r \in (-1,1)$, resulting in an upward-sloping ray that divides (T2) and (T3). In contrast, the right panel considers the case of large strategic substitutes, $r < -1$, resulting in a downward-sloping ray.}
\label{fig_tg1}
\end{center}
\end{figure}

Now, we are ready to present our main result in this section, which shows that the optimal targeted disclosure, given as in Lemma~\ref{lem_tg}, is actually globally optimal among all possible signal structures.

\begin{proposition} \label{prop_tg}
It holds that $V^* = V_{\rm tg}^*$.
Consequently, the optimal targeted disclosure in Lemma~\ref{lem_tg} is globally optimal.
\end{proposition}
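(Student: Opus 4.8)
The plan is to prove $V^{*}\le V_{\rm tg}^{*}$, since the reverse inequality $V^{*}\ge V_{\rm tg}^{*}$ already holds by definition and would then force equality. The key structural observation is that the objective $V(\xi,\zeta)$ in \eqref{problem} depends on the infinite-dimensional variable $(\xi,\zeta)$ only through the three scalar aggregates
\[
A \coloneqq \int_0^1\!\!\int_0^1 \xi(s,t)\,\d s\,\d t,\qquad B\coloneqq \int_0^1 \xi(t,t)\,\d t,\qquad C\coloneqq \int_0^1 \zeta(t)\,\d t,
\]
so that $V(\xi,\zeta)=uA+vB+wC$. First I would integrate the obedience constraint of Definition~\ref{def_moment} over $t\in[0,1]$ (recall $R\equiv r$ here) to obtain the single linear relation $C=B-rA$. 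Substituting this and using $\alpha=v+w$, $\beta=rw-u$ collapses the objective to $V=\alpha B-\beta A$, eliminating $C$ entirely.

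Next I would extract from the positivity constraint the feasible region for $(A,B)$. The upper-left block of the matrix in Definition~\ref{def_moment} being positive semidefinite says exactly that $\xi$ is a positive-semidefinite kernel; applying the Mercer-type representation developed in Appendix~\ref{app_kernel}, I would write $\xi(s,t)=\sum_k\lambda_k e_k(s)e_k(t)$ with $\lambda_k\ge 0$ and $\{e_k\}$ orthonormal in $L^2(\nu)$, so that $B=\sum_k\lambda_k$ and $A=\sum_k\lambda_k\langle\1,e_k\rangle^2$. Then Bessel's inequality $\sum_k\langle\1,e_k\rangle^2\le\|\1\|^2=\nu([0,1])=1$ yields $0\le A\le B$. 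Including the state, the full positivity (a Schur-complement argument, or equivalently realizing $(\xi,\zeta)$ as the covariance of a Gaussian process via Theorem 12.1.3 of \cite{dudley} and invoking \eqref{fubini_var}) gives $C^2\le A\cdot\Var[\theta]=A$. With $C=B-rA$ this reads $(B-rA)^2\le A$, i.e.\ $A\le B\le rA+\sqrt{A}$, a region that is nonempty precisely for $A\in[0,(1-r)^{-2}]$.

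The problem thus reduces to maximizing the linear functional $\alpha B-\beta A$ over the planar region $\calR=\{(A,B):A\ge 0,\ A\le B\le rA+\sqrt{A}\}$. Using Lemma~\ref{lem_tg_eq}, I would note that targeted disclosure to a set of measure $m$ produces exactly $A=m^2/(1-rm)^2$, $B=m/(1-rm)^2$, $C=m/(1-rm)$, which satisfy $B=rA+\sqrt{A}$ and $C=\sqrt{A}$; hence the \emph{upper} boundary of $\calR$ is precisely the targeted-disclosure locus, traced bijectively as $m$ runs over $[0,1]$, while its two endpoints ($m=0,1$, where $B=A$) are the only attainable points of the lower boundary $B=A$. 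Since a linear functional attains its maximum on $\calR$ at the top of each vertical slice when $\alpha\ge 0$ (i.e.\ on $B=rA+\sqrt{A}$) and at a corner $A\in\{0,(1-r)^{-2}\}$ when $\alpha\le 0$, in every case the maximizer lies on the targeted-disclosure locus, where a direct substitution shows the value equals $V_{\rm tg}(m)=\alpha B-\beta A$. Therefore the relaxed maximum equals $\max_{m\in[0,1]}V_{\rm tg}(m)=V_{\rm tg}^{*}$. Because every feasible $(\xi,\zeta)$ obeys the derived constraints, $V^{*}\le V_{\rm tg}^{*}$, and combined with $V^{*}\ge V_{\rm tg}^{*}$ this gives $V^{*}=V_{\rm tg}^{*}$; the optimal targeted disclosure of Lemma~\ref{lem_tg} is therefore globally optimal.

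The step I expect to be the main obstacle is the passage from the pointwise positivity condition—stated in Definition~\ref{def_moment} only for arbitrary finite collections of agents—to the operator-level inequality $A\le B$ comparing the double integral with the ``trace'' $\int_0^1\xi(t,t)\,\d t$. This is where the square-integrability hypotheses and the Mercer/RKHS machinery of Appendix~\ref{app_kernel} are essential: one must guarantee that $\xi$ induces a genuine trace-class positive operator on $L^2(\nu)$ for which $\int_0^1\xi(t,t)\,\d t$ equals the sum of its eigenvalues and the spectral expansion converges in the appropriate sense. The remaining ingredients—the Cauchy--Schwarz bound $C^2\le A$ and the two-variable optimization over $\calR$—are routine by comparison.
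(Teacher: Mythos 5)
Your proposal is correct, and it reorganizes the argument in a genuinely different way from the paper. The paper first proves Lemma~\ref{lem_bound}, which in your notation is exactly the system $C^2 \le A \le \min\{B,\,(1-r)^{-2}\}$, and then treats the three parameter regimes (T1)--(T3) separately, each time rewriting $V(\xi,\zeta)$ so that a nonnegative residual built from the Schur-complement kernel $\kappa(s,t)=\xi(s,t)-\zeta(s)\zeta(t)$ of Lemma~\ref{lem_h} can be discarded. You instead observe that after integrating obedience ($C=B-rA$) the whole problem collapses to maximizing the linear functional $\alpha B-\beta A$ over the planar region $A\le B\le rA+\sqrt{A}$, whose upper boundary is exactly the targeted-disclosure locus and whose two corners are the $m\in\{0,1\}$ points; the case analysis then reduces to the observation that a linear functional on this convex region is maximized on the upper boundary when $\alpha>0$ and at a corner when $\alpha\le 0$. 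This is arguably cleaner: it explains \emph{why} targeted disclosure is optimal (it sweeps out the entire relevant boundary of the reduced feasible set), and it obtains $A\le(1-r)^{-2}$ in one line from $A\le B\le rA+\sqrt{A}$ rather than via the paper's separate $r\ge 0$ and $r<0$ arguments in Lemma~\ref{lem_bound}. Two small remarks. First, you flag $A\le B$ as the main obstacle and invoke the Mercer/RKHS machinery; this works, but note that for general measurable kernels Lemma~\ref{lem_mercer_diag} only gives $B\ge\sum_k\lambda_k$ rather than equality (which is still the direction you need), and the paper gets the same inequality more cheaply from the pointwise Cauchy--Schwarz bound \eqref{k_cs} together with Jensen, so no spectral theory is required at that step. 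Second, $C^2\le A$ is exactly $\int\int\kappa\,\d\nu\,\d\nu\ge 0$ for the Schur-complement kernel, via \eqref{h_psd} with a constant test function---the same ingredient the paper uses, so your two proposed routes to that inequality are both sound.
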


The proof requires separate treatments of three cases, each corresponding to the parametric conditions considered in Lemma~\ref{lem_tg}.
By using properties of positive semidefinite kernels, we begin by providing an auxiliary result (Lemma~\ref{lem_bound}) that clarifies the joint implications of the two key constraints---obedience and positivity---in the form of inequalities imposed on the aggregates of $\xi$ and $\zeta$.
Then, given an arbitrary equilibrium moment $(\xi,\zeta)$, we show that $V_{\rm tg}^*$ serves as an upper bound on $V(\xi,\zeta)$ by identifying binding inequality constraints in each case.

To apply our findings, the next example considers the designer's problem of maximizing a convex combination of producer surplus and consumer surplus in Cournot games with linear demand functions as in \cite{vives1984, vives1999}.

\begin{figure}[t]
\begin{center}
\includegraphics[width=15cm]{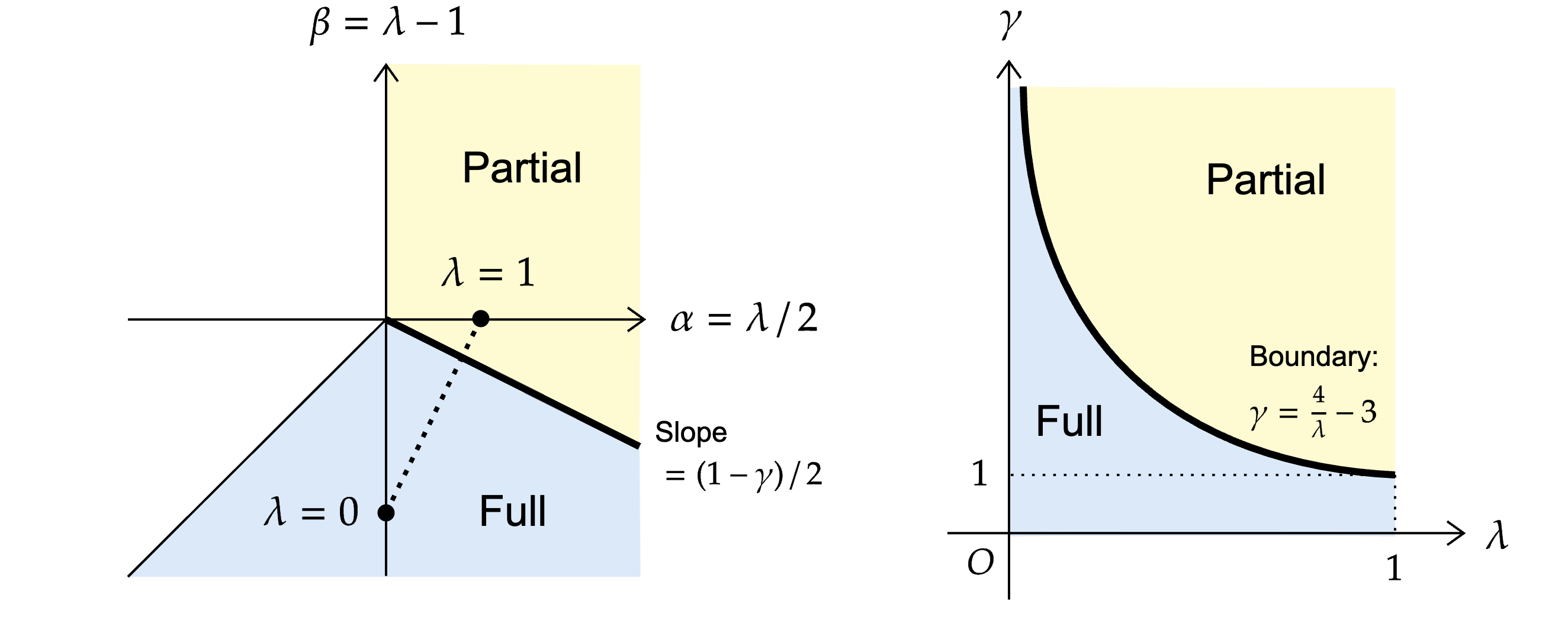}
\caption{In the left panel, the dashed line represents $(\alpha,\beta)$ in Example \ref{ex_market} for different values of $\lambda$. Full disclosure (resp.\ partial disclosure) is optimal if and only if $(\alpha,\beta)$ lies below (resp.\ above) the ray of slope $\frac{1-\gamma}{2}$. The right panel displays the region of $(\lambda,\gamma)$ for which each disclosure policy is optimal.}
\label{fig_tg2}
\end{center}
\end{figure}

\begin{example} \label{ex_market}
\cite{bm2013} and \cite{ui2015} explore a variant of Cournot games in \cite{vives1984, vives1999}, where each firm $i \in [0,1]$ supplies $a_i$ unit of a homogeneous product.
In this setup, the inverse demand function is a linear function of the total supply, given as $\theta - \gamma \int_0^1 a_j \d j$, where $\theta$ represents a random intercept and $\gamma > 0$ is a deterministic slope.
The cost to firm $i$ for producing $a_i$ units is specified as $a_i^2/2$.
These assumptions result in a quadratic and concave profit function for firm $i$, given as
\begin{align*}
\qty(\theta - \gamma \int_0^1 a_j \d j) \cdot a_i - \frac{a_i^2}{2}.
\end{align*}
As a consequence, the first-order condition yields the best-response strategy for firm $i$ as
\begin{align*}
a_i = \E_i [\theta] -\gamma \E_i \qty[\int_0^1 a_j \d j].
\end{align*}
This formulation aligns as a special case of the model in this section, setting $r = - \gamma$.

Now, producer surplus (PS) is defined as the aggregated profit of the firms, which can be calculated in expectation by using Proposition~\ref{prop_var} as follows:
\begin{align*}
{\rm PS} = - \gamma \int_0^1 \int_0^1 \Cov \qty[a_i,a_j] \d i \d j - \frac{1}{2} \int_0^1 \Var \qty[a_i] \d i + \int_0^1 \Cov \qty[a_i,\theta] \d i.
\end{align*}
Additionally, in the context of linear demand, consumer surplus (CS) can be defined as the square of the total supply, resulting in the following expectation:
\begin{align*}
{\rm CS} = \int_0^1 \int_0^1 \Cov \qty[a_i,a_j] \d i \d j.
\end{align*}
In their work, \cite{bm2013} identify an optimal signal structure that maximizes PS within the symmetric and Gaussian class of Example~\ref{ex_bm}.
Indeed, their signal structure can be certified as globally optimal using our Proposition~\ref{prop_tg}.
Furthermore, their analysis can be extended to consider the designer, who is concerned with both PS and CS.
Specifically, for $\lambda \in [0,1]$, let the designer's objective be given as  a convex combination of PS and CS,
\begin{align*}
V &= \lambda \cdot {\rm PS} + (1-\lambda) \cdot {\rm CS} \\
&= \underbrace{(1-\lambda-\lambda \gamma)}_{= \ u} \int_0^1 \int_0^1 \Cov \qty[a_i,a_j] \d i \d j
+ \underbrace{\qty(-\frac{\lambda}{2})}_{= \ v} \int_0^1 \Var \qty[a_i] \d i + \underbrace{\lambda}_{= \ w} \int_0^1 \Cov \qty[a_i,\theta] \d i,
\end{align*}
Notice that when $\lambda = 1/2$, maximizing $V$ is equivalent to maximizing total surplus, defined as the sum of PS and CS.

Given the above specification of $V$, we can calculate $\alpha = \lambda/2$ and $\beta = - (1-\lambda)$.
Then, by substituting these into Lemma~\ref{lem_tg}, we can determine an optimal targeted disclosure as a function of $\lambda$ and $\gamma$, which can be illustrated as in Figure~\ref{fig_tg2}.
Since $\alpha > \beta$, no disclosure can never be optimal.
In contrast, full disclosure is optimal if and only if $\beta \le (\frac{1+r}{2}) \cdot \alpha$, or equivalently,
\begin{align} \label{cond_market}
\gamma \le \frac{4}{\lambda} - 3.
\end{align}
Economically, this condition holds when the demand curve is steep enough to have a slope $\gamma$ exceeding a certain threshold, which is decreasing in $\lambda$, reflecting the weight on producer surplus (PS) in the designer's objective. 
In particular, if the designer only values CS (i.e., $\lambda = 0$), full disclosure is always optimal, regardless of the slope of the demand curve.
Alternatively, if the demand is insensitive to price (i.e., $\gamma \le 1$), full disclosure is always optimal, regardless of the designer's objective.
On the other hand, when \eqref{cond_market} is not satisfied, it is optimal to disclose information only to targeted firms of measure $m^* = \frac{\lambda}{\lambda \gamma-4(1-\lambda)}$.
Lastly, Proposition~\ref{prop_tg} ensures that these are globally optimal among all signal structures, including non-targeted, non-symmetric, and non-Gaussian ones.
\hfill $\triangle$
\end{example}


\subsection{Symmetric Disclosure}
\label{sec_sym}

It should be noted that targeted disclosure is not uniquely globally optimal, but there are other classes of signal structures that can achieve the designer's value as well.
One such alternative is a symmetric signal structure, which generates identically distributed signals for all agents.
In conjunction with the symmetry of the basic game, any symmetric signal structure induces a symmetric equilibrium moment, where $\zeta$ is constant, and $\xi$ is constant on the on-diagonal and off-diagonal subsets of $[0,1]^2$, respectively.
Formally, we define an equilibrium moment $(\xi,\zeta)$ as \emph{symmetric} if $\bar{\xi}_1 \equiv \xi(t,t)$, $\bar{\xi}_2 \equiv \xi(s,t)$, and $\bar{\zeta} \equiv \zeta(t)$ hold for all $s,t \in [0,1]$ with $s \neq t$.
For a symmetric equilibrium moment, its value is denoted as $V(\xi,\zeta) = V_{\rm sym} (\bar{\xi}_1,\bar{\xi}_2,\bar{\zeta})$.
We denote $V_{\rm sym}^*$ to express the maximum value across all symmetric equilibrium moments.

Due to the BNE-BCE equivalence result of \cite{bm2013}, the designer can implement any symmetric equilibrium moment by using a symmetric signal structure.
The next proposition explicitly constructs such a signal structure for a special kind of symmetric equilibrium moment, indexed by $m \in [0,1]$.
Then, we show that the value of the corresponding symmetric equilibrium moment coincides exactly with $V_{\rm tg} (m)$, the value attained by the targeted disclosure entailing the informed agents of measure $m$.
By means of this observation, and by Proposition~\ref{prop_tg}, we can conclude that there always exists a globally optimal symmetric signal structure.

\begin{proposition} \label{prop_sym}
Given any $m \in [0,1]$, we let
\begin{align*}
\bar{\xi}_1= \frac{m}{(1-rm)^2}, \quad \bar{\xi}_2 = \frac{m^2}{(1-rm)^2}, \quad \bar{\zeta}  = \frac{m}{1-rm}.
\end{align*}
Then, $(\bar{\xi}_1,\bar{\xi}_2,\bar{\zeta})$ is a symmetric equilibrium moment such that $V_{\rm sym}(\bar{\xi}_1,\bar{\xi}_2,\bar{\zeta}) = V_{\rm tg}(m)$.
Consequently, $V^*_{\rm sym} = V^*_{\rm tg} = V^*$ holds.
In particular, if $\theta \sim \calN(0,1)$, then $(\bar{\xi}_1,\bar{\xi}_2,\bar{\zeta})$ can be induced by generating each agent's signal as follows:
\begin{align*}
x(t) = \qty(\frac{m}{1-rm}) \cdot \theta + \qty(\frac{\sqrt{m(1-m)}}{1-rm}) \cdot \epsilon(t) \quad \text{where} \quad \epsilon(t) \sim_{\rm i.i.d.} \calN(0,1),
\end{align*}
where $\epsilon(t)$ is independently distributed from $\theta$.
\end{proposition}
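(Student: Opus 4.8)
The plan is to establish the three assertions in turn: that $(\bar\xi_1,\bar\xi_2,\bar\zeta)$ is a symmetric equilibrium moment whose value equals $V_{\rm tg}(m)$; that this identity forces $V^*_{\rm sym}=V^*_{\rm tg}=V^*$; and that the displayed Gaussian signal induces it. Throughout I note that $r<1$ and $m\in[0,1]$ give $rm<1$, so $1-rm>0$ and every denominator is well-defined. First I would verify the equilibrium-moment conditions of Definition~\ref{def_moment}. Square-integrability is immediate since $\xi$ and $\zeta$ are bounded. After integrating out the measure-zero diagonal, obedience reduces to the scalar identity $\bar\xi_1 = r\bar\xi_2+\bar\zeta$, which I confirm by substitution: $r\bar\xi_2+\bar\zeta = \frac{rm^2 + m(1-rm)}{(1-rm)^2} = \frac{m}{(1-rm)^2}=\bar\xi_1$. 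For positivity, rather than computing eigenvalues of the $(n+1)\times(n+1)$ matrix, I exhibit it as a genuine covariance matrix: writing $c=(1-rm)^{-1}$ and setting $Z_i = mc\,\theta_0 + \sqrt{m(1-m)}\,c\,\varepsilon_i$ for uncorrelated mean-zero unit-variance variables $\theta_0,\varepsilon_1,\dots,\varepsilon_n$, one gets $\Var[Z_i]=\bar\xi_1$, $\Cov[Z_i,Z_j]=\bar\xi_2$ for $i\ne j$, and $\Cov[Z_i,\theta_0]=\bar\zeta$, so the required matrix is the covariance matrix of $(Z_1,\dots,Z_n,\theta_0)$ and is positive semidefinite for every $n$.

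Next I would compute the value. Since the double integral of $\xi$ over $[0,1]^2$ picks up only the off-diagonal value, $V_{\rm sym}(\bar\xi_1,\bar\xi_2,\bar\zeta)=u\bar\xi_2+v\bar\xi_1+w\bar\zeta$; collecting terms over $(1-rm)^2$ gives $\frac{(v+w)m+(u-rw)m^2}{(1-rm)^2}=\frac{\alpha m-\beta m^2}{(1-rm)^2}=V_{\rm tg}(m)$, using $\alpha=v+w$ and $\beta=rw-u$. The chain $V^*_{\rm sym}=V^*_{\rm tg}=V^*$ then follows formally. Taking the maximum over $m$ yields $V^*_{\rm tg}=\max_m V_{\rm tg}(m)=\max_m V_{\rm sym}(\cdots)\le V^*_{\rm sym}$, since the family indexed by $m$ is a subcollection of symmetric equilibrium moments; symmetric moments are in turn a subclass of feasible moments, so $V^*_{\rm sym}\le V^*$; and $V^*=V^*_{\rm tg}$ by Proposition~\ref{prop_tg}. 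These three inequalities close the loop.

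Finally, for the Gaussian realization I set $f(t)=x(t)$ as displayed and verify both that it induces the target covariances and that it is an equilibrium. The covariance computations mirror the realization above, with $\theta_0=\theta$ and $\varepsilon_i=\epsilon(t)$, giving $\Var[x(t)]=\bar\xi_1$, $\Cov[x(s),x(t)]=\bar\xi_2$ for $s\ne t$, and $\Cov[x(t),\theta]=\bar\zeta$. To check the best response \eqref{br}, I first evaluate the local aggregate: by the LLN for Pettis integrals (the implication of Proposition~\ref{prop_var} for pairwise uncorrelated, common-mean processes) the idiosyncratic terms wash out, so $\int_0^1 f(t')\,\d\nu(t')=\frac{m}{1-rm}\theta$ and hence $F(t)=r\int_0^1 f\,\d\nu=\frac{rm}{1-rm}\theta$. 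Since $\theta$ and $x(t)$ are jointly normal with mean zero, the Gaussian conditioning formula gives $\E[\theta\mid x(t)]=\frac{\Cov[\theta,x(t)]}{\Var[x(t)]}x(t)=(1-rm)\,x(t)$, whence $\E_t[F(t)]+\E_t[\theta]=\big(\tfrac{rm}{1-rm}+1\big)\E[\theta\mid x(t)]=\frac{1}{1-rm}(1-rm)x(t)=x(t)=f(t)$, as required. Regularity (Q1)--(Q2) of $f$ holds because $\E[f(s)f(t)]$ is a simple function of $(s,t)$, constant off the measurable diagonal, and $\Sd[f(t)]$ is constant.

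The main obstacle I anticipate is this last equilibrium verification: it is the only place that genuinely exploits the continuum structure, since the aggregate of a continuum of conditionally i.i.d.\ signals must be collapsed to its $\theta$-component through the Pettis LLN rather than a pathwise average, which is precisely the payoff of the Pettis framework. The degenerate endpoints $m\in\{0,1\}$, where $x(t)$ carries either no information or full information and the Gaussian conditioning ratio is inapplicable, I would dispatch separately by direct inspection, obtaining $f\equiv 0$ and $f\equiv\theta/(1-r)$ respectively, both of which satisfy \eqref{br}.
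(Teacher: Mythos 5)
Your proposal is correct and follows essentially the same route as the paper's proof: verify obedience by direct substitution, compute the value $u\bar{\xi}_2+v\bar{\xi}_1+w\bar{\zeta}=V_{\rm tg}(m)$, sandwich $V^*_{\rm tg}\le V^*_{\rm sym}\le V^*=V^*_{\rm tg}$, and confirm the Gaussian signal structure via the Pettis LLN and the conditional Gaussian formula, exactly as in the paper. The one point of divergence is positivity: the paper invokes the scalar characterization from Proposition 1 of \cite{bm2013} (obedience and positivity hold iff $\bar{\xi}_1=r\bar{\xi}_2+\bar{\zeta}$ and $\bar{\xi}_1\ge\bar{\xi}_2\ge\bar{\zeta}^2$), whereas you exhibit each finite Gram matrix directly as the covariance matrix of explicit random variables $(Z_1,\dots,Z_n,\theta_0)$; your version is self-contained and in fact anticipates the signal construction used in the last part, so nothing is lost.
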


\begin{remark}
When the designer is oriented to welfare maximization, one can confirm that the optimal symmetric signal structure in this result induces the welfare-maximizing Bayes correlated equilibrium in Corollary 13 of \cite{ui2015}.
\end{remark}

While both targeted and symmetric signal structures can achieve the same objective, each may offer distinct practical advantages.
Targeted disclosure, for instance, may be favored for its simplicity, characterized by its bang-bang property wherein each agent either observes the entire state realization or nothing at all.
Implementing targeted disclosure requires only the designer to determine the set of informed agents, without the need for additional noise.
In contrast, an equivalent symmetric signal structure necessitates idiosyncratic randomization.

On the other hand, symmetric disclosure holds greater robustness against information spillover among agents, which may occur after private signals are transmitted from the designer.
To illustrate, consider a scenario where agents can communicate with a few of their neighbors to share their signal realizations before choosing actions.
Under targeted disclosure, even if an agent is uninformed and did not receive disclosure from the designer, she can still eliminate all uncertainty if she has the opportunity to communicate with any single informed neighbor.
This spillover issue is practically relevant because the resulting economic outcome after information sharing can differ significantly from what the designer intended to achieve.
In contrast, under the symmetric disclosure as in Proposition~\ref{prop_sym}, the agent can still reduce uncertainties, but complete elimination requires collecting an infinite number of signal realizations from others.

One might argue that fairness is a virtue of symmetrized information: Unlike targeted disclosure, symmetric disclosure treats all agents equally, at least in the ex-ante sense.
However, targeted disclosure can also achieve the same goal if we allow for randomization over informed agents.

To illustrate this idea, consider the following \emph{randomized targeted disclosure}:
First, divide the population $[0,1]$ into $n$ subsets of equal size as follows:
\begin{align*}
M_1 = \left[0,\frac{1}{n} \right),\, \ldots,\, M_{n-1} = \left[\frac{n-2}{n},\frac{n-1}{n} \right),\, M_n = \left[\frac{n-1}{n}, 1 \right].
\end{align*}
Second, randomly select $k$ distinct subsets from the above with equal likelihoods.
Third, fully disclose the state realization to all agents in the union of the selected subsets, while the remaining agents receive no information.
Notice that the measure of the informed agents remains constant at $k/n$, regardless of the subsets chosen.
Therefore, a unique equilibrium under randomized targeted disclosure is essentially the same as that under the targeted disclosure, which sets $m = k/n$.
Moreover, since each subset is selected equally likely, the designer's expected objective remains the same as $V_{\rm tg}(k/m)$.
Hence, by choosing natural numbers $(n,k)$ to make $k/n$ arbitrarily close to the optimal measure $m^*$, randomized targeted disclosure can attain the designer's value while ensuring that every agent has an equal chance of receiving information.
The procedure employs randomization solely for the purpose of selecting informed agents: Implementing this policy simply requires shuffling $n$ cards and flipping $k$ from the top.


\subsection{Public Disclosure}

In practice, the designer is often restricted to using only public disclosure, resulting in an signal realization that is commonly observed by all agents.
For instance, in the scenario of the beauty contest model of \cite{morris2002shin}, the designer is interpreted as a central bank, who is required to adhere to the principle of central bank transparency through committing to public disclosure.\footnote{Public disclosure is considered in several contexts of information design with multiple receivers, e.g., \cite{alonso2016camara}, \cite{laclau2017renou}, and \cite{kosterina2022}. As mentioned in Section~4.1 of the survey paper by \cite{kamenica2019}, the analysis of public signals can be reduced to a single-agent Bayesian persuasion problem by interpreting an equilibrium action profile as the representative agent's action. In the finite population LQG model, \cite{miyashita_ui_lqgd} characterized an optimal public signal structure by applying the solution method of \cite{tamura2018} for a quadratic Bayesian persuasion problem.}

Formally, we say that a signal structure is \emph{public} when agent's signal $x(t) \equiv x$ is given as an identical random variable for every $t \in [0,1]$.
We denote by $V_{\rm pub}^*$ the highest designer's objective attainable through public signal structures.
These signal structures are in stark contrast to targeted disclosure, which is featured by its selective nature, as well as to the signal structure in Proposition \ref{prop_sym}, featured by its idiosyncratic randomization.
Since targeted disclosure can achieve global optimality, evaluating the gap between $V_{\rm tg}$ and $V_{\rm pub}$ allows us to gauge the loss incurred by the designer when restricting herself to communicating with the  population only through public disclosure.

Given any public signal $x$, it is not hard to show that a unique equilibrium under the public signal structure is given as $f(t) = \frac{\E \qty[\theta \mid x]}{1-r}$ for all $t \in [0,1]$, from which the following second moments are induced:
\begin{align*}
\Cov \qty[f(s),f(t)] = \frac{z}{(1-r)^2}
\quad \text{and} \quad
\Cov \qty[f(t),\theta] = \frac{z}{1-r},
\end{align*}
where $z = \E \qty[\Var \qty[\theta \mid \eta]]$ represents the residual state variance that remains after observing the public signal $x$.
Notice that $z$ ranges between $0$ and $1$ by construction.
Consequently, $V_{\rm pub}^*$ can be identified as the value of the following maximization:
\begin{align*}
V_{\rm pub}^* = \max_{z \in [0,1]} \quad (v+w) \cdot \frac{z}{(1-r)^2} + c \cdot \frac{z}{1-r}.
\end{align*}
Since the objective an be rewritten as $\frac{(\alpha - \beta)z}{(1-r)^2}$, we see that the problem admits a unique solution $z^* = 0$ if $\alpha < \beta$, while $z^* = 1$ if $\alpha > \beta$, corresponding to the case of no disclosure and full disclosure, respectively.
Hence, since the problem admits no interior solution (unless $\alpha = \beta$), we have $V_{\rm pub}^* < V^*$ whenever (T2) is satisfied.
This suggests that the designer incurs a strict loss by limiting herself to public disclosure whenever partial disclosure is globally optimal.


\section{Concluding Comments} \label{sec_conc}


This paper contributes to the economics literature in various aspects.
First, by advocating the integral notion of \cite{pettis1938}, we offer a solid mathematical foundation for large population games with incomplete information.
Second, we provide a parsimonious sufficient condition on payoff structures that guarantee the uniqueness of an equilibrium in a broad class of games, including large scale network games and LQG games.
Third, we conduct comprehensive analysis on information design problems in the symmetric LQG environment and establish the global optimality of practical information disclosure policies.

Several papers on large population games, such as \cite{bm2013}, can be read without prior knowledge of Pettis integral, and their economic implications remain valid even when integration is interpreted naively.
Therefore, the first contribution of this paper lies in providing mathematical formality for the economic insights derived from earlier works, rather than in identifying errors or incorrect conclusions, by employing the Pettis-integral approach.
We believe that this approach is among the simplest and least technically demanding ones. 
In addition, the implications of our findings, demonstrated through the running example, may align with economists' intuitions.
In light of these, we propose that this approach can serve as a useful instrument for future studies of large population games, offering a means to avoid tricky issues related to the aggregation of a continuum of random variables.

As cautioned in \cite{khan1999sun}, the Pettis-integral approach can ``avoid'' the measurability issue, but it does not ``resolve'' the issue and leaves some interpretation challenges.\footnote{The lack of ex-post interpretations of Pettis integral is well recognized and discussed in \cite{al1995} and \cite{uhlig1996}, who firstly advocate for this approach as modeling tools of large economies.}
To illustrate their discussion, consider a collection of random variables $\{f(t)\}_{t \in T}$ that are Pettis-integrated to the aggregated random variable $\bar{f} \equiv \int f(t) \d \nu(t)$.
While $\bar{f}$ has an ex-post realization $\bar{f}(\omega)$ for each $\omega$ in the underlying probability space, it may lack meaningful relations to the realized sample path of the process, i.e., $\{f(t,\omega)\}_{t \in T}$, since a typical sample path need not be Lebesgue integrable across $t$.
In the context of Example~\ref{ex_bm}, the Pettis integral approach allows us to define the aggregated action $A$ from an ex-ante standpoint, but it does not reveal how the realization of $A$ can be related to the realizations of the state, signals, and individual actions from an ex-post standpoint.
Moreover, we can not give probabilistic interpretations to Proposition \ref{prop_var} in a way similar to the traditional LLN because the convergence of the mean of the process is not stated in a realized state-wise manner.
Rather, the proposition should be understood as calculating the moment of the aggregate from the ex-ante standpoint, prior to the resolution of uncertainty about $\omega$.\footnote{\cite{sun2006} proposes a different approach to integrate a continuum of random variables by considering an extension of the measurable space $(T \times \Omega, \Sigma \otimes \Pi, \nu \otimes \P)$, on which the real-valued function $f:T \times \Omega \to \R$ acts measurably. On this extended measurable space, he establishes the ``exact'' LLN, stating that the sample mean $\int_T f(t,\omega) \d \nu(t)$ is equal to the ex-ante mean of the process $\P$-almost surely. He defines the extension by appealing to the Fubini property, the interchange of the integrals over $T$ and $\Omega$, which bears similar to the idea of Pettis integral.}

Under what circumstances would the Pettis integral approach be appropriate for modeling large scale strategic situations with incomplete information?
In our perspective, the combination of two ingredients---aggregative and Bayesian elements---is crucial.
In the running example, the aggregated action is the sole strategic argument in agents' payoff functions that matter their incentives.
The notion of Pettis integral relies on the interchange of integration and probabilistic operations.
Applying this to the best-response formula \eqref{br_bm}, we observe a specific way of how Bayesian agents evaluate strategic uncertainty:
The aggregated action is best estimated as the total sum of the individual uncertainties associated with each opponent's strategic decision.
If one agrees that viewing aggregated uncertainty as the sum of individual uncertainties guides ex-ante payoff maximization, then the Pettis integral approach becomes appealing, as the notion is essentially defined by embracing this idea as an axiom.
On the other hand, the Pettis integral approach may lack clear economic interpretations when the modeler's interest lies in ex-post realizations of the aggregate or sample path.\footnote{\cite{khan1999sun} reach a similar conclusion to us; see the last paragraph of their Section~7.}

Applying the Pettis-integral approach to games requires some regularity conditions on primitives, such as the measurability of payoff structures. 
More crucially, we may need to confine attention to strategy profiles holding certain measurability properties in order to properly define the aggregate.
As pointed out by \cite{al2008}, since measurability implicitly assume some degree of similarity among agents, focusing on such strategies could be an undesirable restriction, as an equilibrium is typically perceived as a consequence of decentralized strategic decisions.
On the other hand, it might be reasonable to predict that agents tend to behave similarly when they share similar primitives, such as payoff functions, private information, or sets of neighbors.
In light of this, a result on equilibrium existence can be seen as a validation of such predictions.\footnote{While no existence result is available under as general assumptions as those of our uniqueness result, existence can often be established through direct approaches by exploiting the structure of a game; for instance, the matching coefficient method is applicable when the game exhibits symmetry.
Also, as noted earlier in Section~\ref{sec_eq_unique}, there are several known results on existence within the literature that apply under particular model specifications.}



The global optimality of targeted disclosure in our analysis partially relies on the non-atomicity of the population, which allows the designer to finely adjust the optimal measure of targeted agents and select it as a continuous variable from the interval $[0,1]$.
\cite{smolin2023yamashita} also consider targeted disclosure in a finite population setting and establish its optimality in two applications.
In their second application concerning belief polarization, however, achieving the exact optimum requires an even number of agents due to the fact that the choice of targeted disclosure is discrete for finite agents.
Nevertheless, given the continuity of the designer's objective function, targeted disclosure can still achieve the global optimum approximately  when the number of agents is finite but sufficiently large.


\appendix

\begin{center}
\Large{{\bf Appendix}}
\end{center}


\section{Mathematical Preliminaries}
\label{app_kernel}

In this appendix, we invoke the mathematical notion of kernels and present some results on it.
These preparations are instrumental for proving our results in Section \ref{sec_eq} and Section \ref{sec_info}.

Throughout, let $(T,\Sigma,\nu)$ be a finite measure space.
Denote by $\calL_2(\nu)$ be the usual Hilbert space that collect all (equivalence classes of) square-integrable functions from $T$ to $\R$, equipped with the inner product,
\begin{align*}
\langle \phi,\psi \rangle_{\calL_2(\nu)} \coloneqq \int \phi(t) \psi(t) \d \nu(t), \quad \forall \phi, \psi \in \calL_2(\nu).
\end{align*}
Let $\|\phi\|_{\calL_2(\nu)} = \langle f,f \rangle_{\calL_2(\nu)}^{1/2}$ be the induced norm.
Analogously, let $\calL_2(\nu \otimes \nu)$ be the Hilbert space of square-integrable bivariate functions from $T^2$ to $\R$.

\begin{definition} \label{def_kernel}
A jointly measurable function $K:T^2 \to \R$ is called a \emph{(measurable) kernel}.
In addition, we say that:
\begin{itemize}
\item  $K$ is \emph{square integrable} if
\begin{align*}
\|K\|_{\calL_2(\nu \otimes \nu)} \coloneqq \qty(\int \int |K(t,t')|^2 \d \nu(t) \d \nu(t'))^{\frac{1}{2}} < \infty.
\end{align*}
\item  $K$ is \emph{integrable on the diagonal} if
\begin{align*}
\|K\|_{\calL_2(\nu)} \coloneqq \int \int |K(t,t)| \d \nu(t) < \infty.
\end{align*}
\item  $K$ is \emph{(essentially) bounded on the diagonal} if
\begin{align*}
\|K\|_{\calL_\infty(\nu)} \coloneqq \esup_{t \in T} |K(t,t)| < \infty.
\end{align*}
\end{itemize}
\end{definition}

As mentioned in \eqref{def_R} in the main text, any square-integrable kernel $K$ naturally defines an integral operator $\mathbf{K}: \calL_2(\nu) \to \calL_2(\nu)$.
Then, $\lambda \in \R$ is called an \emph{eigenvalue} of $\mathbf{K}$ when $\mathbf{K}\phi = \lambda \phi$ holds for some \emph{eigenvector} $\phi \in \calL_2(\nu) \setminus \{0\}$, where $0$ denotes the zero element of $\calL_2(\nu)$, and denote by $\Lambda(\mathbf{K})$ the set of eigenvalues.
Additionally, we define the \emph{numerical range} of $\mathbf{K}$ by
\begin{align*}
\Phi(\mathbf{K}) \coloneqq \qty{\frac{\langle \phi,\mathbf{K}\phi \rangle_{\calL_2(\nu)}}{\|\phi\|_{\calL_2(\nu)}}: \phi \in \calL_2(\nu) \setminus \{0\}},
\end{align*}
where each value in $\Phi(\mathbf{K})$ is called the \emph{Rayleigh quotient} of $\phi$.
Moreover, the \emph{operator norm} of $\mathbf{K}$ is defined by
\begin{align*}
\|\mathbf{K}\|_{\rm op} \coloneqq \sup_{\phi \in \calL_2(\nu) \setminus \{0\}} \frac{\|\mathbf{K} \phi\|_{\calL_2(\nu)}}{\|\phi\|_{\calL_2(\nu)}},
\end{align*}
which is finite given that $K$ is square integrable.
These variables can be related to one another as in the next lemma.

\begin{lemma} \label{lem_num}
If $K:T^2 \to \R$ is a square-integrable kernel, then
\begin{align*}
\co \qty(\Lambda(\mathbf{K})) \subseteq \Phi(\mathbf{K}) \subseteq \bigl[- \|\mathbf{K}\|_{\rm op}, \|\mathbf{K}\|_{\rm op} \bigr],
\end{align*}
where $\co (\cdot)$ denotes the convex full of a set.
In addition, if $K$ is symmetric, or in the game-theoretic context, undirected, then $\co \qty(\Lambda(\mathbf{K})) = \Phi(\mathbf{K})$.
\end{lemma}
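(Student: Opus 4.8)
The plan is to treat the two set inclusions and the refinement under symmetry separately, relying throughout on the fact that square-integrability makes $\mathbf{K}$ a Hilbert--Schmidt operator on $\calL_2(\nu)$, hence bounded (with $\|\mathbf{K}\|_{\rm op} < \infty$) and compact. Since every Rayleigh quotient is unchanged by rescaling $\phi$, I would normalize to $\|\phi\|_{\calL_2(\nu)} = 1$ and regard $\Phi(\mathbf{K})$ as the numerical range $\{\langle \phi, \mathbf{K}\phi\rangle : \|\phi\| = 1\}$. Two inclusions are then immediate. If $\lambda$ is an eigenvalue with unit eigenvector $\phi$, then $\langle \phi, \mathbf{K}\phi\rangle = \lambda$, so $\Lambda(\mathbf{K}) \subseteq \Phi(\mathbf{K})$; and Cauchy--Schwarz gives $|\langle \phi, \mathbf{K}\phi\rangle| \le \|\phi\|\,\|\mathbf{K}\phi\| \le \|\mathbf{K}\|_{\rm op}$, so $\Phi(\mathbf{K}) \subseteq [-\|\mathbf{K}\|_{\rm op}, \|\mathbf{K}\|_{\rm op}]$.

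To promote $\Lambda(\mathbf{K}) \subseteq \Phi(\mathbf{K})$ into $\co(\Lambda(\mathbf{K})) \subseteq \Phi(\mathbf{K})$, the key step is the convexity of $\Phi(\mathbf{K})$. I would first reduce to the self-adjoint case: writing $\mathbf{S} = \tfrac12(\mathbf{K} + \mathbf{K}^{*})$ for the operator of the symmetrized kernel $\tfrac12(K(s,t) + K(t,s))$, the antisymmetric part contributes nothing, since $\langle \phi, (\mathbf{K}-\mathbf{K}^{*})\phi\rangle = 0$ in a real Hilbert space, whence $\langle \phi, \mathbf{K}\phi\rangle = \langle \phi, \mathbf{S}\phi\rangle$ and $\Phi(\mathbf{K}) = \Phi(\mathbf{S})$. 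Now $\Phi(\mathbf{S})$ is the image of the unit sphere of $\calL_2(\nu)$ under the continuous real functional $\phi \mapsto \langle \phi, \mathbf{S}\phi\rangle$; the sphere is connected, so the image is a connected subset of $\R$, i.e.\ an interval, and hence convex. (In the degenerate one-dimensional case $\Phi(\mathbf{K})$ is a single point, trivially convex.) Consequently $\co(\Lambda(\mathbf{K})) \subseteq \co(\Phi(\mathbf{K})) = \Phi(\mathbf{K})$, and combined with the previous paragraph this yields the displayed chain of inclusions.

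For the final assertion, suppose $K$ is symmetric, so $\mathbf{K}$ is compact and self-adjoint; it remains to prove $\Phi(\mathbf{K}) \subseteq \co(\Lambda(\mathbf{K}))$. By the spectral theorem there is an orthonormal eigenbasis with real eigenvalues $\{\lambda_n\}$ accumulating only at $0$. Expanding a unit vector $\phi$ in this basis, I would write the Rayleigh quotient as a weighted average $\langle \phi, \mathbf{K}\phi\rangle = \sum_n \lambda_n w_n$ with $w_n \ge 0$ and $\sum_n w_n = 1$, where the eigenvalue $0$ is included (and lies in $\Lambda(\mathbf{K})$) whenever $\phi$ has a component in $\Ker \mathbf{K}$. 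Write $a = \inf \Lambda(\mathbf{K})$ and $b = \sup \Lambda(\mathbf{K})$, so this average lies in $[a,b]$. If it lies strictly inside $(a,b)$, then it is a convex combination of two elements of $\Lambda(\mathbf{K})$ straddling it, hence in $\co(\Lambda(\mathbf{K}))$; if it equals $b$, then $\sum_n (b-\lambda_n)w_n = 0$ with all summands nonnegative forces all weight onto eigenvalues equal to $b$, so $b$ is attained, i.e.\ $b \in \Lambda(\mathbf{K}) \subseteq \co(\Lambda(\mathbf{K}))$ (and symmetrically for $a$). This gives $\Phi(\mathbf{K}) \subseteq \co(\Lambda(\mathbf{K}))$ and, with the second paragraph, the equality $\co(\Lambda(\mathbf{K})) = \Phi(\mathbf{K})$.

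I expect the main obstacle to be the two places where real scalars and non-closedness intrude: establishing convexity of the numerical range over $\R$ (circumvented by passing to the self-adjoint part and using connectedness rather than invoking the complex Toeplitz--Hausdorff theorem), and, in the symmetric case, passing from an \emph{infinite} convex combination of eigenvalues to membership in the algebraic (not closed) convex hull. The endpoint analysis above is exactly what handles the latter, ruling out the only values that could a priori escape $\co(\Lambda(\mathbf{K}))$.
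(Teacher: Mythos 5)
Your proof is correct, and it follows the same overall skeleton as the paper's --- eigenvalues are Rayleigh quotients, the numerical range is convex, Cauchy--Schwarz gives the operator-norm bound, and self-adjointness upgrades the inclusion to an equality --- but it justifies the two nontrivial steps differently. Where the paper simply cites the Hausdorff--Toeplitz theorem for convexity of $\Phi(\mathbf{K})$, you reduce to the self-adjoint part $\mathbf{S}=\tfrac12(\mathbf{K}+\mathbf{K}^{*})$ (legitimate, since the antisymmetric part has zero quadratic form over $\R$) and obtain convexity from connectedness of the unit sphere; this is more elementary and sidesteps any worry about whether the cited theorem is stated for real Hilbert spaces. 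For the symmetric case, the paper invokes Courant--Fischer to match $\inf$ and $\sup$ of $\Lambda(\mathbf{K})$ and $\Phi(\mathbf{K})$; strictly speaking, equality of infima and suprema of a set and a convex superset does not by itself give set equality unless one also checks attainment at the endpoints (one could a priori have $\co(\Lambda)=[a,b)$ sitting inside $\Phi=[a,b]$). Your spectral-decomposition argument --- writing the Rayleigh quotient as a weighted average of eigenvalues (with $0$ counted when $\phi$ meets $\Ker\mathbf{K}$) and showing that attaining an endpoint forces that endpoint to be an eigenvalue --- supplies exactly this missing attainment analysis, so your route is not only valid but somewhat more complete than the paper's at the endpoints. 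The only mild caveat is that your expansion presupposes the spectral theorem for compact self-adjoint operators in the form that gives a countable orthonormal system of eigenvectors spanning $(\Ker\mathbf{K})^{\perp}$; this holds without separability of $\calL_2(\nu)$, and you handle the kernel component correctly, so no gap results.
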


\begin{proof}
By definition, one can see that $\Lambda(\mathbf{K}) \subseteq \Phi(\mathbf{K})$.
Then, since $\Phi(\mathbf{K})$ is convex by Hausdorff--Toeplitz theorem \citep[Proposition 11.33 of][]{brezis}, it follows that $\co \qty(\Lambda(\mathbf{K})) \subseteq \Phi(\mathbf{K})$.
Moreover, $\Phi(\mathbf{K}) \subseteq \qty[- \|\mathbf{K}\|_{\rm op}, \|\mathbf{K}\|_{\rm op} ]$ follows from the Cauchy--Schwartz inequality.
Lastly, if $K$ is symmetric, then $\mathbf{K}$ is self-adjoint, and thus, the Fischer--Courant theorem \citep[Theorem 4 in pp.\ 318--319 of][]{lax} implies that $\inf \Lambda (\mathbf{K}) = \inf \Phi(\mathbf{K})$ and $\sup \Lambda(\mathbf{K}) = \sup \Phi(\mathbf{K})$, from which we conclude that $\co \qty(\Lambda(\mathbf{K})) = \Phi(\mathbf{K})$.
\end{proof}

\begin{definition} \label{def_psd}
A kernel $K:T^2 \to \R$ is \emph{symmetric} if for any $s,t \in T$, $K(s,t) = K(t,s)$.
Moreover, $K$ is \emph{positive semidefinite} if for any $n \in \N$ and $t_1,\ldots,t_n \in T$,
\begin{align} \label{gram}
\mqty[K(t_1,t_1) & \cdots & K(t_1,t_n) \\ \vdots & \ddots & \vdots \\ K(t_n,t_1) & \cdots & K(t_n,t_n)] \succeq O.
\end{align}
We say that $K$ is negative semidefinite if $(-K)$ is positive semidefinite.
\end{definition}

\begin{remark}
There is an alternative way of defining positive semidefinite kernels through reproducing kernel Hilbert spaces (Definition \ref{def_rkhs}), but it turns out to be equivalent to the above one; see Chapter 4 of \cite{steinwart2008christmann} for details.
Also, it may be common in the machine learning literature to include positive semidefiniteness in the definition of kernels, but we do not do so in order to regard a payoff structure as a kernel.
\end{remark}

Let us mention a few basic properties of positive semidefinite kernels.
First, from the case of $n=1$ in \eqref{gram}, we have $K(t,t) \ge 0$ for all $t \in T$.
Second, from the case of $n=2$, we obtain the following \emph{Cauchy--Schwartz inequality}:
\begin{align} \label{k_cs}
K(s,t) \le \sqrt{K(s,s)} \sqrt{K(t,t)}, \quad \forall s,t \in T.
\end{align}
By this property, we immediately see that $\|K\|_{\calL_2(\nu \otimes \nu)}\le \|K\|_{\calL_2(\nu)} \le \|K\|_{\calL_\infty(\nu)}$, suggesting that any positive semidefinite kernel is square integrable whenever it is integrable or bounded on the diagonal.
Third, if $K$ is square integrable, then it holds that
\begin{align} \label{h_psd}
0 \le \int \int \phi(s) K(s,t) \phi(t) \d \nu(s) \d \nu(t) \le \|K\|_{\calL_2(\nu \otimes \nu)} \cdot \|\phi\|^2_{\calL_2(\nu)}, \quad \forall \phi \in \calL_2(\nu),
\end{align}
which follows from the classical spectral theorem; see Lemma \ref{lem_eigen_pos} below.
In particular, this implies that the integral $\int \int K(s,t) \d\nu(s) \d\nu(t)$ is non-negative, as we can let $\phi$ be a constant function in \eqref{h_psd}.

Now, let $K$ be any square-integrable kernel.
By using the Cauchy--Schwartz inequality on $\calL_2(\nu)$, one can easily see that $\|\mathbf{K}\phi\|_{\calL_2(\nu)} < \|K\|_{\calL_2(\nu \otimes \nu)} \cdot \|\phi\|_{\calL_2(\nu)}$, suggesting that $\mathbf{K}$ is a bounded linear operator, acting from $\calL_2(\nu)$ to itself, with the operator norm bounded by $\|K\|_{\calL_2(\nu \otimes \nu)}$.
In fact, $\mathbf{K}$ satisfies compactness, a property stronger than boundedness \citep[Theorem 4 in p.\ 247 of][]{lax}, which in turn implies that $\mathbf{K}$ has at most countably many eigenvalues, including geometric multiplicity \citep[Theorem 6 in p.\ 238 of][]{lax}.

\begin{lemma}
\label{lem_eigen_com}
Let $K:T^2 \to \R$ be a square-integrable kernel.
Then, $\mathbf{K}$ is a compact linear operator on $\calL_2(\nu)$ with $\|\mathbf{K}\|_{\rm op} \le \|K\|_{\calL_2(\nu \otimes \nu)}$.
Consequently, $\mathbf{K}$ has at most a countable set of eigenvalues $\{\lambda_i\}_{i \in I}$ with no accumulation point except, possibly, $0$.
\end{lemma}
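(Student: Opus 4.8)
The plan is to establish the operator-norm bound by a direct Cauchy--Schwartz estimate, then derive compactness from the weak-to-norm continuity characterization available on a Hilbert space, and finally read off the eigenvalue structure from the Riesz--Schauder spectral theory for compact operators. The operator-norm bound and the eigenvalue conclusion are essentially routine or citable; the substantive content is the compactness.

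First I would confirm that $\mathbf{K}$ is a well-defined bounded operator with the stated norm bound. By Tonelli's theorem, square integrability of $K$ gives $\int \|K(s,\cdot)\|^2_{\calL_2(\nu)} \d\nu(s) = \|K\|^2_{\calL_2(\nu\otimes\nu)} < \infty$, so $K(s,\cdot) \in \calL_2(\nu)$ for $\nu$-almost every $s$. Applying the Cauchy--Schwartz inequality on $\calL_2(\nu)$ pointwise in $s$ yields $|(\mathbf{K}\phi)(s)|^2 \le \|K(s,\cdot)\|^2_{\calL_2(\nu)}\,\|\phi\|^2_{\calL_2(\nu)}$, and integrating in $s$ produces $\|\mathbf{K}\phi\|_{\calL_2(\nu)} \le \|K\|_{\calL_2(\nu\otimes\nu)}\,\|\phi\|_{\calL_2(\nu)}$, which is precisely the asserted inequality $\|\mathbf{K}\|_{\rm op} \le \|K\|_{\calL_2(\nu\otimes\nu)}$.

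The main obstacle is compactness. I would use the fact that a bounded operator on a Hilbert space is compact exactly when it carries weakly convergent sequences to norm-convergent ones. Take $\phi_n \tow 0$; such a sequence is bounded, say $\sup_n \|\phi_n\|_{\calL_2(\nu)} \le C$. For almost every $s$ we have $(\mathbf{K}\phi_n)(s) = \langle K(s,\cdot), \phi_n\rangle_{\calL_2(\nu)} \to 0$ by weak convergence, while $|(\mathbf{K}\phi_n)(s)|^2 \le C^2\|K(s,\cdot)\|^2_{\calL_2(\nu)}$, whose right-hand side is $\nu$-integrable in $s$ with integral $C^2\|K\|^2_{\calL_2(\nu\otimes\nu)}$. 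Dominated convergence then forces $\|\mathbf{K}\phi_n\|^2_{\calL_2(\nu)} = \int |(\mathbf{K}\phi_n)(s)|^2 \d\nu(s) \to 0$. Since any bounded sequence lies in the separable subspace it spans and hence admits a weakly convergent subsequence, this weak-to-norm property delivers compactness of $\mathbf{K}$. An alternative route would be to approximate $K$ in $\calL_2(\nu\otimes\nu)$ by finite-rank kernels of the form $\sum_{i,j} c_{ij}\,e_i(s)e_j(t)$, whose associated operators are finite-rank and hence compact, and then invoke the norm bound from the previous step to express $\mathbf{K}$ as a norm-limit of compact operators.

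Finally, the eigenvalue assertion follows from standard spectral theory once compactness is in hand. For each $\varepsilon > 0$ a compact operator has only finitely many (real) eigenvalues with $|\lambda| \ge \varepsilon$; letting $\varepsilon \downarrow 0$ shows that the eigenvalue set $\{\lambda_i\}_{i \in I}$ is at most countable and that $0$ is its only possible accumulation point. I would cite the spectral theory for compact operators (e.g., Theorem 6 in p.\ 238 of \cite{lax}) for this concluding step rather than reproving it.
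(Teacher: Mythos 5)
Your proposal is correct and matches the paper's treatment: the paper likewise obtains the norm bound $\|\mathbf{K}\|_{\rm op} \le \|K\|_{\calL_2(\nu \otimes \nu)}$ from the Cauchy--Schwartz inequality and then simply cites \cite{lax} (Theorem 4, p.~247 for compactness of Hilbert--Schmidt integral operators, and Theorem 6, p.~238 for the countability of the eigenvalues with $0$ as the only possible accumulation point) rather than writing out a proof. Your weak-to-norm argument via dominated convergence (and the finite-rank approximation alternative you mention) is a complete and standard proof of the cited compactness fact, so the only difference is that you supply details the paper delegates to a reference.
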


If $K$ is symmetric, then $\mathbf{K}$ is self-adjoint, in which case all eigenvalues of $\mathbf{K}$ are real numbers.
Moreover, assuming that $K$ is positive semidefinite, all eigenvalues of $\mathbf{K}$ are non-negative, and the following spectral theorem holds true \citep[Theorem A.1 of][]{steinwart2012scovel}.

\begin{lemma}
\label{lem_eigen_pos}
Let $K:T^2 \to \R$ be a square-integrable and positive semidefinite kernel.
Then, all eigenvalues of $\mathbf{K}$ are weakly positive and can be aligned in decreasing order, $\lambda_1 \ge \lambda_2 \ge \cdots \ge 0$.
Moreover, there exists an orthonormal system $\{\psi_i\}_{i \in I}$ in $\calL_2(\nu)$, where each $\psi_i$ is the eigenvector of $\mathbf{K}$ associated with $\lambda_i$, such that
\begin{align*}
\mathbf{K}\phi = \sum_{i \in I} \lambda_i \langle \phi, \psi_i \rangle_{\calL_2(\nu)} \psi_i, \quad \forall \phi \in \calL_2(\nu).
\end{align*}
\end{lemma}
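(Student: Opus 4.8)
The plan is to combine the compactness furnished by Lemma~\ref{lem_eigen_com} with the spectral theorem for compact self-adjoint operators, and then to extract non-negativity of the eigenvalues from the pointwise positive semidefiniteness of $K$. First I would observe that, since $K$ is positive semidefinite, every Gram matrix in Definition~\ref{def_psd} is symmetric, so $K(s,t)=K(t,s)$ for all $s,t$ and hence $\mathbf{K}$ is self-adjoint on $\calL_2(\nu)$. Lemma~\ref{lem_eigen_com} tells us that $\mathbf{K}$ is compact with at most countably many eigenvalues accumulating only at $0$. The Hilbert--Schmidt theorem for compact self-adjoint operators then supplies an orthonormal system of eigenvectors $\{\psi_i\}_{i\in I}$ with real eigenvalues $\{\lambda_i\}_{i\in I}$ together with the expansion $\mathbf{K}\phi=\sum_{i\in I}\lambda_i\langle\phi,\psi_i\rangle_{\calL_2(\nu)}\psi_i$ for every $\phi\in\calL_2(\nu)$; the eigenvalues may then be relabelled in decreasing order.

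The crux is to show that $\mathbf{K}$ is a positive operator, i.e.\ $\langle\phi,\mathbf{K}\phi\rangle_{\calL_2(\nu)}\ge 0$ for all $\phi$, from which $\lambda_i=\langle\psi_i,\mathbf{K}\psi_i\rangle_{\calL_2(\nu)}\ge 0$ follows at once. Note that the inequality \eqref{h_psd} must not be invoked here, since it is itself deduced from the present lemma. The idea is to upgrade the finite-dimensional nonnegativity $\sum_{i,j}\phi(t_i)\phi(t_j)K(t_i,t_j)\ge 0$ from Definition~\ref{def_psd} to its integral analogue by integrating against the product measure $\nu^{\otimes n}$. Because $\nu(T)=1$, integrating out the remaining $n-2$ coordinates leaves $n(n-1)$ off-diagonal terms each equal to $\langle\phi,\mathbf{K}\phi\rangle_{\calL_2(\nu)}$ and $n$ diagonal terms each equal to $\int\phi(t)^2K(t,t)\d\nu(t)$; rearranging the resulting inequality gives $\langle\phi,\mathbf{K}\phi\rangle_{\calL_2(\nu)}\ge-\tfrac{1}{n-1}\int\phi(t)^2K(t,t)\d\nu(t)$, and letting $n\to\infty$ yields the claim.

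The main obstacle is that the diagonal term $\int\phi^2 K(t,t)\d\nu$ need not be finite for a general $\phi\in\calL_2(\nu)$: square integrability of $K$ only forces $\sum_i\lambda_i^2<\infty$, which permits $\int K(t,t)\d\nu=\sum_i\lambda_i=\infty$. To handle this I would first establish the inequality for $\phi$ bounded and supported on the truncation set $E_N=\{t\in T:K(t,t)\le N\}$, where the diagonal integral is finite; since $K(t,t)$ is a finite real number at every $t$, we have $\bigcup_N E_N=T$, so such functions are dense in $\calL_2(\nu)$. Non-negativity for all $\phi$ then follows from the continuity of $\phi\mapsto\langle\phi,\mathbf{K}\phi\rangle_{\calL_2(\nu)}$, guaranteed by the boundedness of $\mathbf{K}$ in Lemma~\ref{lem_eigen_com}. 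Since the statement is classical, one could alternatively simply cite Theorem A.1 of \cite{steinwart2012scovel}, but the route above is self-contained, with the truncation-and-density step being the delicate point.
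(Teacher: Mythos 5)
Your argument is correct, but it is worth noting that the paper does not actually prove this lemma at all: it is stated as a known result and attributed directly to Theorem A.1 of \cite{steinwart2012scovel}, so your self-contained derivation is a genuinely different (and more informative) route. Your two building blocks are sound. First, symmetry of the Gram matrices in Definition~\ref{def_psd} (the case $n=2$) does give $K(s,t)=K(t,s)$, hence self-adjointness, and combined with the compactness from Lemma~\ref{lem_eigen_com} the Hilbert--Schmidt theorem delivers the orthonormal eigensystem and the expansion. Second, your positivity argument---integrating the pointwise inequality $\sum_{i,j}\phi(t_i)\phi(t_j)K(t_i,t_j)\ge 0$ against $\nu^{\otimes n}$ to get $\langle\phi,\mathbf{K}\phi\rangle_{\calL_2(\nu)}\ge-\tfrac{1}{n-1}\int\phi^2K(t,t)\,\d\nu$ and sending $n\to\infty$---is the classical trick, and you are right to flag that \eqref{h_psd} cannot be invoked here since the paper derives it from this very lemma. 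The truncation to $E_N=\{t:K(t,t)\le N\}$ correctly handles the possible non-integrability of the diagonal; two small points you should make explicit are (i) that $E_N$ is $\nu$-measurable because $t\mapsto K(t,t)$ is the composition of $t\mapsto(t,t)$ with the jointly measurable $K$ (the paper uses exactly this observation in a footnote after (Q1)--(Q2)), and (ii) that splitting the integrated sum term by term is legitimate because each off-diagonal term is absolutely integrable by Cauchy--Schwarz and square integrability of $K$, while each diagonal term is finite on the truncation set. What your approach buys is a proof from first principles that makes the logical dependence on Definition~\ref{def_psd} transparent; what the paper's citation buys is brevity and access to the sharper machinery (the weak Mercer representation) that it needs anyway in Lemmas~\ref{lem_mercer_diag} and \ref{lem_mercer}.
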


\begin{remark}
In what follows, we assume without loss of generality that $I$ is an infinite set by considering $\lambda_i = 0$ whenever $I$ is finite and $i > |I|$.
\end{remark}

When $K$ is a continuous function on a compact topological space $T^2$, the above spectral representation can be further refined.
According to Mercer's theorem \citep[Theorem 11 in p.\ 343 of][]{lax}, we can choose continuous functions $\{e_i\}_{i \in I}$, as the representatives of the eigenvectors $\{\psi_i\}_{i \in I}$, in such a way that the continuous kernel $K$ enjoys the representation,
\begin{align} \label{mercer_rep}
K(s,t) = \sum_{i \in I} \lambda_i e_i(s) e_i(t), \quad \forall s,t \in T,
\end{align}
where the series converges absolutely and uniformly.
Though Mercer's theorem is not directly applicable when $K$ is not continuous, a recent result by \cite{steinwart2012scovel} establishes a weaker form Mercer's representation, showing \eqref{mercer_rep} holds for almost every $(s,t)$, that applies to measurable kernels.

We need a few more notions in order to state and apply their results for our proof.
In stating the next definition, we should be aware of the difference between a square-integrable function, which is an element of $H$, and an element of $\calL_2(\nu)$, i.e., the latter is an equivalent class of the former.

\begin{definition} \label{def_rkhs}
Let $H \subseteq \R^T$ be a Hilbert space, consisting of functions from $T$ to $\R$.
The space $H$ is called a \emph{reproducing kernel Hilbert space} (RKHS) if for every $t \in T$, the evaluation mapping $h \mapsto h(t)$ is a continuous function from $H$ to $\R$.
Moreover, a positive semidefinite kernel $K:T^2 \to \R$ is called a \emph{reproducing kernel} of $H$ if $K(\cdot,t) \in H$ for every $t \in T$, and if the following \emph{reproducing property} holds:
\begin{align*}
h(t) = \langle h, K(\cdot,t) \rangle_H, \quad \forall h \in H, \quad \forall t \in T.
\end{align*}
\end{definition}

Known as the Moore--Aronszajn theorem, there exists a one-to-one relationship between positive semidefinite kernels and RKHSs.
Specifically, in \cite{steinwart2008christmann}, Theorem 4.20 shows that every RKHS has a unique reproducing kernel.
Conversely, their Theorem 4.21 shows that for every positive semidefinite kernel $K$, there exists a unique RKHS $H$ for which $K$ serves as the reproducing kernel.
By virtue of this, we refer to $H$ as the RKHS of $K$.

\begin{lemma} \label{lem_mercer_diag}
Let $K$ be a positive semidefinite kernel that is integrable on the diagonal, $H$ be the RKHS of $K$, and $\{\lambda_i, \psi_i\}_{i \in I}$ be the eigenpairs of $\mathbf{K}$ given as in Lemma \ref{lem_eigen_pos}.
Then, there exist the representatives $\{e_i\}_{i \in I}$ of $\{\psi_i\}_{i \in I}$ such that $\{\sqrt{\lambda_i} e_i\}_{i \in I}$ is an orthonormal system in $H$.
Moreover, it holds that
\begin{align} \label{mercer_diag}
\sum_{i \in I} \lambda_i e_i(t)^2 \le K(t,t)
\quad \text{and} \quad
\sum_{i \in I} \lambda_i \qty|e_i(s)| \qty|e_i(t)|  \le \sqrt{K(s,s)} \sqrt{K(t,t)}.
\quad \forall s,t \in T.
\end{align}
\end{lemma}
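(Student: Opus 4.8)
The plan is to exploit the reproducing-kernel structure of $H$ together with Bessel's inequality, since this delivers the diagonal bound pointwise rather than merely $\nu$-almost everywhere, as a Mercer-type expansion for a measurable kernel would. First I would record the preliminaries. Because $K$ is positive semidefinite and integrable on the diagonal, the Cauchy--Schwartz inequality \eqref{k_cs} gives $\int |K(t,s)|^2 \d\nu(s) \le K(t,t)\int K(s,s)\d\nu(s) < \infty$ for every $t \in T$, so $K(t,\cdot) \in \calL_2(\nu)$ for each fixed $t$ and $K$ is square integrable; hence $\mathbf{K}$ is compact, self-adjoint and positive semidefinite, and Lemma~\ref{lem_eigen_pos} supplies the eigenpairs $\{\lambda_i,\psi_i\}_{i \in I}$. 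For each $i$ with $\lambda_i > 0$ I would fix the canonical representative $e_i(t) \coloneqq \frac{1}{\lambda_i}(\mathbf{K}\psi_i)(t) = \frac{1}{\lambda_i}\int K(t,s)\psi_i(s)\d\nu(s)$, which is a genuine pointwise-defined function since the integral converges for every $t$, and which agrees with $\psi_i$ in $\calL_2(\nu)$ because $\mathbf{K}\psi_i = \lambda_i\psi_i$. For the spurious indices with $\lambda_i = 0$ any representative works, as they contribute nothing to the sums in \eqref{mercer_diag}, and the orthonormal-system claim is to be read for the nonzero terms.

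Next I would establish the first assertion, namely $\sqrt{\lambda_i}e_i \in H$ with $\{\sqrt{\lambda_i}e_i\}$ orthonormal in $H$. Consider the $H$-valued map $s \mapsto \psi_i(s)K(\cdot,s)$, which is strongly measurable by joint measurability of $K$ and whose $H$-norm equals $|\psi_i(s)|\sqrt{K(s,s)}$; by Cauchy--Schwartz and diagonal integrability this norm is $\nu$-integrable, so the map is Bochner integrable in $H$ with integral $v_i \coloneqq \int \psi_i(s)K(\cdot,s)\d\nu(s) \in H$. Since each evaluation functional is continuous on $H$ and hence commutes with the Bochner integral, the reproducing property yields $v_i(t) = \int \psi_i(s)\langle K(\cdot,s),K(\cdot,t)\rangle_H \d\nu(s) = \int \psi_i(s)K(s,t)\d\nu(s) = \lambda_i e_i(t)$, so $\lambda_i e_i = v_i \in H$ and $\sqrt{\lambda_i}e_i \in H$. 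The same commutation gives $\langle v_i,v_j\rangle_H = \int \psi_i(s)v_j(s)\d\nu(s) = \lambda_j\langle\psi_i,\psi_j\rangle_{\calL_2(\nu)} = \lambda_j\delta_{ij}$, whence $\langle\sqrt{\lambda_i}e_i,\sqrt{\lambda_j}e_j\rangle_H = \delta_{ij}$. Alternatively, this orthonormality is part of the Mercer machinery of \cite{steinwart2012scovel}.

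Finally I would deduce the two inequalities in \eqref{mercer_diag}. Fix $t \in T$; then $K(\cdot,t) \in H$ with $\|K(\cdot,t)\|_H^2 = \langle K(\cdot,t),K(\cdot,t)\rangle_H = K(t,t)$, and the reproducing property gives $\langle K(\cdot,t),\sqrt{\lambda_i}e_i\rangle_H = \sqrt{\lambda_i}e_i(t)$. Applying Bessel's inequality for the orthonormal system $\{\sqrt{\lambda_i}e_i\}$ to the vector $K(\cdot,t)$ yields $\sum_i \lambda_i e_i(t)^2 \le K(t,t)$ for every $t$, which is the first bound. The second bound then follows by applying the Cauchy--Schwartz inequality to the sequences $(\sqrt{\lambda_i}|e_i(s)|)_i$ and $(\sqrt{\lambda_i}|e_i(t)|)_i$ and invoking the first bound at $s$ and at $t$.

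I expect the main obstacle to be the first assertion---showing that $\sqrt{\lambda_i}e_i$ genuinely lands in the RKHS $H$ with the asserted orthonormality---since it requires the right choice of representatives and a clean Bochner-integration argument inside $H$ (or an appeal to \cite{steinwart2012scovel}), including the mild measurability bookkeeping for the $H$-valued integrand. Once the orthonormal system is in hand, the real payoff is that Bessel's inequality delivers the diagonal estimate at \emph{every} point $t$, sidestepping the fact that a measurable kernel's Mercer expansion need only converge $\nu$-almost everywhere; the off-diagonal estimate is then a routine Cauchy--Schwartz consequence.
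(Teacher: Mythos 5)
Your proposal is correct and follows essentially the same route as the paper: the paper also takes the orthonormal system $\{\sqrt{\lambda_i}e_i\}$ in $H$ from \cite{steinwart2012scovel}, obtains the first inequality by expanding $K(\cdot,t)$ in $H$ and using the reproducing property to identify the coefficients as $\sqrt{\lambda_i}e_i(t)$ (completing to an orthonormal basis and discarding the nonnegative remainder, i.e.\ an inline proof of the Bessel inequality you invoke directly), and derives the second inequality by the same Cauchy--Schwartz step on the weighted sequences. Your additional Bochner-integral construction of the representatives is a self-contained substitute for the citation, modulo the measurability bookkeeping you already flag.
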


\begin{proof}
We only need to prove the inequalities in \eqref{mercer_diag} since the rest is implies by Lemma 2.3 and Lemma 2.12 of of \cite{steinwart2012scovel}.
For the first inequality, since every Hilbert space has an orthonormal basis \citep[Theorem 9 in p.\ 60 of][]{lax}, we can find a (possibly, uncountable) family $\{\tilde{e}_j\}_{j \in J}$ in $H$ such that the union $\{\sqrt{\lambda_i}e_i\}_{i \in I} \cup \{\tilde{e}_j\}_{j \in J}$ is an orthonormal basis in $H$.
Then, since $K(\cdot,t) \in H$ for all $t \in T$ by Definition \ref{def_rkhs}, it follows from Lemma 8 in p.\ 60 of \cite{lax} that $K(\cdot,t)$ is represented as
\begin{align*}
K(\cdot,t) = \sum_{i \in I} \alpha_i(t) \sqrt{\lambda_i} e_i + \sum_{j \in J} \beta_j(t) \tilde{e}_j,
\end{align*}
where $\alpha_i(t) = \langle K(\cdot,t), \sqrt{\lambda_i} e_i \rangle_H$, $\beta_j(t) = \langle K(\cdot,t), \tilde{e}_j \rangle_H$, and the series converges absolutely with respect to $\|\cdot\|_H$. 
Moreover, by the reproducing property of $K$, we have
\begin{align*}
\alpha_i(t) = \bigl\langle K(\cdot,t), \sqrt{\lambda_i} e_i \bigr\rangle_H = \sqrt{\lambda_i} e_i(t) \quad \text{and} \quad
\beta_j = \bigl\langle K(\cdot,t), \tilde{e}_j \bigr\rangle_H = \tilde{e}_j(t),
\end{align*}
from which
\begin{align*}
K(\cdot,t) = \sum_{i \in I} \lambda_i e_i(t) e_i + \sum_{j \in J} \tilde{e}_j(t) \tilde{e}_j.
\end{align*}
Evaluating this expression at $t$, we have $K(t,t) \ge \sum_{i \in I} \lambda_i e_i(t)^2$, as $\tilde{e}_j(t)^2 \ge 0$ for all $j \in J$.

For the second inequality, let $n \in \N$ be any finite number.
Using the Cauchy--Schwartz inequality on $\R^n$, our previous conclusion implies that
\begin{align*}
\sum_{i=1}^n \lambda_i \qty|e_i(s)| \qty|e_i(t)|
&\le \qty(\sum_{i=1}^n \lambda_i e_i(s)^2)^{\frac{1}{2}} \cdot \qty(\sum_{i=1}^n \lambda_i e_i(t)^2)^{\frac{1}{2}} \\
&\le \qty(\sum_{i \in I} \lambda_i e_i(s)^2)^{\frac{1}{2}} \cdot \qty(\sum_{i \in I} \lambda_i e_i(t)^2)^{\frac{1}{2}}
\le \sqrt{K(s,s)} \sqrt{K(t,t)}.
\end{align*}
Since $I$ is at most countable, letting $n \to \infty$ yields the desired result.
\end{proof}

The next lemma is weak Mercer's theorem of \cite{steinwart2012scovel} that is stated by putting their Lemma 2.3, Lemma 2.12, and Corollary 3.2 altogether.

\begin{lemma}
\label{lem_mercer}
Let $K$ be a positive semidefinite kernel that is integrable on the diagonal, and let $\{\lambda_i\}_{i \in I}$ and $\{e_i\}_{i \in I}$ be as in Lemma \ref{lem_mercer_diag}.
Then, for each $t \in T$, there exists a $\nu$-null set $N_t \subseteq T$ such that
\begin{align*}
K(t,t') = \sum_{i \in I} \lambda_i e_i(t) e_i(t'), \quad \forall t' \in T \setminus N_t,
\end{align*}
where the series converges absolutely.
In particular, the representation holds for $\nu \otimes \nu$-almost every $(t,t') \in T^2$.
\end{lemma}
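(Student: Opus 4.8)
The plan is to build directly on the $H$-representation of $K(\cdot,t)$ that was already extracted in the proof of Lemma~\ref{lem_mercer_diag}. There, after extending $\{\sqrt{\lambda_i}\,e_i\}_{i \in I}$ to an orthonormal basis $\{\sqrt{\lambda_i}\,e_i\}_{i \in I} \cup \{\tilde e_j\}_{j \in J}$ of $H$ and using the reproducing property, we obtained, for each fixed $t \in T$, the $\|\cdot\|_H$-convergent expansion
\[
K(\cdot,t) = \sum_{i \in I} \lambda_i e_i(t)\, e_i + g_t, \qquad g_t := \sum_{j \in J} \tilde e_j(t)\, \tilde e_j,
\]
where $g_t$ is the $H$-orthogonal projection of $K(\cdot,t)$ onto $\bigl(\overline{\Span\{e_i\}_{i \in I}}\bigr)^{\perp_H}$. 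The lemma thus reduces to two claims: (a) the residual $g_t$ represents the zero element of $\calL_2(\nu)$, so that evaluating the displayed expansion at $\nu$-a.e.\ $t'$ annihilates the $g_t$-term; and (b) the leading series may be evaluated pointwise and converges absolutely. Claim (b) is immediate from the second inequality of Lemma~\ref{lem_mercer_diag}, which gives $\sum_{i \in I} \lambda_i |e_i(t)|\,|e_i(t')| \le \sqrt{K(t,t)}\sqrt{K(t',t')} < \infty$ for every pair $(t,t')$, since $K$ is a genuine $\R$-valued function.

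For claim (a), I would introduce the inclusion operator $\iota: H \to \calL_2(\nu)$ sending $h \in H$ to its $\nu$-equivalence class. It is well defined and bounded: by the reproducing property and Cauchy--Schwartz, $|h(t)| \le \|h\|_H \sqrt{K(t,t)}$, whence $\|\iota h\|_{\calL_2(\nu)}^2 \le \|h\|_H^2 \int K(t,t)\,\d\nu(t) < \infty$ because $K$ is integrable on the diagonal. The crux is to identify $\ker \iota$ with the $H$-span of the residual directions. A direct computation shows $\iota \iota^* = \mathbf{K}$ on $\calL_2(\nu)$ and $\iota^*\iota(\sqrt{\lambda_i}\,e_i) = \lambda_i (\sqrt{\lambda_i}\,e_i)$ in $H$, so $\{\sqrt{\lambda_i}\,e_i\}_{\lambda_i > 0}$ is a complete set of eigenvectors of the compact self-adjoint operator $\iota^*\iota$ for its nonzero eigenvalues $\{\lambda_i\}_{\lambda_i>0}$. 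Consequently each $\tilde e_j$, being $H$-orthogonal to every $\sqrt{\lambda_i}\,e_i$, lies in $\ker(\iota^*\iota) = \ker \iota$, so $\iota \tilde e_j = 0$. Since $\iota$ is continuous and $g_t$ converges in $H$, we get $\iota g_t = \sum_{j} \tilde e_j(t)\, \iota \tilde e_j = 0$, i.e.\ $g_t(t') = 0$ for $\nu$-a.e.\ $t'$. Setting $N_t := \{t' : g_t(t') \neq 0\}$, a $\nu$-null set, yields $K(t,t') = \sum_{i} \lambda_i e_i(t) e_i(t')$ for all $t' \in T \setminus N_t$. This step is exactly where I would invoke Lemma~2.3, Lemma~2.12, and Corollary~3.2 of \cite{steinwart2012scovel}, which carry out precisely this singular-value analysis of the embedding $H \hookrightarrow \calL_2(\nu)$ for measurable positive semidefinite kernels.

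Finally, for the ``in particular'' clause I would upgrade the per-$t$ identity to $\nu \otimes \nu$-a.e.\ equality by Tonelli's theorem. The partial sums $S_m(t,t') := \sum_{i \le m} \lambda_i e_i(t) e_i(t')$ are jointly measurable, as each $e_i$ is $\nu$-measurable, and by claim (b) they converge everywhere to a jointly measurable limit $S$; together with the joint measurability of $K$, the exceptional set $E := \{(t,t') : K(t,t') \neq S(t,t')\}$ is $\Sigma_\nu \otimes \Sigma_\nu$-measurable. Since every $t$-section satisfies $E_t \subseteq N_t$ and is therefore $\nu$-null, Tonelli gives $(\nu\otimes\nu)(E) = \int \nu(E_t)\,\d\nu(t) = 0$, which is the asserted almost-everywhere representation.

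I expect the main obstacle to be claim (a) --- specifically, the verification that the $H$-orthogonal complement of $\Span\{\sqrt{\lambda_i}\,e_i\}_{i \in I}$ is annihilated by $\iota$, equivalently that the spectral system of $\mathbf{K}$ on $\calL_2(\nu)$ exhausts the singular system of the embedding $H \hookrightarrow \calL_2(\nu)$. Everything else --- the absolute convergence from Lemma~\ref{lem_mercer_diag} and the Tonelli upgrade --- is routine once this identification, supplied by \cite{steinwart2012scovel}, is in place.
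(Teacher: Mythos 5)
Your proposal is correct, but note that the paper does not actually prove Lemma~\ref{lem_mercer} at all: it states the lemma as a packaging of Lemma~2.3, Lemma~2.12, and Corollary~3.2 of \cite{steinwart2012scovel} and offers no argument. What you have done is reconstruct the proof that sits behind that citation, and your reconstruction is the standard one: take the $H$-expansion $K(\cdot,t)=\sum_i \lambda_i e_i(t)e_i + g_t$ already extracted in the proof of Lemma~\ref{lem_mercer_diag}, get pointwise absolute convergence of the leading series from the diagonal bound $\sum_i \lambda_i |e_i(t)||e_i(t')|\le \sqrt{K(t,t)}\sqrt{K(t',t')}$, and kill the residual $g_t$ in $\calL_2(\nu)$ by identifying $\ker\iota$ for the embedding $\iota:H\to\calL_2(\nu)$ with the $H$-orthocomplement of $\Span\{\sqrt{\lambda_i}e_i\}_{\lambda_i>0}$ via the factorization $\iota\iota^*=\mathbf{K}$; the Tonelli upgrade to $\nu\otimes\nu$-a.e.\ equality is routine. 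The one place you are (rightly) leaning on the external source rather than arguing from scratch is exactly where the work is: that every $h\in H$ is $\nu$-measurable (so $\iota$ is well defined into $\calL_2(\nu)$), that $\iota$ is compact (indeed Hilbert--Schmidt, by integrating $\sum_\alpha |b_\alpha(t)|^2\le K(t,t)$ over the diagonal), and that the eigensystem of $\mathbf{K}$ from Lemma~\ref{lem_eigen_pos} lifts to the full singular system of $\iota$ --- these are precisely the contents of the cited results of \cite{steinwart2012scovel}. So your argument and the paper's bare citation rest on the same machinery; yours has the advantage of making explicit where each hypothesis (integrability on the diagonal, positive semidefiniteness, measurability of $K$) enters, at the cost of not being fully self-contained at the singular-value-decomposition step.
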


Now, leveraging the preliminary results introduced so far, we derive an upper bound on the eigenvalues of the Hadamard product of two integral operators.
The next lemma plays a pivotal role in the proof of Proposition \ref{prop_unique}.
Mathematically, it can be seen as an infinite-dimensional analog of Schur's product theorem for positive semidefinite matrixes \citep[Theorem 3.4 of][]{styan1973}.

\begin{lemma} \label{lem_eigen}
Let $K:T^2 \to \R$ be a positive semidefinite kernel that is bounded on the diagonal.
Also, let $R:T^2 \to \R$ be a square-integrable kernel satisfying {\rm (R1)}.
Define the Hadamard product $\mathbf{K}\circ \mathbf{R}:\calL_2(\nu) \to \calL_2(\nu)$ by
\begin{align*}
\qty[\mathbf{K}\circ \mathbf{R}](\phi) \coloneqq \int \qty[k(\cdot,t)r(\cdot,t)]\phi(t) \d \nu (t), \quad \forall \phi \in \calL_2(\nu).
\end{align*}
Then, $\mathbf{K}\circ \mathbf{R}$ is a compact linear operator, and all real eigenvalues of $\mathbf{K}\circ \mathbf{R}$ are less than $\|K\|_{\calL_\infty(\nu)}$.
In particular, if $\mathbf{K}$ has at least one strictly positive eigenvalue, then all real eigenvalues of $\mathbf{K}\circ \mathbf{R}$ are strictly less than $\|K\|_{\calL_\infty(\nu)}$.
\end{lemma}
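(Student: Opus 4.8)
The plan is to reduce the eigenvalue bound to a weighted sum of Rayleigh quotients of $\mathbf{R}$, to which (R1) applies termwise, after expanding $K$ through the weak Mercer representation of Lemma~\ref{lem_mercer}. I first settle compactness: by the Cauchy--Schwartz inequality \eqref{k_cs} together with the diagonal bound, $|K(s,t)| \le \sqrt{K(s,s)}\sqrt{K(t,t)} \le \|K\|_{\calL_\infty(\nu)}$ for $\nu\otimes\nu$-almost every $(s,t)$, so the Hadamard kernel $(s,t)\mapsto K(s,t)R(s,t)$ is square integrable with $\|K\cdot R\|_{\calL_2(\nu\otimes\nu)} \le \|K\|_{\calL_\infty(\nu)}\,\|R\|_{\calL_2(\nu\otimes\nu)}$; compactness of $\mathbf{K}\circ\mathbf{R}$ then follows from Lemma~\ref{lem_eigen_com}. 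Next I fix a real eigenvalue $\mu$ with eigenvector $\phi\in\calL_2(\nu)\setminus\{0\}$, and since $(\mathbf{K}\circ\mathbf{R})\phi = \mu\phi$, pairing with $\phi$ gives the identity $\mu\|\phi\|_{\calL_2(\nu)}^2 = \int\int\phi(s)K(s,t)R(s,t)\phi(t)\,\d\nu(s)\,\d\nu(t)$, which holds even though $\mathbf{K}\circ\mathbf{R}$ need not be self-adjoint.

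The core step is to substitute the Mercer expansion $K(s,t) = \sum_{i\in I}\lambda_i e_i(s)e_i(t)$ (valid a.e.\ by Lemma~\ref{lem_mercer}, with $\lambda_i \ge 0$ by Lemma~\ref{lem_eigen_pos}) into this identity and interchange summation and integration. Writing $g_i \coloneqq e_i\phi$, this yields $\mu\|\phi\|^2 = \sum_{i\in I}\lambda_i\langle g_i,\mathbf{R}g_i\rangle$; each $g_i$ lies in $\calL_2(\nu)$ because the diagonal bound $\lambda_i e_i(t)^2 \le K(t,t) \le \|K\|_{\calL_\infty(\nu)}$ from Lemma~\ref{lem_mercer_diag} shows every $e_i$ with $\lambda_i>0$ is essentially bounded. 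Condition (R1) places the numerical range $\Phi(\mathbf{R})$ in $(-\infty,1)$, so $\langle g_i,\mathbf{R}g_i\rangle \le \|g_i\|^2$, strictly whenever $g_i\neq 0$; as $\lambda_i\ge 0$, this gives $\mu\|\phi\|^2 \le \sum_{i\in I}\lambda_i\|g_i\|^2$. By monotone convergence and the first inequality in \eqref{mercer_diag}, $\sum_{i\in I}\lambda_i\|g_i\|^2 = \int\phi(t)^2\sum_{i\in I}\lambda_i e_i(t)^2\,\d\nu(t) \le \|K\|_{\calL_\infty(\nu)}\|\phi\|^2$, and dividing by $\|\phi\|^2>0$ yields $\mu\le\|K\|_{\calL_\infty(\nu)}$.

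For the strict assertion, suppose $\mu = \|K\|_{\calL_\infty(\nu)}$ while some $\lambda_i>0$. Then every inequality above is an equality, so each nonnegative term $\lambda_i(\|g_i\|^2-\langle g_i,\mathbf{R}g_i\rangle)$ must vanish, forcing $g_i = e_i\phi = 0$ for every $i$ with $\lambda_i>0$. Consequently $\sum_{i\in I}\lambda_i\|g_i\|^2 = 0$, yet this sum also equals $\|K\|_{\calL_\infty(\nu)}\|\phi\|^2$; since $\phi\neq 0$ this forces $\|K\|_{\calL_\infty(\nu)}=0$, whence $K(t,t)=0$ a.e.\ and $K=0$ a.e.\ by \eqref{k_cs}, contradicting the existence of a strictly positive eigenvalue. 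Hence $\mu<\|K\|_{\calL_\infty(\nu)}$ in that case.

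The main obstacle is justifying the term-by-term integration, which I would handle by dominated convergence applied to the partial sums $K_N(s,t) \coloneqq \sum_{i\le N}\lambda_i e_i(s)e_i(t)$. The second inequality in \eqref{mercer_diag} supplies the uniform bound $|K_N(s,t)| \le \sqrt{K(s,s)}\sqrt{K(t,t)} \le \|K\|_{\calL_\infty(\nu)}$, so the integrands are dominated by $\|K\|_{\calL_\infty(\nu)}|\phi(s)|\,|R(s,t)|\,|\phi(t)|$, whose double integral is bounded by $\|K\|_{\calL_\infty(\nu)}\|\phi\|^2\|R\|_{\calL_2(\nu\otimes\nu)}<\infty$ via Cauchy--Schwartz and the square-integrability of $R$. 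Each $K_N$ contributes the finite sum $\sum_{i\le N}\lambda_i\langle g_i,\mathbf{R}g_i\rangle$, so passing to the limit establishes the series identity; the interchange in the diagonal-bound step needs no domination, since all summands are nonnegative and Tonelli's theorem applies directly.
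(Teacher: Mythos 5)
Your proof is correct and follows essentially the same route as the paper's: compactness via square-integrability of the product kernel, the weak Mercer expansion of Lemma~\ref{lem_mercer} with dominated convergence justified by the bound in \eqref{mercer_diag}, termwise application of (R1) to $\langle \phi e_i,\mathbf{R}(\phi e_i)\rangle$, and the diagonal bound to close the estimate. The only (harmless) variation is in the strictness step, where you run an equality-case contradiction while the paper directly exhibits one index $i$ with $\lambda_i>0$ and $\|\phi e_i\|_{\calL_2(\nu)}>0$ for which the inequality is strict.
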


\begin{proof}
Since $K$ is bounded and $R$ is square integrable, $(s,t) \mapsto K(s,t)R(s,t)$ defines a square-integrable kernel, from which the compactness of $\mathbf{K}\circ \mathbf{R}$ follows from Lemma~\ref{lem_eigen_com}.
If all eigenvalues of $\mathbf{K}$ are zero, then Lemma~\ref{lem_mercer} implies that for any $t \in T$, $k(t,\cdot) = 0$ $\nu$-a.e., which means that $\mathbf{K}\circ \mathbf{R}$ is the zero operator.
This implies that $0$ is the only eigenvalue of $\mathbf{K}\circ \mathbf{R}$, so the lemma holds true.

Assume that $\mathbf{K}$ has at least one strictly positive eigenvalue, and let $\{\lambda_i\}_{i \in I}$ and $\{e_i\}_{i \in I}$ be given as in Lemma~\ref{lem_mercer_diag}.
Notice that $\|K\|_{\calL_\infty(\nu)} > 0$ holds.
Thus, it suffices to show that $\mu < \|K\|_{\calL_\infty(\nu)}$ holds for any strictly positive eigenvalue $\mu$ of $\mathbf{K}\circ \mathbf{R}$.
Let $\phi \in \calL_2(\nu)$ be the eigenvector of $\mathbf{K}\circ \mathbf{R}$ associated with $\mu$.
Normalizing $\|\phi\|_{\calL_2(\nu)}= 1$, we take the inner product of $\phi$ with each side of the eigenequation $\mu \phi = [\mathbf{K}\circ \mathbf{R}] (\phi)$ to obtain
\begin{align}
\mu &= \left\langle [\mathbf{K}\circ \mathbf{R}] (\phi), \phi \right\rangle_{\calL_2(\nu)} \nonumber \\
&= \int \phi(s) \qty(\int K(s,t)R(s,t)\phi(t) \d \nu(t)) \d \nu(s) \nonumber \\
&= \int \phi(s) \qty(\int \qty(\sum_{i \in I} \lambda_i e_i(s) e_i(t)) R(s,t)\phi(t) \d \nu(t)) \d \nu(s) \nonumber \\
&= \int \int \lim_{n \to \infty} \underbrace{\qty(\sum_{i=1}^n \lambda_i \qty[\phi(s)e_i(s)] R(s,t) \qty[\phi(t)e_i(t)])}_{\eqqcolon \ \ell_n(s,t)} \d \nu(t) \d \nu(s), \label{mu_exp0}
\end{align}
where the third equality follows from Lemma~\ref{lem_mercer}.
Moreover, by Lemma~\ref{lem_mercer_diag}, it follows that
\begin{align*}
\qty|\ell_n(s,t)| &\le \qty|\phi(s)| \qty|R(s,t)|\qty|\phi(t)| \cdot \qty(\sum_{i=1}^n \lambda_i \qty|e_i(s)| \qty|e_i(t)|) \\
&\le \qty|\phi(s)| \qty|R(s,t)|\qty|\phi(t)| \cdot \sqrt{K(s,s)} \sqrt{K(t,t)}
\le \|K\|_{\calL_\infty(\nu)} \cdot \qty|\phi(s)| \qty|R(s,t)|\qty|\phi(t)|,
\end{align*}
where the last inequality holds for every $s,t \in T \setminus N$ with some $\nu$-null set $N \subseteq T$.
Since $\phi$ and $R$ are square integrable, this shows that $|\ell_n|$ are uniformly bounded $\nu \otimes \nu$-a.e.\ by an integrable function.
Hence, the dominated convergence theorem \citep[Theorem 11.21 of][]{ab2006} justifies exchanging the summation and integrations in \eqref{mu_exp0}, from which we have
\begin{align} \label{mu_exp1}
\mu = \sum_{i \in I} \lambda_i \int \int \qty[\phi(s)e_i(s)] R(s,t) \qty[\phi(t)e_i(t)] \d \nu(t) \d \nu(s).
\end{align}

Now, for each $i \in I$, notice that the equivalence class of $\phi e_i$ belongs to $\calL_2(\nu)$, as $e_i$ is bounded $\nu$-a.e.\ due to Lemma~\ref{lem_mercer_diag}.
Hence, (R1) implies that
\begin{align} \label{mu_exp2}
\int \int \qty[\phi(s)e_i(s)] R(s,t) \qty[\phi(t)e_i(t)] \d \nu(t) \d \nu(s)
\le \int \phi(t)^2 e_i(t)^2 \d \nu(t), \quad \forall i \in I,
\end{align}
where the inequality is strict unless $\|\phi e_i\|_{\calL_2(\nu)} = 0$.
Indeed, there must exist at least one index $i \in I$ such that $\lambda_i > 0$ and $\|\phi e_i\|_{\calL_2(\nu)} > 0$, since otherwise, \eqref{mu_exp1} dictates that $\mu = 0$, a contradiction to our initial assumption.
Therefore, by \eqref{mu_exp1} and \eqref{mu_exp2}, again by using the dominated convergence theorem, it follows that
\begin{align*}
\mu 
< \int \phi(t)^2 \qty(\sum_{i \in I} \lambda_i e_i(t)^2) \d \nu(t) 
\le \int \phi(t)^2 K(t,t) \d \nu(t)
\le \|K\|_{\calL_\infty(\nu)}.
\end{align*}
where the first weak inequality follows from Lemma~\ref{lem_mercer_diag}, and the second from the boundedness of $K$ and the normalization of $\phi$.
\end{proof}


\section{Proof for Section \ref{sec_pettis}} \label{app_pettis}

\subsection*{Proof of Proposition \ref{prop_pettis}}

As mentioned in the main text, it should be clear that weak measurability implies (P1).
For the converse direction, let $f$ be any process satisfying (P1).
Let $F \coloneqq \overline{\Span}\qty{f(t): t \in T}$ be the closure of the finite linear span of $\{f(t)\}_{t \in T}$, and let $F^\bot$ be the orthogonal complement of $F$.
By Theorem 3 in p.\ 55 of \cite{lax}, for any $x \in X$, there exists a unique orthogonal decomposition $(x',x'') \in F \times F^\bot$ such that $x=x'+x''$.
Then, since $\langle x,f(t) \rangle = \langle x',f(t) \rangle$ holds for every $t \in T$, it is without loss of generality to assume that $x \in F$ to verify that $f$ is weakly measurable.

By the construction of $F$, we can find a sequence $\{x_n\}_{n \in \N}$, converging to $x$ in norm, such that each $x_n$ is expressed as a finite linear combination
\begin{align} \label{y_conv}
x_n = \sum_{k=1}^{k_n} \beta_{n,k} f(t_{n,k})
\end{align}
with some $k_n \in \N$, $\beta_{n,1},\ldots,\beta_{n,k_n} \in \R$, and $t_{n,1},\ldots,t_{n,k_n} \in T$.
Set
\begin{align*}
\bar{x}_n = \frac{1}{n} \sum_{m=1}^n x_m, \quad \forall n \in \N.
\end{align*}
By the triangle inequality, it follows that
\begin{align*}
\|x-\bar{x}_n\| \le \frac{1}{n} \sum_{m=1}^n \|x-x_m\|,
\end{align*}
where the right-hand side is the Ces\`aro mean of $\|x-x_m\|$ through $m=1,\ldots,n$.
Hence, since $x_m \ton x$, it follows that $\bar{x}_n \ton x$.
Then, for each $n$, let $\bar{\phi}_n:T \to \R$ be a function given by
\begin{align*}
\bar{\phi}_n (t) = \frac{1}{n} \sum_{m=1}^n \phi_m(t) \quad \text{where} \quad 
\phi_m (t) = \sum_{k=1}^{k_m} \beta_{m,k} \left\langle f(t_{m,k}), f(t) \right\rangle, \quad \forall t \in T.
\end{align*}
Notice that $\phi_m(t) = \langle x_m, f(t) \rangle$ holds by \eqref{y_conv}, while we have $\lim_{m \to \infty} \langle x_m, f(t) \rangle = \langle x,f(t) \rangle$ since norm convergence implies weak convergence.
Thus, it follows that $\lim_{m \to \infty} \phi_m(t) = \langle x,f(t) \rangle$ for every $t \in T$.
Hence, by the fact that $\bar{\phi}_n(t)$ is the Ces\`aro mean of $\phi_m(t)$, we see that
\begin{align*}
\lim_{n \to \infty} \bar{\phi}_n(t) = \left\langle x,f(t) \right\rangle, \quad \forall t \in T.
\end{align*}
For every $n \in \N$, notice that $\bar{\phi}_n$ is $\nu$-measurable, as being a finite linear combination of functions $t \mapsto \langle f(t_{m,k}), f(t) \rangle$, each of which is $\nu$-measurable by (P1).
Thus, since the mapping $t \mapsto \langle x,f(t) \rangle$ obtains as the pointwise limit of $\bar{\phi}_n$, it is $\nu$-measurable by Lemma 4.29 of \cite{ab2006}.

Next, we show that $f$ is Pettis integrable if it satisfies both (P1) and (P2).
By the Cauchy--Schwartz inequality, we have
\begin{align*}
\int \qty|\langle x,f(t) \rangle| \d\nu(t) \le \|x\| \int \|f(t)\| \d\nu(t),
\end{align*}
where the right-hand side is finite by (P2).
Now, take any sequence $\{x_n\}_{n \in \N} \subseteq X$ such that $x_n \tow y$.
Let us show that $\langle x_n, f(\cdot) \rangle$ converges to $\langle x, f(\cdot)\rangle$ in $L_1$, i.e.,
\begin{align} \label{conv_l1}
\lim_{n \to \infty} \int \qty| \langle x_n-y, f(t) \rangle| \d\nu(t) = 0.
\end{align}
Since any weakly convergent sequence is norm-bounded, we can find some $c \in \R_{++}$ such that $c > \|x\| \lor \sup_{n \in \N} \|y_n\|$.
For each $t \in T$, by the Cauchy--Schwartz inequality, we have
\begin{align*}
\qty| \langle x_n-x, f(t) \rangle| \le \|x_n-y\| \cdot \|f(t)\| < 2c \|f(t)\|,
\end{align*}
whereas $t \mapsto 2c\|f(t)\|$ is integrable by (P2).
Hence, by the dominated convergence theorem, it follows that
\begin{align*}
\lim_{n \to \infty} \int \qty| \langle x_n-x, f(t) \rangle| \d\nu(t)
=  \int \lim_{n \to \infty} \qty| \langle x_n-x, f(t) \rangle| \d\nu(t)
= 0,
\end{align*}
where the second equality holds by $x_n \tow x$.
Thus, (\ref{conv_l1}) is verified.
Hence, Proposition 1 of \cite{huff} implies that $f$ is Pettis integrable.
\hfill {\it Q.E.D.}

\begin{example} \label{ex_nonsep}
We show that (P2) can be violated by Pettis integrable processes.
Let $T=[0,1]$ with the Lebesgue measure, and let $X$ be any non-separable Hilbert space that has an uncountable orthonormal system $\{e(t)\}_{t \in T}$.
Then, let $f(t) = \delta(t)e(t)$, where $\delta: T \to \R$ is any function that need be neither measurable nor integrable.
Regardless of $\delta$, however, $f$ is weakly measurable since $\langle f(s),f(t) \rangle = 0$ whenever $s \neq t$.
Let us show that $f$ is Pettis integrated to zero.
Given any element $x \in X$, consider its orthogonal decomposition $x = x' + x'' \in E \times E^\bot$, where $E \coloneqq \overline{\Span} \{e(t)\}_{t \in T}$.
Then, $\langle x,f(t) \rangle = \langle x',f(t) \rangle$ holds for every $t \in T$.
Moreover, by the construction of $E$, there exists a sequence $\{x_n\}_{n \in \N}$, converging to $x'$ in norm, such that each $x_n$ is a finite linear combination of elements of $\{e(t)\}_{t \in T}$.
Now, for each $n$, we have $\langle x_n, f(t) \rangle = 0$ except for at most finitely many $t$, whence $\int \langle x_n, f(t) \rangle \d\nu(t) = 0$.
Thus, the mapping $t \mapsto |\langle x_n, f(t) \rangle|$ is bounded $\nu$-a.e.\ uniformly across all $n$, from which the dominated convergence theorem implies that
\begin{align*}
0 = \lim_{n\to \infty} \int \langle x_n, f(t) \rangle \d\nu(t)
= \int \langle x', f(t) \rangle \d\nu(t)
= \int \langle x, f(t) \rangle \d\nu(t).
\end{align*}
Since $x$ is arbitrary, we have shown that $\text{w-}\int f(t) \d\nu(t) = 0$.
\hfill $\triangle$
\end{example}


\subsection*{Proof of Proposition \ref{prop_var}}
By using the definition of Pettis integral, we obtain \eqref{fubini_cov} as follows:
\begin{align*}
\Cov \qty[x, \int f(t) \d \nu(t)]
&= \E \qty[x \int f(t) \d \nu(t)] - \E \qty[x] \E \qty[\int f(t) \d \nu(t)] \\
&= \int \E \qty[x f(t)] \d \nu(t) - \E \qty[x] \int \E \qty[f(t)] \d \nu(t) \\
&= \int \qty(\E \qty[x f(t)] - \E \qty[x] \E \qty[f(t)]) \d \nu(t) \\
&= \int \Cov \qty[x, f(t)] \d \nu(t),
\end{align*}
where the first and last equalities are due to the definition of covariance.
Moreover, assuming that $s \mapsto \int \Cov \qty[f(s),f(t)] \d \nu(t)$ defines a $\nu$-integrable function, \eqref{fubini_var} is obtained by taking $\int f(s) \d \nu(s)$ in the position of $x$ above and by applying \eqref{fubini_cov}.
\hfill {\it Q.E.D.}


\subsection*{Proof of Proposition \ref{prop_cond}}

Suppose that $\hat{f}(\cdot) = \E[f(\cdot) \mid \hat{\Pi}]$ is Pettis integrable.
To verify \eqref{fubini_cond}, fix any event $E \in \hat{\Pi}$.
By the definition of conditional expectation \citep[see, e.g., Chapter 10 of][]{dudley}, it suffices to show that
\begin{align} \label{eq_fubini_E}
\E \qty[\1_E \int f(t) \d\nu(t)]
= \E \qty[\1_E \int \E \qty[f(t) \mid \hat{\Pi}]\d\nu(t)].
\end{align}
Since $\E|\1_E|^2 = \P(E) \le 1$ is finite, the left-hand side is written as
\begin{align*}
\E \qty[\1_E \int f(t) \d\nu(t)] = \int \E \qty[\1_E f(t)] \d\nu(t).
\end{align*}
On the other hand, since $\int \E [f(t) \mid \hat{\Pi}]\d\nu(t)$ is the Pettis integral of the process $\hat{f}$, the right-hand side is written as
\begin{align*}
\E \qty[\1_E \int \E \qty[f(t) \mid \hat{\Pi}]\d\nu(t)]
&= \int \E \qty[\1_E \E \qty[f(t) \mid \hat{\Pi}]] \d\nu(t) \nonumber \\
&= \int \E \qty[ \E \qty[\1_E f(t) \mid \hat{\Pi}]] \d\nu(t)
= \int \E \qty[\1_E f(t)] \d\nu(t),
\end{align*}
where the second equality holds true since $\1_E$ is $f(t)$-measurable, and the third equality comes from the law of iterated expectations.
Hence, (\ref{eq_fubini_E}) is confirmed.
\hfill {\it Q.E.D.}


\section{Proof for Section \ref{sec_eq}} \label{app_eq}

\subsection*{Proof of Lemma \ref{lem_r}}
The first assertion directly follows from Lemma \ref{lem_num}.
Next, if $\sup_{s,t \in T} |R(s,t)| < 1$, Lemma \ref{lem_eigen_com} implies that $\|\mathbf{R}\|_{\rm op} < 1$.
Then, again by Lemma \ref{lem_num}, $\|\mathbf{R}\|_{\rm op} < 1$ implies that $\Phi(\mathbf{R}) \subseteq (-1,1)$, from which we conclude that (R1) is satisfied.
Alternatively, if $R$ is multiplicatively separable as in the proposition, then for any $\phi \in \calL_2(\nu) \setminus \{0\}$, we have
\begin{align*}
\left\langle \phi, \mathbf{R}\phi \right\rangle_{\calL_2(\nu)}
= \int \int \phi(s) \qty[rq(s)q(t)] \phi(t) \d \nu(s) \d \nu(t)
= r \cdot \qty(\int q(t)\phi(t) \d \nu(t))^2,
\end{align*}
which has the same sign as $r$.
Given that $r < 0$, (R1) is satisfied.
\hfill {\it Q.E.D.}

\begin{example} \label{ex_uni}
To illustrate the gap between (R1) and (R2), we consider a \emph{uni-directional} payoff structure such that agents are aligned on the line segment, wherein each agent receives strategic influences from only those standing on the right to her.
Specifically, let $T = [0,1]$, $r \in \R$, and $R:[0,1]^2 \to \R$ be given by
\begin{align*}
R(s,t) = r \cdot \1_{\{s < t\}}, \quad \forall s,t \in [0,1].
\end{align*}
For any non-zero constant function for $\phi$, its Rayleigh quotient can be computed as $\langle \phi, \mathbf{R}\phi \rangle_{\calL_2(\nu)} = r/3$, from which (R1) is violated if $r \ge 3$.
On the other hand, (R2) is satisfied with any value of $r \in \R$ since any undirected payoff structure makes the associated integral operator the so-called ``Volterra type,'' which has no non-zero eigenvalue; see Chapter 20.3 of \cite{lax} for details.
\hfill $\triangle$
\end{example}


\subsection*{Proof of Proposition \ref{prop_moment}}

Let $f$ be an equilibrium under some payoff structure $R$ and an information environment.
Taking unconditional expectations for both sides of \eqref{br}, the law of iterated expectations and the definition of Pettis integral readily imply \eqref{moment1}.
In particular, \eqref{moment1} indicates that $\E \qty[f(\cdot)]$ solves the functional equation $\phi = \mathbf{R} \phi + \psi$ for $\phi \in \calL_2(\nu)$, where $\psi = \E \qty[\theta(\cdot)]$ belongs to $\calL_2(\nu)$ since $\theta$ satisfies (Q1) and (Q2) by assumption.
Since $R$ is square integrable, by Lemma \ref{lem_eigen_com}, $\mathbf{R}: \calL_2(\nu) \to \calL_2(\nu)$ is a compact linear operator.
Hence, by Theorem 3.4 of \cite{krees}, the equation admits a unique solution if and only if $1$ is not an eigenvalue of $\mathbf{R}$, in which case $\E \qty[f(\cdot)]$ is uniquely determined in $\calL_2(\nu)$.
This means that $\E \qty[f(t)]$ is uniquely determined for $\nu$-almost every $t \in T$, but since \eqref{br} and \eqref{moment1} hold for every agent in an equilibrium, $\E \qty[f(t)]$ is uniquely determined for every $t \in T$.

To obtain the second moment restriction, fix any agent $t \in T$.
By subtracting \eqref{moment1} from \eqref{br}, it follows that
\begin{align*}
f(t) - \E \qty[f(t)] = \E_t \qty[\theta(t) - \E \qty[\theta(t)]] + \E_t \qty[\int R(t,t') \qty(f(t') - \E \qty[f(t')]) \d \nu(t')].
\end{align*}
Noticing that $(f(t)-\E \qty[f(t)])$ is measurable with respect to $t$'s signal, by multiplying both sides of the above equation by $(f(t)-\E \qty[f(t)])$, it follows that
\begin{align*}
\qty|f(t) - \E \qty[f(t)]|^2 = &\E_t \qty[\qty(f(t) - \E \qty[f(t)])\qty(\theta(t) - \E \qty[\theta(t)])] \\
&+ \E_t \qty[\qty(f(t) - \E \qty[f(t)])\int R(t,t') \qty(f(t') - \E \qty[f(t')]) \d \nu(t')],
\end{align*}
and taking unconditional expectations for both sides, the definition of (co)variance yields
\begin{align*}
\Var \qty[f(t)] = \Cov \qty[f(t), \theta(t)]
&+ \E \qty[\qty(f(t) - \E \qty[f(t)])\int R(t,t') \qty(f(t') - \E \qty[f(t')]) \d \nu(t')].
\end{align*}
Moreover, by the definition of Pettis integral, we have
\begin{align*}
&\E \qty[\qty(f(t) - \E \qty[f(t)]) \int R(t,t') \qty(f(t') - \E \qty[f(t')]) \d \nu(t')] \\
&= \int R(t,t') \E \qty[\qty(f(t) - \E \qty[f(t)])\qty(f(t') - \E \qty[f(t')])] \d \nu(t') \\
&= \int R(t,t') \Cov \qty[f(t), f(t')],
\end{align*}
from which \eqref{moment2} is confirmed.
\hfill {\it Q.E.D.}


\subsection*{Proof of Proposition \ref{prop_moment2}}

Consider any $(\xi,\zeta)$ as an equilibrium moment under $R$.
Assuming that (R2) is satisfied, Proposition \ref{prop_moment} implies there exists a unique function $\phi:T \to \R$ such that $\phi(t) = \mathbf{R}\phi(t) + \E \qty[\theta]$ holds for all $t \in T$.

Suppose that $\Var\qty[\theta] > 0$.
Then, by computing the Schur complement, the positivity condition implies that for any $n \in \N$ and $t_1,\ldots,t_n \in T$,
\begin{align*}
\mqty[\xi(t_1,t_1) & \ldots & \xi(t_1,t_n) \\ \vdots & \ddots & \vdots \\ \xi(t_n,t_1) & \cdots & \xi(t_n, t_n)] - \frac{1}{\Var\qty[\theta]} \mqty[\zeta(t_1) \\ \vdots \\ \zeta(t_n)] \mqty[\zeta(t_1) \\ \vdots \\ \zeta(t_n)]^\top \succeq O,
\end{align*}
meaning that $\xi$ is a positive semidefinite kernel.
Therefore, by Theorem 12.1.3 of \cite{dudley}, there exists a Gaussian process $\epsilon:T \to X$, which is distributed independently from $\theta$, such that
\begin{align*}
\E \qty[\epsilon(t)] = 0 \quad \text{and} \quad \Cov \qty[\epsilon(s), \epsilon(t)] = \xi(s,t) - \frac{\zeta(s)\zeta(t)}{\Var\qty[\theta]}, \quad \forall s,t \in T.
\end{align*}
Let $f:T \to X$ be a process defined by
\begin{align*}
f(t) = \phi(t) + \frac{\zeta(t)}{\Var\qty[\theta]} \cdot (\theta - \E \qty[\theta]) + \epsilon (t).
\end{align*}
As being the linear transformation of Gaussian random variables, observe that the collection $\{\theta\} \cup \{f(t)\}_{t \in T}$ constitutes a Gaussian process.
By direct calculations, we can confirm that $\E \qty[f(t)] = \phi(t)$, $\Cov \qty[f(s),f(t)] = \xi(s,t)$, and $\Cov \qty[f(t),\theta] = \zeta(t)$ for all $s,t \in T$, so $f$ possesses the desired second moments as in the proposition.
Moreover, since $f(t)$ and $\theta$ are jointly normally distributed, the conditional Gaussian formula yields
\begin{align} \label{cond_f1}
\E \qty[\theta \mid f(t)] = \begin{cases}
\E \qty[\theta] + \frac{\zeta(t)}{\xi(t,t)} \cdot \qty(f(t) - \phi(t)) &\text{if} \quad \xi(t,t) > 0, \\
\E \qty[\theta] &\text{if} \quad \xi(t,t) = 0.
\end{cases}
\end{align}
Similarly, for any $s,t \in T$, we have
\begin{align} \label{cond_f2}
\E \qty[f(s) \mid f(t)] = \begin{cases}
\phi(s) + \frac{\xi(t,s)}{\xi(t,t)} \cdot \qty(f(t) - \phi(t)) &\text{if} \quad \xi(t,t) > 0, \\
\phi(s) &\text{if} \quad \xi(t,t) = 0.
\end{cases}
\end{align}

Let us show that $f$ constitutes an equilibrium in the game $\langle \theta, \{f(t)\}_{t \in T}, R\rangle$, where each agent $t$'s private signal is exactly given as $f(t)$.
To this end, by Corollary \ref{cor_fubini}, we can rewrite the best-response formula \eqref{br} as
\begin{align} \label{br_f}
f(t) = \int R(t,s) \E \qty[ f(s) \mid f(t)] \d \nu(s) + \E \qty[\theta \mid f(t)], \quad \forall t \in T.
\end{align}
If $\xi(t,t) = 0$, then $\Var\qty[f(t)] = 0$ so that $f(t) = \phi(t)$ holds $\P$-a.s.
Moreover, from \eqref{cond_f1} and \eqref{cond_f2}, the right-hand side of \eqref{br_f} is written as $\mathbf{R}\phi(t) + \E \qty[\theta]$, which coincides with $\phi(t)$ by the construction of $\phi$.
Otherwise, if $\xi(t,t) \neq 0$, again by \eqref{cond_f1} and \eqref{cond_f2}, the right-hand side of \eqref{br_f} can be arranged as
\begin{align*}
&\int R(t,s) \qty(\phi(s) + \frac{\xi(s,t)}{\xi(t,t)} \cdot \qty(f(t) - \phi(t))) \d \nu(s) + \E \qty[\theta] + \frac{\zeta(t)}{\xi(t,t)} \cdot \qty(f(t) - \phi(t)) \\
&\quad = \underbrace{\qty(\int R(t,s)\phi(s) \d \nu(s) + \E \qty[\theta])}_{(\dagger)} + \frac{f(t) - \phi(t)}{\xi(t,t)} \cdot \underbrace{\qty(\int R(t,s)\xi(t,s) \d \nu(s) + \zeta(t))}_{(\ddagger)},
\end{align*}
where $(\dagger)$ is equal to $\phi(t)$ by the construction of $\phi$, and $(\ddagger)$ is equal to $\xi(t,t)$ by the obedience condition.
Therefore, the above expression coincides with $f(t)$, from which we conclude that \eqref{br_f} holds true.

Lastly, let us briefly discuss the case of $\Var \qty[\theta] = 0$.
In this case, the positivity condition dictates that $\zeta(t) = 0$ for all $t \in T$.
Then, by letting $\epsilon: T \to X$ be a unbiased Gaussian process, having $\xi$ as its covariance function, we can confirm that $f(t) = \phi(t) + \epsilon(t)$ works as the desired signal structure and equilibrium.
\hfill {\it Q.E.D.}


\subsection*{Proof of Proposition \ref{prop_unique}}

\subsubsection*{Sufficiency}
Given any information environment and $R$ satisfying (R1), suppose that $f$ and $g$ are two equilibria such that $(s,t) \mapsto \E \qty[f(s)g(t)]$ is jointly measurable.
Let $h=f-g$, and let $\sigma:T \to \R$ and $\rho:T^2 \to \R$ be defined by
\begin{align*}
\sigma(t) = \sqrt{\E\qty|h(t)|^2} \quad \text{and} \quad 
\rho(s,t) = \begin{cases}
\frac{\E \qty[h(s)h(t)]}{\sigma(s) \sigma(t)} &\text{if} \quad \sigma(s)\sigma(t) > 0, \\
0 &\text{otherwise}.
\end{cases}
\end{align*}
Since $f$ and $g$ satisfy (Q1), and since $(s,t) \mapsto \E \qty[f(s)g(t)]$ is jointly measurable, $\sigma$ is $\nu$-measurable and $\rho$ is jointly measurable.
Moreover, since the mean of $h$ is zero by Proposition \ref{prop_moment}, notice that $\qty|\sigma(t)|^2$ expresses the variance of $h(t)$, which is bounded by the sum of $\E \qty|f(t)|^2$ and $\E \qty|g(t)|^2$ due to the Cauchy--Schwartz inequality.
Then, since $f$ and $g$ satisfy (Q2), it follows that $\sigma$ belongs to $\calL_2(\nu)$.
Additionally, Proposition \ref{prop_moment} implies that $\rho(s,t)$ is the correlation coefficient between $h(s)$ and $h(t)$, where the convention of $\rho(s,t) = 0$ applies if either $h(s)$ or $h(t)$ has zero variance.
Hence, we see that $\rho$ is positive semidefinite and bounded on the diagonal with $\|\rho\|_{\calL_{\infty}(\nu)} \le 1$.

Now, since both $f$ and $g$ satisfy the best-response formula \eqref{br}, subtracting one from the other, the linearity of Pettis integral implies that
\begin{align*}
f(t) - g(t) = \E_t \qty[\int R(t,t') \qty(f(t') - g(t')) \d \nu(t')], \quad \forall t \in T.
\end{align*}
Noticing that $f(t)$ and $g(t)$ are $x(t)$-measurable, by multiplying both sides of the equation by $(f(t)-g(t))$, we have
\begin{align*}
\qty|f(t)-g(t)|^2 = \E_t \qty[\qty(f(t)-g(t)) \int R(t,t') \qty(f(t')-g(t')) \d \nu(t')], \quad \forall t \in T.
\end{align*}
Then, by taking unconditional expectations for both sides, it follows that
\begin{align*}
\E \qty|f(t)-g(t)|^2 = \int R(t,t') \E \qty[(f(t)-g(t))\qty(f(t')-g(t'))] \d \nu(t'), \quad \forall t \in T.
\end{align*}
Therefore, by the constructions of $\sigma$ and $\rho$, we obtain
\begin{align} \label{eq2}
\sigma(t) = \int \qty[R(t,t')\rho(t,t')] \sigma(t') \d \nu(t'), \quad \forall t \in T.
\end{align}

To conclude the proof of sufficiency, let us show that $\sigma(t) = 0$ for all $t \in T$.
To this end, we identify $\rho$ as the integral operator, acting on $\calL_2(\nu)$, and consider the following two possible cases.
\begin{itemize}
\item {\it Case 1: $\rho$ has no non-zero eigenvalue.}
In this case, Lemma~\ref{lem_mercer} implies that for any $t \in T$, $\rho(t,\cdot) = 0$ $\nu$-a.e.
By this and \eqref{eq2}, we readily see that $\sigma(t) = 0$ for all $t \in T$.
\item {\it Case 2: $\rho$ has at least one non-zero eigenvalue.}
Let $\mathbf{Q}: \calL_2(\nu) \to \calL_2(\nu)$ be a linear operator given by
\begin{align*}
\mathbf{Q}\phi = \int \qty[R(\cdot,t)\rho(\cdot,t)] \phi(t) \d \nu(t), \quad \forall \phi \in \calL_2(\nu).
\end{align*}
Since $R$ satisfies (R1), Lemma~\ref{lem_eigen} implies that all real eigenvalues of $\mathbf{Q}$ is strictly less than $\|\rho\|_{\calL_{\infty}(\nu)}$.
In particular, since $\|\rho\|_{\calL_{\infty}(\nu)} \le 1$, this means that $1$ cannot be an eigenvalue of  $\mathbf{Q}$.
Therefore, \eqref{eq2} implies that $\|\sigma\|_{\calL_2(\nu)} = 0$
This means that $\sigma(t) = 0$ holds for $\nu$-almost every $t$, but since \eqref{eq2} must be satisfied for every $t$, we have $\sigma(t) = 0$ for every $t$.
\end{itemize}

\subsubsection*{Necessity}
Suppose that (R2) is violated so that $\mathbf{R}$ has some eigenvalue $\lambda \ge 1$.
Consider any eigenvector $\phi \in \calL_2(\nu)$ with $\|\phi\|_{\calL_2(\nu)} > 0$ such that
\begin{align} \label{eigen_phi}
\lambda \phi(t) = \int R(t,t') \phi(t') \d \nu(t'), \quad \forall t \in T.
\end{align}
Define $K:T^2 \to \R$ by $K(t,t)=1$ and $K(t,t')=1/\lambda$ for all $t,t' \in T$ with $t \neq t'$.
Noticing that $1/\lambda \in (0,1]$, we can easily verify that $K$ is a positive semidefinite kernel.
Hence, by Theorem 12.1.3 of \cite{dudley}, there exists a probability space $(\Omega,\Pi,\P)$ on which a Gaussian process $\{x(t)\}_{t \in T}$, having mean $0$ and covariance $K$, takes place.
We let $x(t)$ be the agent $t$'s private signal.

Now, suppose that there exists at least one equilibrium $f$.
We then consider an alternative strategy profile $g$ such that
\begin{align*}
g(t) = f(t) + \phi(t) x(t), \quad \forall t \in T.
\end{align*}
Since $f$ is a regular strategy profile, so must be $g$.
Moreover, by the linearity of Pettis integral, the local aggregate of each agent is given as
\begin{align*}
G(t) \coloneqq \int R(t,t') g(t') \d\nu(t')
= F(t)+ \int R(t,t')\phi(t')x(t') \d\nu(t').
\end{align*}
By adding $\theta(t)$ and taking agent $t$'s conditional expectation, it follows that
\begin{align*}
\E_t \qty[G(t)] + \E_t \qty[\theta(t)]
= \underbrace{\qty(\E_t \qty[F(t)] + \E_t \qty[\theta(t)])}_{(*)} + \E_t \qty[\int R(t,t')\phi(t')x(t') \d\nu(t')],
\end{align*}
while $(*)$ is equal to $f(t)$ since $f$ is an equilibrium.
Moreover, since $\E_t\qty[x(t')] = x(t)/\lambda$ holds by the conditional Gaussian formula, by using Corollary \ref{cor_fubini}, it follows that
\begin{align*}
\E_t \qty[\int R(t,t')\phi(t')x(t') \d\nu(t')] = \frac{x(t)}{\lambda} \int R(t,t')\phi(t')\d\nu(t') = \phi(t) x(t),
\end{align*}
where the last equality holds by \eqref{eigen_phi}.
Hence, we have
\begin{align*}
\E_t \qty[G(t)] + \E_t \qty[\theta(t)] = f(t) + \phi(t)x(t) ,\quad \forall t \in T.
\end{align*}
Since the above right-hand side is equal to $g(t)$ by construction, we conclude that $g$ is an equilibrium.
Notice that $\Var \qty[f(t)-g(t)] = |\phi(t)|^2$ and $\|\phi\|^2_{\calL_2(\nu)} > 0$, which in turn imply that $f$ and $g$ must be different on a set of positive measure.
In particular, our construction reveals that there are a continuum of different equilibria since we can consider an arbitrary eigenvector that is proportional to $\phi$.
\hfill {\it Q.E.D.}


\section{Proof for Section \ref{sec_info}} \label{app_info}

\subsection*{Proof of Lemma \ref{lem_tg_eq}}

Given $f$ as in the proposition, let us show that this strategy profile constitutes an equilibrium.
By Proposition~\ref{prop_var}, the aggregate is calculated as 
\begin{align*}
F \coloneqq \int_0^1 f(t) \d \nu(t) = \frac{m \theta}{1-rm}.
\end{align*}
Since $\E \qty[\theta] = 0$ by assumption, we have $\E \qty[F] = 0$, from which the best-response formula \eqref{br} is confirmed for all uninformed agents $t \in [0,1] \setminus M$.
Moreover, since all informed agents $t \in M$ face no uncertainties about $F$ and $\theta$, it follows that
\begin{align*}
r \E_t \qty[\int_0^1 f(t) \d \nu(t)] + \E_t \qty[\theta]
= \frac{rm \theta}{1-rm} + \theta
= \frac{\theta}{1-rm}
=f(t),
\end{align*}
from which the best-response forma \eqref{br} is confirmed for informed agents as well.
Therefore, $f$ constitutes an equilibrium, which is unique by Proposition \ref{prop_unique}.\footnote{More formally, to apply Proposition~\ref{prop_unique}, we need to verify that the measurability condition in the proposition is satisfied by $f$ and any candidate equilibrium $g$. This can be checked as follows: Since $g$ is an equilibrium, Proposition~\ref{prop_moment} implies that $\E \qty[g(t)] = 0$ for all $t \in [0,1]$. Then, by \eqref{br}, it follows that $g(t) = 0$ for all uninformed agents $t \in [0,1] \setminus M$. Consequently, we have
\begin{align*}
\E \qty[f(s)g(t)] = \begin{cases}
\frac{\E \qty[\theta g(t)]}{1-rm} &\text{if} \quad s,t \in M, \\
0 &\text{otherwise}.
\end{cases}
\end{align*}
From this, we confirm that $(s,t) \mapsto \E \qty[f(s)g(t)]$ is a jointly measurable mapping since $g$ satisfies (Q1) as required by Definition~\ref{def_st}.}
The associated equilibrium moment can be easily calculated by noticing that $\E \qty[\theta] = 0$ and $\Var \qty[\theta] = 1$.
\hfill {\it Q.E.D.}


\subsection*{Proof of Lemma \ref{lem_tg}}

By taking the first-order derivative,
\begin{align*}
V_{\rm tg}'(m) = \frac{\alpha + (r\alpha - 2 \beta)m}{(1-rm)^3}.
\end{align*}
Observe that the numerator is linear in $m$, while the denominator is positive since $r < 1$ and $m \in [0,1]$.
Hence, $V_{\rm tg}$ admits at most one interior maximizer.

First, suppose that $V_{\rm tg}'(0) > 0$ and $V_{\rm tg}'(1) < 0$.
Since $V_{\rm tg}'(0) = \alpha$, we have $V_{\rm tg}'(0) > 0$ if and only if $\alpha > 0$.
Moreover, observe that
\begin{align*}
V_{\rm tg}'(1) = \frac{\alpha + (r\alpha - 2 \beta)}{(1-r)^3} \propto (1+r) \alpha - 2\beta,
\end{align*}
from which $V_{\rm tg}'(1) < 0$ if and only if $\beta > (\frac{1+r}{2}) \cdot \alpha$.
Hence, (T2) is satisfied if and only if $V_{\rm tg}'(0) > 0$ and $V_{\rm tg}'(1) < 0$, under which $V_{\rm tg}(m)$ is maximized at an interior solution.
Moreover, the interior solution is uniquely pinned down by the first-order condition $V'_{\rm tg} = 0$, which gives rise to $m^* = \frac{\alpha}{2\beta - r \alpha}$.
Notice that $m^* \in (0,1)$ holds under (T2), and that $V_{\rm tg}(m^*) = \frac{\alpha^2}{4(\beta-r\alpha)}$ by direct calculations.

Second, suppose that $V_{\rm tg}'(0) \le 0$ or $V_{\rm tg}'(1) \ge 0$.
In this case, $V_{\rm tg}(m)$ is maximized at a corner solution, and hence,
\begin{align*}
V_{\rm tg}^* = \max \qty{V_{\rm tg}(0), V_{\rm tg}(1)} = \max \qty{0,\, \frac{\alpha - \beta}{(1-r)^2}}.
\end{align*}
Therefore, it follows that full disclosure (resp.\ no disclosure) is optimal if and only if
\begin{align*}
\alpha \underset{(\le)}{\ge} \beta \quad \text{and} \quad \qty[\alpha \le 0 \quad \text{or} \quad \beta \le \qty(\frac{1+r}{2}) \cdot \alpha],
\end{align*}
which can be reduced to (T1) (resp.\ (T3)).
\hfill {\it Q.E.D.}


\subsection*{Proof of Proposition \ref{prop_tg}}

We begin the proof with a few auxiliary lemmas that are derived as consequences of obedience and positivity.

\begin{lemma} \label{lem_h}
Given any equilibrium moment $(\xi,\zeta)$, let $\kappa(s,t) = \xi(s,t) - \zeta(s)\zeta(t)$ for all $s,t \in [0,1]$. Then, $\kappa:[0,1]^2 \to \R$ is a positive semidefinite kernel.
\end{lemma}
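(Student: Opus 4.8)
The plan is to read the claim off the \emph{positivity} half of Definition~\ref{def_moment} alone; the obedience condition plays no role here. Fix an arbitrary $n \in \N$ and points $t_1,\ldots,t_n \in [0,1]$. By Definition~\ref{def_psd}, it suffices to show that the $n \times n$ matrix $[\kappa(t_i,t_j)]_{i,j}$ is symmetric and positive semidefinite. Symmetry is immediate: since the $(n+1)\times(n+1)$ matrix in the positivity condition is symmetric (our convention $M \succeq O$ entails symmetry), we have $\xi(t_i,t_j) = \xi(t_j,t_i)$, whence $\kappa(t_i,t_j) = \xi(t_i,t_j) - \zeta(t_i)\zeta(t_j)$ is symmetric as well.

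For positive semidefiniteness, I would carry out the Schur-complement reduction already used in the proof of Proposition~\ref{prop_moment2}, now specialized to the normalization $\Var\qty[\theta] = 1$ that is in force throughout Section~\ref{sec_info}. Concretely, let $c_1,\ldots,c_n \in \R$ be arbitrary and augment them by $c_{n+1} \coloneqq -\sum_{i=1}^n c_i \zeta(t_i)$. Evaluating the quadratic form of the positive semidefinite $(n+1)\times(n+1)$ matrix at $(c_1,\ldots,c_n,c_{n+1})^\top$ yields
\begin{align*}
0 \le \sum_{i,j=1}^n c_i c_j \xi(t_i,t_j) + 2 c_{n+1} \sum_{i=1}^n c_i \zeta(t_i) + c_{n+1}^2 \Var\qty[\theta].
\end{align*}
Substituting $\Var\qty[\theta] = 1$ together with the choice of $c_{n+1}$, the last two terms collapse to $-\qty(\sum_{i=1}^n c_i \zeta(t_i))^2$, so that
\begin{align*}
\sum_{i,j=1}^n c_i c_j \kappa(t_i,t_j) = \sum_{i,j=1}^n c_i c_j \xi(t_i,t_j) - \qty(\sum_{i=1}^n c_i \zeta(t_i))^2 \ge 0.
\end{align*}
Since $c_1,\ldots,c_n$ are arbitrary, $[\kappa(t_i,t_j)]_{i,j} \succeq O$, and since $n$ and the points $t_1,\ldots,t_n$ are arbitrary, $\kappa$ is a positive semidefinite kernel.

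There is essentially no obstacle in this argument; it is a one-line specialization of the Schur-complement computation. The only points deserving a moment's care are that the step uses positivity but not obedience, and that the normalization $\Var\qty[\theta] = 1$ is precisely what makes the Schur complement coincide with $\kappa$ itself rather than with the rescaled kernel $\xi - \zeta\otimes\zeta/\Var\qty[\theta]$ appearing in Proposition~\ref{prop_moment2}. One could equally invoke the standard block-matrix Schur-complement criterion directly, but the explicit substitution above keeps the proof self-contained.
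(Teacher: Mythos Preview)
Your proof is correct and follows precisely the approach the paper takes: the paper's own proof consists of the single sentence ``The lemma immediately follows from the positivity of $(\xi,\zeta)$ and the property of Schur complements,'' and you have simply written out that Schur-complement step explicitly. Your observations that obedience is unused and that the normalization $\Var\qty[\theta]=1$ is what makes the Schur complement equal $\kappa$ itself are both accurate.
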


\begin{proof}
The lemma immediately follows from the positivity of $(\xi,\zeta)$ and the property of Schur complements.
\end{proof}

\begin{lemma} \label{lem_bound}
For any equilibrium moment $(\xi,\zeta)$, it holds that
\begin{align*}
\qty(\int_0^1 \zeta(t) \d t)^2 \le \int_0^1 \int_0^1 \xi(s,t) \d s \d t \le
\min \qty{\int_0^1 \xi(t,t) \d t,\, \qty(\frac{1}{1-r})^2}.
\end{align*}
\end{lemma}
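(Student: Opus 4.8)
The plan is to prove the three displayed inequalities separately, writing $\bar\zeta \coloneqq \int_0^1 \zeta(t)\d t$, $\bar\xi \coloneqq \int_0^1\int_0^1 \xi(s,t)\d s\d t$, and $D \coloneqq \int_0^1 \xi(t,t)\d t$ for the three aggregates appearing in the statement. The left inequality $\bar\zeta^2 \le \bar\xi$ follows almost immediately from Lemma~\ref{lem_h}: the kernel $\kappa(s,t) = \xi(s,t) - \zeta(s)\zeta(t)$ is positive semidefinite and square integrable (since $\xi$ and $\zeta$ are), so the non-negativity property noted after \eqref{h_psd}, obtained by taking $\phi \equiv 1$, gives $\int_0^1\int_0^1 \kappa(s,t)\d s\d t \ge 0$. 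As $\int_0^1\int_0^1 \zeta(s)\zeta(t)\d s\d t = \bar\zeta^2$, this rearranges to $\bar\xi \ge \bar\zeta^2$.

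For the bound $\bar\xi \le D$, I would first observe that the positivity condition in Definition~\ref{def_moment} forces every principal $n\times n$ block $[\xi(t_i,t_j)]$ to be positive semidefinite, so $\xi$ is itself a positive semidefinite kernel in the sense of Definition~\ref{def_psd}. Applying the Cauchy--Schwartz inequality \eqref{k_cs} pointwise, $\xi(s,t) \le \sqrt{\xi(s,s)}\,\sqrt{\xi(t,t)}$, and integrating yields $\bar\xi \le \qty(\int_0^1 \sqrt{\xi(t,t)}\d t)^2$. A further application of Cauchy--Schwartz on the probability space $([0,1],\d t)$ bounds the right-hand side by $\int_0^1 \xi(t,t)\d t = D$, completing the middle inequality.

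The remaining bound $\bar\xi \le (1-r)^{-2}$ is the crux, and is where the obedience condition and the hypothesis $r<1$ enter. Integrating the obedience identity $\xi(t,t) = r\int_0^1 \xi(t,t')\d t' + \zeta(t)$ over $t \in [0,1]$ produces the linear relation $D = r\bar\xi + \bar\zeta$. Combining this with the just-established $D \ge \bar\xi$ gives $\bar\zeta = D - r\bar\xi \ge (1-r)\bar\xi$, and since $\xi$ is positive semidefinite we have $\bar\xi \ge 0$, hence $\bar\zeta \ge (1-r)\bar\xi \ge 0$. Squaring these two non-negative quantities and invoking the left inequality $\bar\zeta^2 \le \bar\xi$ yields $(1-r)^2\bar\xi^2 \le \bar\zeta^2 \le \bar\xi$. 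If $\bar\xi = 0$ the bound is trivial; otherwise dividing by $\bar\xi$ gives $(1-r)^2\bar\xi \le 1$, i.e.\ $\bar\xi \le (1-r)^{-2}$, as desired.

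I expect the main obstacle to be organizing this final step: none of the three quantities is controlled in isolation, and the bound $(1-r)^{-2}$ emerges only from chaining the diagonal bound $\bar\xi \le D$, the positivity bound $\bar\zeta^2 \le \bar\xi$, and the integrated obedience identity $D = r\bar\xi + \bar\zeta$ in exactly the right order. Care is needed to verify that the sign conditions $\bar\xi \ge 0$ and $r < 1$ make the squaring step legitimate, and to handle the degenerate case $\bar\xi = 0$ separately.
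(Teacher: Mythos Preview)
Your proof is correct. The first two inequalities follow the paper's argument essentially verbatim, but your treatment of the bound $\bar\xi \le (1-r)^{-2}$ is genuinely different and cleaner than what the paper does.

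The paper splits into two cases according to the sign of $r$. When $r \in [0,1)$ it chains $D = r\bar\xi + \bar\zeta \le r\bar\xi + \sqrt{\bar\xi} \le rD + \sqrt{D}$ (the last step using $r \ge 0$) and solves the quadratic inequality $(1-r)z^{2} - z \le 0$ with $z = \sqrt{D}$; when $r < 0$ it instead first bounds $\bar\zeta$ via $\bar\zeta^{2} \le D \le r\bar\zeta^{2} + \bar\zeta$ and then feeds this back through $\bar\zeta = D - r\bar\xi \ge (1-r)\bar\xi$. Your argument avoids the case split entirely: from $D \ge \bar\xi$ and the integrated obedience $D = r\bar\xi + \bar\zeta$ you obtain $\bar\zeta \ge (1-r)\bar\xi \ge 0$ in one line (using only $r<1$ and $\bar\xi \ge 0$), square, and combine with $\bar\zeta^{2} \le \bar\xi$ to reach $(1-r)^{2}\bar\xi^{2} \le \bar\xi$. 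The gain is a unified proof that does not need to track which of the two moment bounds interacts favorably with the sign of $r$; the cost, if any, is that the paper's case analysis yields the slightly stronger intermediate fact $D \le (1-r)^{-2}$ when $r \ge 0$, though this is not used elsewhere.
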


\begin{proof}
Since $\xi:[0,1]^2 \to \R$ is a positive semidefinite kernel, the Cauchy--Schwartz inequality \eqref{k_cs} implies that
\begin{align} \label{bound1}
\int_0^1 \int_0^1 \xi(s,t) \d s \d t \le \int_0^1 \xi(t,t) \d t.
\end{align}
Let $\bar{T} \coloneqq [0,1] \cup \{\s\}$, where ``$\s$'' can be thought of as the dummy index representing the state, and endow this set with a probability measure $\bar{\nu}$ that results in a uniform distribution over $[0,1]$ and a point mass at $\s$ with equal probabilities.
Given any equilibrium moment $(\xi,\zeta)$, we define a bivariate function $\Xi:\bar{T}^2 \to \R$ by
\begin{align*}
\Xi(s,t) \coloneqq \begin{cases}
\xi(s,t) &\text{if} \quad s,t \in [0,1], \\
\zeta(s) &\text{if} \quad s \in [0,1] \quad \text{and} \quad t = \s, \\
\zeta(t) &\text{if} \quad s = \s \quad \text{and} \quad t \in [0,1], \\
1 &\text{if} \quad s=t=\s.
\end{cases}
\end{align*}
By the positivity of $(\xi,\zeta)$, we see that $\Xi$ is a positive semidefinite kernel.
Hence, for any bounded function $\phi:T \to \R$, the constructions of $\Xi$ and $\bar{\nu}$ yield
\begin{align*}
0 &\le \int_{\bar{T}} \int_{\bar{T}} \phi(s) \Xi(s,t) \phi(t) \d\bar{\nu}(s) \d\bar{\nu}(t) \\
&= \int_0^1 \int_0^1 \phi(s)\xi(s,t)\phi(t) \d i \d j + 2 \phi(\s) \int_0^1 \phi(t) \zeta(t) \d t + \phi(\s)^2.
\end{align*}
In particular, by letting $\phi(t) = 1$ for all $t \in [0,1]$ and $\phi(\s) = - (\int_0^1 \zeta(t) \d t)$, we obtain
\begin{align} \label{bound2}
\int_0^1 \int_0^1 \xi(s,t) \d s \d t \ge - 2 \phi(\s) \int_0^1 \zeta(t) \d t - \phi(\s)^2 =  \qty(\int_0^1 \zeta(t) \d t)^2.
\end{align}
Therefore, our remaining task is to show that 
\begin{align} \label{bound3}
\int_0^1 \int_0^1 \xi(s,t) \d s \d t \le \qty(\frac{1}{1-r})^2.
\end{align}
To this end, we consider two cases separately.

First, suppose that $r \in [0,1)$.
By integrating the obedience constraint, it follows that
\begin{align*}
\int_0^1 \xi(t,t) \d t &= r \int_0^1\int_0^1 \xi(t,t') \d t \d t' + \int_0^1 \zeta(t) \d t \\
&\le r \int_0^1\int_0^1 \xi(t,t') \d t \d t' + \qty(\int_0^1\int_0^1 \xi(t,t') \d t \d t')^{\frac{1}{2}} \\
&\le r \int_0^1 \xi(t,t) \d t + \qty(\int_0^1 \xi(t,t) \d t)^{\frac{1}{2}},
\end{align*}
where the two inequalities can be confirmed by \eqref{bound1}, \eqref{bound2}, and $r \ge 0$.
Thus, by letting $z = (\int_0^1 \xi(t,t) \d t)^{1/2}$, we have
\begin{align*}
(1-r)z^2 - z \le 0.
\end{align*}
Noticing that $1-r > 0$, we can see that this quadratic inequality is satisfied if and only if $0 \le z \le \frac{1}{1-r}$.
Hence, by the definition of $z$, we have $\int_0^1 \xi(t,t) \d t \le (\frac{1}{1-r})^2$, and particularly, by \eqref{bound1}, we observe that \eqref{bound3} holds true.

Second, suppose that $r < 0$.
Then, by \eqref{bound1}, \eqref{bound2}, and obedience, it follows that
\begin{align*}
\qty(\int_0^1 \zeta(t) \d t)^2 &\le \int_0^1 \xi(t,t) \d t \\
&= r \int_0^1 \int_0^1 \xi(s,t) \d s \d t + \int_0^1 \zeta(t) \d t \\
&\le r \qty(\int_0^1 \zeta(t) \d t)^2 + \int_0^1 \zeta(t) \d t,
\end{align*}
where the last inequality follows from \eqref{bound2} and $r < 0$.
Since $1-r > 0$, by the same reasoning as before, but now by letting $z =  \int_0^1 \zeta(t) \d \nu(t)$, it follows that $\int_0^1 \zeta(t) \d t \le \frac{1}{1-r}$.
Together with this, by using obedience and \eqref{bound1}, it follows that
\begin{align*}
\frac{1}{1-r} \ge \int_0^1 \zeta(t) \d t = \int_0^1 \xi(t,t) \d t -r \int_0^1 \int_0^1 \xi(s,t) \d s \d t
\ge (1-r) \int_0^1 \int_0^1 \xi(s,t) \d s \d t,
\end{align*}
from which \eqref{bound3} is confirmed.
\end{proof}

In what follows, let $(\xi,\zeta)$ be an arbitrarily fixed equilibrium moment.
We show that $V(\xi,\zeta)$ is bounded by $V_{\rm tg}^*$, which is given as in Lemma~\ref{lem_tg}, by discussing each of three possible cases one by one.

\subsubsection*{No Disclosure}

Assuming (T1), we want to show that $V(\xi,\zeta) \le 0$.
By substituting the obedience condition into $V(\xi,\zeta)$, it follows that
\begin{align}
V(\xi,\zeta) &= u \int_0^1 \int_0^1 \xi(s,t) \d s \d t + v \qty(r \int_0^1 \int_0^1 \xi(s,t) \d s \d t + \int_0^1 \zeta(t) \d t )+ w \int_0^1 \zeta(t) \d t \nonumber \\
&= (v+w) \int_0^1 \zeta(t) \d t + (r(v+w)-(rw-u)) \int_0^1 \int_0^1 \xi(s,t) \d s \d t \nonumber \\
&= \alpha \int_0^1 \zeta(t) \d t + (r \alpha - \beta) \int_0^1 \int_0^1 \xi(s,t) \d s \d t, \label{val1}
\end{align}
where the last line is due to the definitions of $\alpha$ and $\beta$.
Moreover, slightly arranging this expression, we get
\begin{align}
V(\xi,\zeta)
&= (\alpha-\beta) \underbrace{\int_0^1 \int_0^1 \xi(s,t) \d s \d t}_{\text{(I-a)}} + \alpha \underbrace{\int_0^1 \qty(\zeta(s) - (1-r) \int_0^1 \xi(s,t) \d t) \d s}_{\text{(I-b)}}. \label{val2}
\end{align}
By (T1), we know that $(\alpha - \beta) \le 0$ and $\alpha \le 0$.
In addition, (I-a) is non-negative since $\xi$ is positive semidefinite.
Furthermore, by obedience, we can express (I-b) as
\begin{align*}
\text{(I-b)} = \int_0^1 \xi(t,t) \d t - \int_0^1 \int_0^1 \xi(s,t) \d s \d t,
\end{align*}
which is non-negative by Lemma~\ref{lem_bound}.
Therefore, combining these observations, we conclude that $V(f,g) \le 0$.

\subsubsection*{Partial Disclosure}

Assuming (T2), we want to show that $V(\xi,\zeta) \le \frac{\alpha^2}{4(\beta-r\alpha)}$.
Let $\kappa$ be a positive semidefinite kernel defined as in Lemma~\ref{lem_h}.
Then, we can arrange \eqref{val1} as
\begin{align*}
V(\xi,\zeta) = \alpha \int_0^1 \zeta(t) \d t - (\beta - r\alpha) \qty(\int_0^1 \zeta(t) \d t)^2 - (\beta-r\alpha) \underbrace{\int_0^1 \int_0^1 \kappa(s,t) \d s \d t}_{\rm (II)},
\end{align*}
Since $\frac{1+r}{2} > r$ by $r < 1$, and since $\alpha > 0$ by (T2), we have $(\frac{1+r}{2}) \cdot \alpha > r \alpha$.
Moreover, since $\beta > (\frac{1+r}{2}) \cdot \alpha$ by (T2), it follows that 
$(\beta - r \alpha)$ is strictly positive, whereas (II) is non-negative since $h$ is positive semidefinite.
Hence, by ignoring the last term, we observe that $V(\xi,\zeta)$ is bounded from above by
\begin{align*}
\max_{z \in \R} \quad \alpha z - (\beta - r\alpha) z^2,
\end{align*}
which attains the maximum of $\frac{\alpha^2}{4(\beta-r\alpha)}$ at $z^* = \frac{\alpha}{2(\beta - r \alpha)}$, as desired.

\subsubsection*{Full Disclosure}

Assuming (T3), we want to show that $V(\xi,\zeta) \le \frac{\alpha-\beta}{(1-r)^2}$.
Let us consider two subcases.

First, suppose that $\alpha \ge 0$.
Continuing from the case of no disclosure, by substituting (I-b) into \eqref{val2}, it follows that
\begin{align} \label{val3}
V(\xi,\zeta) = \alpha \int_0^1 \xi(t,t) \d t - \beta \int_0^1 \int_0^1 \xi(s,t) \d s \d t.
\end{align}
Again, by letting $\kappa$ be as in Lemma~\ref{lem_h}, we can arrange \eqref{val3} as
\begin{align*}
V(\xi,\zeta) = (\alpha - \beta) \qty(\int_0^1 \zeta(t) \d t)^2 - \underbrace{\qty(\alpha \int_0^1 \kappa(t,t) \d t - \beta \int_0^1 \int_0^1 \kappa(s,t) \d s \d t)}_{\text{(III-a)}},
\end{align*}
whereas we have
\begin{align*}
\text{(III-a)} = \alpha \qty(\int_0^1 \kappa(t,t) \d t - \int_0^1 \int_0^1 \kappa(s,t) \d s \d t) + (\alpha - \beta) \int_0^1 \int_0^1 \kappa(s,t) \d s \d t.
\end{align*}
Since $\alpha \ge 0$ by assumption, and since $\alpha \ge \beta$ by (T3), we see that (III) is non-negative, as $\kappa$ is a positive semidefinite kernel.
Therefore, by this and Lemma~\ref{lem_bound}, we conclude that
\begin{align*}
V(\xi,\zeta) \le (\alpha - \beta) \qty(\int_0^1 \zeta(t) \d t)^2 \le \frac{\alpha-\beta}{(1-r)^2},
\end{align*}
as desired.

Second, suppose that $\alpha < 0$.
By arranging \eqref{val3}, it holds that
\begin{align*}
V(\xi,\zeta) &= (\alpha - \beta) \int_0^1 \int_0^1 \xi(s,t) \d s \d t + \alpha \underbrace{\qty(\int_0^1 \xi(t,t) \d t - \int_0^1 \int_0^1 \xi(s,t) \d s \d t)}_{\text{(III-b)}},
\end{align*}
where (III-b) is non-negative since $\xi$ is a positive semidefinite kernel.
Moreover, since $\alpha < 0$ by assumption, it follows that
\begin{align*}
V(\xi,\zeta) \le (\alpha-\beta) \int_0^1 \int_0^1 \xi(s,t) \d s \d t 
\le \frac{\alpha-\beta}{(1-r)^2},
\end{align*}
where the last inequality is due to Lemma~\ref{lem_bound}.
\hfill {\it Q.E.D.}


\subsection*{Proof of Proposition \ref{prop_sym}}


By adopting Proposition 1 of \cite{bm2013}, we can show that any $(\bar{\xi}_1,\bar{\xi}_2,\bar{\zeta}) \in \R^3$ satisfies obedience and positivity if and only if
\begin{align*}
\bar{\xi}_1 = r \bar{\xi}_2 + \bar{\zeta} \quad \text{and} \quad \bar{\xi}_1 \ge \bar{\xi}_2 \ge \bar{\zeta}^2.
\end{align*}
Then, given any $m \in [0,1]$, we can readily see that these conditions are satisfied by $(\bar{\xi}_1,\bar{\xi}_2,\bar{\zeta})$ in the proposition, so it is a symmetric equilibrium moment.
Moreover, by substituting $(\bar{\xi}_1,\bar{\xi}_2,\bar{\zeta})$ into the designer's objective function, we observe that
\begin{align*}
V(\bar{\xi}_1,\bar{\xi}_2,\bar{\zeta})
&= v \bar{\xi}_1 + u \bar{\xi}_2 + w \bar{\zeta}
= \frac{(v+w)m - (u-rw)m^2}{(1-rm)^2},
\end{align*}
which coincides with $V_{\rm tg}(m)$, calculated as in \eqref{prob_tg}.

In the remainder, assume that $\theta \sim \calN(0,1)$.
Let $\{x(t)\}_{t \in T}$ be the signal structure, presented as in the proposition.
Since $\E \qty[x(t)] = 0$, and since $\epsilon(t)$ are independent from $\theta$ and each other, we observe that
\begin{gather}
\Cov \qty[x(t),\theta] = \frac{m}{1-rm} = \bar{\zeta}, \label{moment_sym1} \\
\Cov \qty[x(t),x(t')] = \qty(\frac{m}{1-rm})^2 = \bar{\xi}_2. \label{moment_sym2}
\end{gather}
Moreover, since $\Var \qty[\theta] = \Var\qty[\epsilon(t)] = 1$, we have
\begin{align} \label{moment_sym3}
\Var\qty[x(t)] = \qty(\frac{m}{1-rm})^2 + \qty(\frac{\sqrt{m(1-m)}}{1-rm})^2
= \frac{m}{(1-rm)^2} = \bar{\xi}_1.
\end{align}
In addition, by Proposition~\ref{prop_var}, notice that $\{x(t')\}_{t' \in [0,1]}$ are Pettis integrated to $\int_0^1 x(t') \d \d t' = \frac{m \theta}{1-rm}$.
Hence, by \eqref{moment_sym1} and \eqref{moment_sym2}, the conditional Gaussian formula implies
\begin{align*}
&r \E \qty[\int_0^1 x(t') \d t' \mid x(t)] + \E \qty[\theta \mid x(t)]
= \qty(\frac{rm}{1-rm} + 1) \cdot \E \qty[\theta \mid x(t)] \\
&\qquad = \frac{1}{1-rm} \cdot \frac{\Cov \qty[x(t),\theta]}{\Var\qty[x(t)]} \cdot x(t)
= \frac{1}{1-rm} \cdot \frac{m}{1-rm} \cdot \frac{(1-rm)^2}{m} \cdot x(t)
= x(t),
\end{align*}
which shows that $\{x(t)\}_{t \in [0,1]}$ constitutes an equilibrium under the signal structure $\{x(t)\}_{t \in [0,1]}$.
In particular, from \eqref{moment_sym1}, \eqref{moment_sym2}, and \eqref{moment_sym3}, we confirm that the second moment of this equilibrium is characterized by $(\bar{\xi}_1,\bar{\xi}_2,\bar{\zeta})$.
\hfill {\it Q.E.D.}

\bibliography{reference}

\end{document}